\def\BibTeX{{\rm B\kern-.05em{\sc i\kern-.025em b}\kern-.08em
    T\kern-.1667em\lower.7ex\hbox{E}\kern-.125emX}}
\newtheorem{definition}{\bf Definition}
\newtheorem{example}{\bf Example}
\newtheorem{theorem}{\bf Theorem}
\newtheorem{lemma}{\bf Lemma}
\newtheorem{remark}{\bf Remark}
\newtheorem{proposition}{\bf Proposition}
\newtheorem{corollary}{\bf Corollary}
\DeclareMathOperator{\N}{\mathbb{N}}
\DeclareMathOperator{\supN}{\rm supN}
\DeclareMathOperator{\supCN}{\rm supCN}
\DeclareMathOperator{\supCNP}{\rm supCNP}
\DeclareMathOperator{\supCND}{\rm supCND}
\DeclareMathOperator{\supCNN_p}{\rm supCNN_p}
\DeclareMathOperator{\supP}{\rm supP}
\DeclareMathOperator{\supNN_p}{\rm supNN_p}
\newcommand*{\stopnumbering}{%
  \let\olditem\item
  \renewcommand{\item}[1][]{\olditem[]}%
  \let\State\Statex}
\newcommand*{\resumenumbering}{%
  \let\item\olditem
  \let\State\oldState}
\def\highlight#1{\textcolor{blue}{#1}}
\begin{document}

\title{Active Prognosis and Diagnosis of Modular Discrete-Event Systems}

\author{Shaopeng Hu\textsuperscript{1}, Shaowen Miao\textsuperscript{2}, Jan Komenda\textsuperscript{3}, and Zhiwu Li\textsuperscript{1,4}
\thanks{This research is supported by the National R\&D Program of China under Grant No. 2018YFB1700104, by the Science Technology Development Fund, MSAR under Grant No. 0029/2023/RIA1, and by the Czech Academy of Sciences under RVO 67985840. \textit{(Corresponding author: Zhiwu Li.)}}
\thanks{\textsuperscript{1} Shaopeng Hu is with the School of Electro-Mechanical Engineering, Xidian University, Xi'an 710071, China (e-mail: \textit{sphu@stu.xidian.edu.cn}).}
\thanks{\textsuperscript{2} Shaowen Miao is with Robotics and Autonomous Systems Thrust, The Hong Kong University of Science and Technology (Guangzhou), 511453, China (e-mail: \textit{smiao585@connect.hkust-gz.edu.cn}).}
\thanks{\textsuperscript{3}Jan Komenda is with the Institute of Mathematics of the Czech Academy of Sciences, 115 67 Prague, Czechia (e-mail: \textit{komenda@ipm.cz}).}
\thanks{\textsuperscript{4}Zhiwu Li is with the School of Electro-Mechanical Engineering, Xidian University, Xi'an 710071, China, and also with the Institute of Systems Engineering, Macau University of Science and Technology, Taipa, Macau (e-mail: \textit{zwli@must.edu.mo}).} 
}
% \markboth{Preprint submitted to {\sl IEEE Control Systems Letters}}
% {Preprint submitted to {\sl IEEE Control Systems Letters}}

\maketitle

\begin{abstract}
    This paper addresses the verification and enforcement of prognosability and diagnosability for discrete-event systems (DESs) modeled by deterministic finite automata. We establish the equivalence between prognosability (respectively, diagnosability) and pre-normality over a subset of the non-faulty language (respectively, a suffix of the faulty language). We then demonstrate the existence of supremal prognosable (respectively, diagnosable) and normal sublanguages. Furthermore, an algorithm is then designed to compute the supremal controllable, normal, and prognosable (respectively, diagnosable) sublanguages.
    Since DESs are typically composed of multiple components operating in parallel, pure local supervisors are generally insufficient, as prognosability and diagnosability are global properties of a system. Given the limited work on enforcing prognosability or diagnosability in modular DESs, where these properties are enforced through local supervisors, this paper leverages a refined version of pre-normality to compute modular supervisors for local subsystems. The resulting closed-loop system is shown to be globally controllable, normal, and prognosable/diagnosable. Examples are provided to illustrate the proposed method. 
\end{abstract}

\begin{IEEEkeywords}
    Discrete-event system, modular active prognosis/diagnosis, supervisory control, supremal prognosability/diagnosability.
\end{IEEEkeywords}

\section{Introduction}
    Fault diagnosis and diagnosability enforcement of discrete-event systems (DESs) have attracted attention from both researchers and practitioners over the past decades. These methods have found applications in domains such as manufacturing systems~\cite{sampath1995diagnosability}, transportation and communication networks~\cite{CEPpaper}, and smart grids~\cite{smartgrid}. In such systems, faults are not directly observable due to their inherent nature or limitations in sensor deployment. If not identified and addressed within a reasonable timeframe, undetected faults can result in severe consequences.

    The objective of \emph{fault diagnosis} is to determine whether a fault has occurred by analyzing the observable outputs of a plant~\cite{sampath1995diagnosability}. A system is \emph{diagnosable} if every fault can be detected after a finite number of observable events following its occurrence. When a plant is not diagnosable, it must be modified before deployment---a process known as \emph{diagnosability enforcement}. When enforcement is achieved through supervisory control that actively alters system behavior to facilitate fault detection, the process is referred to as \emph{active diagnosis}~\cite{Sampath1998activedia}.

    Fault diagnosis and diagnosability verification have been approached using integer linear programming for Petri nets~\cite{Dotoli2009ilp} and automata-theoretic methods for finite automata~\cite{sampath1995diagnosability, Lin1994diagnosability, Jiang2001verifier, Yoo2002verifier, miao2025decentralized}. The \emph{diagnoser} was introduced to derive necessary and sufficient conditions for diagnosability~\cite{sampath1995diagnosability, Lin1994diagnosability}, but its construction incurs exponential complexity in the plant's state space. To address this, the \emph{verifier} was proposed as a polynomial-time alternative~\cite{Jiang2001verifier, Yoo2002verifier}.

    An extension of diagnosis, known as \emph{fault prognosis} or \emph{predictability}, aims to anticipate faults before they occur~\cite{Gencpro12006, Basilepro22009}. A system is \emph{prognosable} if every fault can be predicted ahead of time based on observations. A quantified variant, called $k$-prognosis, seeks to determine whether a fault will occur within the next $k$ observable steps~\cite{cassez2013predictability, Chouchane2024kpro}, with larger values of $k$ generally preferred.

    Prognosability verification has been extensively studied. Genc and Lafortune~\cite{Genc09pro} propose a polynomial-time verifier-based approach for centralized systems. J\'erom et al.~\cite{jer08pro} introduce diagnoser-based methods for verifying fault patterns in finite transition systems. Extensions to stochastic automata and Petri nets have also been explored~\cite{sdes3pro, LefebvrestoPNpro}, as well as the approaches tailored to timed systems, where the remaining time to fault occurrence replaces the number of observable steps~\cite{cassez2013predictability}.
    Recent developments have expanded prognosability to unbounded Petri nets~\cite{yin2018unboundedpro}, decentralized frameworks~\cite{decpro1, decpro2}, and distributed settings~\cite{dispro}. %, and modular DESs~\cite{Yin2017modularpro}.

    Supervisory control offers a formal framework for enforcing specifications by disabling controllable events based on observable behavior. Active diagnosis in deadlock-free systems has been investigated~\cite{Paoli2005supervisor, yin2015uniform} and extended to systems with deadlocks~\cite{hu2025iot,Hu2020supervisor, Hu2023dt}. Furthermore, active prognosis is introduced and proven to be EXPTIME-complete via a game-theoretic approach in \cite{haar2020active}. 

    An alternative enforcement strategy involves modifying the observation structure. For example, Markov decision theory has been applied to select cost-effective observations for diagnosability~\cite{Debouk2002sensor}. In stochastic DESs, sensors can be dynamically enabled or disabled~\cite{Thorsley2007sensor}. Another approach uses event relabeling, where a relabeling function for diagnosability enforcement is designed by solving an integer linear programming~\cite{Cabasino2013sensor, Ran2019sensor, Hu2024sensor}. Diagnosability enforcement via event relabeling has also been extended to various models, including timed systems~\cite{velasquez2022active, miao2025active}, unbounded Petri nets~\cite{Hu2024unb}, and attack-prone DESs~\cite{hu2025smc}.

    Unlike monolithic systems, modular DESs consist of interacting components operating concurrently. Diagnosability in such systems, termed \emph{modular diagnosability}, has been studied in several works~\cite{Schmidt2013modulardia, Masopust2019modulardia, Basilio2023modulardia, Contant2006modulardia}. A key challenge lies in synthesizing local supervisors such that the overall behavior of the controlled modules mimics that of a nonblocking, maximally permissive centralized supervisor. In modular DESs, diagnosability verification and enforcement become more complex. Local supervisors must not only control behavior but also detect faults in a way that ensures diagnosability at the system level, without deviating from the globally desired behavior.

    %A property stronger than diagnosability, known as \emph{critical observability}, determines whether a string belongs to a designated language based solely on its observation~\cite{pola2017design}. Originally introduced for finite automata, critical observability has since been extended to Petri nets~\cite{masopust2019critical}. Enforcement methods for critical observability in bounded Petri nets have also been developed~\cite{cong2023critical}.
    
    In this paper, we characterize prognosability/diagnosability in terms of extensions of the faulty language. To this end, we establish a connection between prognosability (resp. diagnosability) and (pre-)normality, showing how both properties can be interpreted within a unified framework based on the observability of marked behaviors in partially observable systems. Based on these two connections, we design a maximally permissive supervisor to enforce prognosability/diagnosability via (pre-)normality by computing the supremal normal and pre-normal sublanguage.

    Although prognosability and diagnosability can be verified in polynomial time in the number of states of the plant using verifier~\cite{Yoo2002verifier, Jiang2001verifier, Ranproenforcement2022}, the exponential growth of state space in modular systems renders such methods impractical. Furthermore, active diagnosis and prognosis are EXPTIME-complete~\cite{bertrand2014active}, and hence no polynomial-time algorithms are expected. While modular diagnosability verification has been studied~\cite{Schmidt2013modulardia, Masopust2019modulardia, Basilio2023modulardia, Contant2006modulardia}, no prior work has addressed modular active diagnosis or prognosis.
    
    In this paper, we focus on the verification and enforcement of prognosability (resp. diagnosability) in (modular) DESs.
    The main contributions of this paper are as follows:
    \begin{enumerate}
        \item We discuss the properties of prognosability and diagnosability, and provide a novel characterization of prognosability and diagnosability in terms of normality or pre-normality.

        \item We prove the existence of the supremal prognosable/diagnosable and normal sublanguage, and develop an algorithm to compute the supremal controllable, normal, and prognosable/diagnosable sublanguage in monolithic DESs.

        \item This paper provides sufficient conditions to enforce modular prognosability/diagnosability directly from local models, thus avoiding the explicit construction of the global plant.
    \end{enumerate}

    Section~\ref{Sec: Preliminary} reviews preliminary concepts from automata and supervisory control theory. Section~\ref{Sec: Diagnosability} introduces prognosability and diagnosability and establishes their connection to pre-normality. The existence of supremal prognosable/diagnosable and normal sublanguages is shown in Section~\ref{Sec: Existence}. Section~\ref{Sec: Active diagnosis} presents the supervisor synthesis method for the supremal controllable, normal, and prognosable/diagnosable sublanguages. Section~\ref{Section:modularenforcement} extends these results to modular DESs. Section~\ref{Sec: Conclusion} concludes the paper and outlines the directions for future work.

\section{Preliminaries and Concepts}\label{Sec: Preliminary}
    In this section, we overview definitions and results from (modular) supervisory control of deterministic finite automata~\cite{cassandras2021introduction,wonham2019supervisory,komenda2023modular} and discuss the properties of diagnosability and ($k$-step) prognosability. %the recent results on modular supervisory theory for the subsequent development~\cite{komenda2023modular}. 

\subsection{Strings, languages, and automata}
    The cardinality of a set $A$ is denoted by $|A|$. An alphabet, $\Sigma$, is a finite nonempty set of events. The set of finite strings over $\Sigma$ is denoted by $\Sigma^*$, including the empty string denoted by $\varepsilon$. The length of a string $s\in \Sigma^*$ is denoted by $|s|$. The set of prefixes of $s\in \Sigma^*$ is denoted by $\overline{s}=\{ s'\in \Sigma^* \mid \exists t\in \Sigma^*: s=s't\}$.

    A language is a subset of $\Sigma^*$. The set of prefixes of a language $L$ is denoted by $\overline{L} = \bigcup_{s\in L} \overline{s}$. A language $L$ is \emph{prefix-closed} if $L = \overline{L}$. A language $K \subseteq L$ is \emph{extension-closed} w.r.t. $L$ if $K \Sigma^* \cap L \subseteq K$. Given a sublanguage $K\subseteq L$, it holds that $K$ is extension-closed, i.e., $K \Sigma^* \cap L \subseteq K$ if and only if $L\setminus K$ is prefix-closed, i.e., $L\setminus K=\overline{L\setminus K}$.
    % $K \Sigma^* \cap L \subseteq K$.

    The \emph{left quotient} of a language $L$ w.r.t. a language $L'$ is defined as $L'\backslash L = \{t\in\Sigma^* \mid \exists s\in L': st\in L\}$. Analogously, the \emph{right quotient} of $L$ w.r.t. $L'$ is $L/ L' = \{t\in\Sigma^* \mid \exists s\in L':ts\in L\}$.
    %$L'\backslash L = \{t\in\Sigma^* \mid st\in L \text{ for some } s\in L'\}$ $L/ L' = \{t\in\Sigma^* \mid ts\in L \text{ for some } s\in L'\}$.

    A \textit{projection} $P: \Sigma^* \to \Sigma_o^*$ for $\Sigma_o \subseteq \Sigma$ is a morphism defined by $P(\sigma) = \sigma$ for $\sigma \in \Sigma_o$ and $P(\sigma) = \varepsilon$ for $\sigma \in \Sigma \setminus \Sigma_o$. It removes the events that are not in the event set $\Sigma_o$. The inverse projection $P^{-1}: \Sigma_o \to 2^{\Sigma^*}$ is defined by $P^{-1}(t) = \{s \in \Sigma^* \mid P(s) = t \}$. These definitions can be readily extended to languages.
    We denote $\Sigma_o^{\leq N}=\{t\in \Sigma_o^*\mid |P(t)|\leq N\}$ and $\Sigma_o^{\geq N}=\{t\in \Sigma_o^*\mid |P(t)|\geq N\}$.

    A \emph{deterministic finite automaton} (DFA) is a quintuple $G = (Q, \Sigma, \delta, q_0, Q_m)$, where $Q$ is a finite set of states, $\Sigma$ is an alphabet, $\delta: Q \times \Sigma \to Q$ is a transition function that can be extended to $\delta: Q \times \Sigma^* \to Q$ in a usual way, $q_0 \in Q$ is the initial state, and $Q_m\subseteq Q$ is the set of marked states. We write $G = (Q, \Sigma, \delta, q_0)$ if the set of marked states $Q_m$ is irrelevant. The \textit{generated} and \textit{marked languages} of $G$ are $L(G) = \{ s \in \Sigma^* \mid \delta(q_0, s) \in Q \}$ and $L_m(G) = \{ s \in \Sigma^* \mid \delta(q_0, s) \in Q_m \}$, respectively. We have  $L(G) = \overline{L(G)}$ and $L_m(G)\subseteq L(G)$. 
    
    A DFA $G$ is \emph{nonblocking} if $L(G)=\overline{L_m(G)}$, it is \emph{live} if for every state $q\in Q$, there is an event $\sigma\in\Sigma$ such that $\delta(q,\sigma)$ is defined, and it is  {\em convergent} if it does not contain cycles of unobservable events. In what follows, the term language refers to a language marked by a DFA.

% \textcolor{blue}{Given the set of observable events $\Sigma_o\subseteq \Sigma$, $G$ is {\em convergent} if there does not exist a non-empty sequence $s\in (\Sigma\setminus\Sigma_o)^*$ such that $\delta(q,s)=q$ for some $q\in Q$.}

    The observer of $G = (Q, \Sigma, \delta, q_0, Q_m)$, denoted by $Obs(G)$, is the accessible part of the DFA obtained by the standard subset construction from the automaton created from $G$ by replacing every unobservable event with $\varepsilon$~\cite{cassandras2021introduction}. 
    
   %Let $G = (Q, \Sigma, \delta, q_0)$ be a DFA and $\Sigma_o\subseteq \Sigma$ be the set of observable events. We define a verifier (modified parallel composition) of two copies of $G$, denoted by $G|||G$, as the classical parallel composition where observable events behave as shared events and unobservable events as private events. Formally $G|||G$ is the accessible part of DFA $(V, \Sigma_v, \delta_v, V_0)$, where $V\subseteq Q\times Q$ is the set of states, $\Sigma_v\subseteq (\Sigma\cup\{\varepsilon\} \setminus\Sigma_f)\times (\Sigma\cup\{\varepsilon\})$ is the event set,, $V_0=(q_0,q_0)$ is the initial state, and nd $\delta_{v}\subseteq V\times \Sigma_v\times V$ is the transition function such that $\delta_{v}((q,q'),(\alpha,\alpha'))=(\delta(q,\alpha),\delta(q',\alpha'))$ if $q,q'\in Q$ and $\alpha=\alpha'\in \Sigma_o$; $\delta_{v}((q,q'),$ $(\alpha,\varepsilon))=(\delta(q,\alpha), q')$ if $q,q'\in Q$ and $\alpha\in \Sigma_{uo}\setminus \Sigma_f$; $\delta_{v}((q,q'),$ $(\varepsilon,\alpha'))=(q, \delta(q',\alpha'))$ if $q,q'\in Q$ and $\alpha'\in \Sigma_{uo}$.

\subsection{Basic concepts of supervisory control}
    Given a DFA $G$ over $\Sigma$, the alphabet $\Sigma$ is partitioned into \emph{observable events} $\Sigma_o$ and \emph{unobservable events} $\Sigma_{uo}$, and into \textit{controllable events} $\Sigma_c$ and \textit{uncontrollable events} $\Sigma_{uc}$. 
    The set of \emph{control patterns} is defined by $\Gamma = \{ \gamma \subseteq \Sigma \mid \Sigma_{uc} \subseteq \gamma \}$. The \textit{supervisor} of $G$ is a map $S: P(L(G)) \to \Gamma$. The behavior of the \textit{closed-loop system}, denoted by $L(S/G)$, is defined by $\varepsilon \in L(S/G)$, and iff $s \in L(S/G)$, $s \sigma \in L(G)$, and $\sigma \in S(P(s))$, then $s \sigma \in L(S/G)$. Intuitively, observing $P(s)$, the supervisor disables the controllable events from $\Sigma_c \setminus S(P(s))$. 
%The  marked language of $S/G$ is defined by $L_m(S/G) = L(S/G) \cap L_m(G)$. A supervisor $S$ is nonblocking if the corresponding closed-loop system is nonblocking, i.e., $\overline{L_m(S/G)} = L(S/G)$.

    Usually, it is impossible to attain any given language as the behavior of a closed-loop system. However, controllable and observable languages can be realized~\cite{WonhamLin1988}.
    A language $M \subseteq L(G)$ is \textit{controllable} w.r.t. $L(G)$ and $\Sigma_{uc}$ if $\overline{M}\Sigma_{uc} \cap L(G) \subseteq \overline{M}$, and it is \textit{observable} w.r.t. $L(G)$, $P: \Sigma^* \to \Sigma^*_o$, and $\Sigma_c$ if, for every $s \in \overline{M}$ and every $\sigma \in \Sigma_c$, $s \sigma \notin \overline{M}$ and $s \sigma \in L(G)$ imply $P^{-1}[P(s)]\sigma \cap \overline{M} = \emptyset$. 
    Since observability, unlike controllability, is not preserved under language unions, a stronger notion of normality was introduced \cite{wonham2019supervisory}. A language $M\subseteq L(G)$ is \textit{normal} w.r.t. $L(G)$ and $P$ if $\overline{M} = P^{-1}[P(\overline{M})]\cap L(G)$. Normality is a property of prefix-closure of a language, while 
    in \cite{wonham2019supervisory}, pre-normality is introduced as a property of the language itself. Recall that  $M\subseteq L(G)$ is \textit{pre-normal} w.r.t. $L(G)$ and $P$ if $M = P^{-1}[P(M)]\cap L(G)$. Note that pre-normality of $M\subseteq L(G)$ is equivalent to normality of $M$ if $M$ is prefix-closed. 
    
    The supremal normal sublanguage of $M$ w.r.t. $L(G)$ and $P$, denoted by $\supN(M,L(G),P)$, is equal to the union of all sublanguages of $M$ that are normal w.r.t. $L(G)$ and $P$. Similarly, we denote by $\supCN(M,L(G),\Sigma_{uc}, P)$ the supremal controllable and normal sublanguage of $M$ w.r.t. $L(G)$, $\Sigma_{uc}$, and $P$.
    Moreover, pre-normality has symmetry, i.e.,
$M\subseteq L(G)$ is pre-normal w.r.t. $L(G)$ and $P$ if and only if $L(G)\setminus M $ is pre-normal w.r.t. $L(G)$ and $P$.

\subsection{Modular supervisory control}
    Most systems are modeled as a synchronous product of several subsystems. 
    The synchronous product of languages $L_i$ over  $\Sigma_i$ is the language $\|^l_{i=1} L_i = \bigcap^l_{i=1} P^{-1}_i(L_i)$ over $\Sigma = \bigcup^l_{i=1} \Sigma_i$, where $P_i: \Sigma^* \to \Sigma^*_i$ is the projection to local alphabet $\Sigma_i$ for $i=1,\dots,l$. 
    For DFAs $G_i$ over $\Sigma_i$, there is a DFA $G = \ \parallel^l_{i=1} \! G_i$ over $\Sigma$ satisfying $L(\parallel^l_{i=1} \! G_i) = \ \parallel^l_{i=1} \! L(G_i)$. 
%and $L_m(\parallel^l_{i=1} \! G_i) = \ \parallel^l_{i=1} \! L_m(G_i)$
    The languages $L_i$ are \textit{nonconflicting} if $\overline{\parallel^l_{i=1} \! L_i} = \ \parallel^l_{i=1} \! \overline{L_i}$.

    Given DFAs $G_i$, $i=1,2, \dots,l$, generating languages $L_i = L(G_i)$ with the global behavior $L = \|^l_{i=1} L_i$, and a specification $M \subseteq L$,
    the objective of the modular control problem is to synthesize local supervisors $S_i$ such that $\parallel^l_{i=1} \! L(S_i / G_i) = L(S/ \! \parallel^l_{i=1} \! G_i)$, where $S$ is a supervisor of the specification $M$ and the global plant language $L$~\cite{komenda2023modular}.

    Let $P_i: \Sigma^* \to \Sigma^*_i$ denote the projections to modules, and let the corresponding observations and projections of modules be $P^i_{i,o}: \Sigma^*_i \to (\Sigma_i \cap \Sigma_o)^*$ and $P^o_{i,o}: \Sigma^*_o \to (\Sigma_i \cap \Sigma_o)^*$, see Fig.~\ref{Fig: Notation}.
    The local observable events are denoted by $\Sigma_{i,o} = \Sigma_i \cap \Sigma_o$. 
    We assume that the events observable in one component are observable in all components where they appear, i.e., $\Sigma_{i,o} \cap \Sigma_j = \Sigma_i \cap \Sigma_{j,o} = \Sigma_{i,o} \cap \Sigma_{j,o}$. For a modular system $G = \ \parallel^l_{i=1} \! G_i$, with $G_i$ over $\Sigma_i$, the set of shared events is defined by $\Sigma_s = \bigcup_{i \not= j} (\Sigma_i \cap \Sigma_j).$
    %The following concept plays a key role in modular control.
    %\begin{definition}[MOC \cite{komenda2023modular}]
    %    A prefix-closed language $L \subseteq \Sigma^*$ is \emph{modified observation consistent} (MOC) w.r.t. projections $P_i$, $P$, and $P^i_{i,o}$ if, for every string $s \in L$ and $t' \in P_i(L)$ with $P^i_{i,o}\bigl( P_i(s) \bigr) = P^i_{i,o}(t')$, there exists a string $s' \in L$ such that $P(s) = P(s')$ and $P_i(s') = t'$.  \hfill$\diamond$
    %\end{definition}

    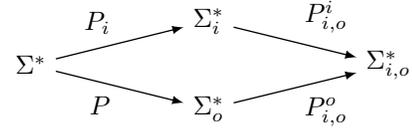
\begin{figure}
        \centering
        \begin{tikzpicture}[%
    auto,
    >=stealth',
    node distance=2.5cm
]

% 节点（菱形布局）
\node (A) {$\Sigma^*$};
\node (B) [above right=0.05cm and 1.7cm of A] {$\Sigma_i^*$};
\node (C) [below right=0.05cm and 1.7cm of A] {$\Sigma_o^*$};
\node (D) [right=4cm of A] {$\Sigma_{i,o}^*$};

% 箭头
\draw[-latex,line width=0.5pt] (A) -- node[above left] {$P_i$} (B);
\draw[-latex,line width=0.5pt] (A) -- node[below left] {$P$} (C);
\draw[-latex,line width=0.5pt] (B) -- node[above right] {$P^i_{i,o}$} (D);
\draw[-latex,line width=0.5pt] (C) -- node[below right] {$P^o_{i,o}$} (D);

\end{tikzpicture}
        \caption{Projection notations in this paper.}\label{Fig: Notation}
    \end{figure}

    % Intuitively, for every string $t'$ of the local plant that is locally identical to the global string $s$, there exists a locally equivalent global string $s'$ that is identical to $s$, as illustrated in Fig.~\ref{Fig: MOC}.

    % \begin{figure}
    %     \centering
    %     \includegraphics[scale=0.90]{Fig/MOC.pdf}
    %     \caption{Illustration of MOC.}\label{Fig: MOC}
    % \end{figure}
    
    %Checking MOC is a PSPACE-hard problem~\cite{komenda2023hierarchical}, and its decidability is open. However, if all shared events of a modular DES are observable, then the plant is MOC~\cite{komenda2023modular}. For a modular system $G = \ \parallel^l_{i=1} \! G_i$, with $G_i$ over $\Sigma_i$, the set of shared events is defined by $\Sigma_s = \bigcup_{i \not= j} (\Sigma_i \cap \Sigma_j).$
    %\begin{lemma}[Lemma 5 in~\cite{komenda2023modular}]\label{Lemma: Shared_Obs_MOC}
        %Given a modular system $L = \|^l_{i=1} L_i$ of prefix-closed languages $L_i$ over $\Sigma_i$, if $\Sigma_s \subseteq \Sigma_o$, then $L$ is MOC w.r.t. $P_i$, $P$, and $P^i_{i,o}$, for $i=1,\dots,l$. \hfill\QED
    %\end{lemma}

\subsection{Diagnosability, prognosability and $k$-prognosability}
    Given a DFA $G$ over $\Sigma$, we use $\Sigma_f\subseteq (\Sigma\setminus \Sigma_o)$ to denote the set of \emph{fault events}. 
    %the unobservable events of $\Sigma$ are partitioned into regular events $\Sigma_{reg}$ and \emph{fault events} $\Sigma_f$. 
    Moreover, the fault events are divided into several different classes. For the sake of simplicity, we consider only one class of faults, which is not restrictive~\cite{sampath1995diagnosability}.

    The \emph{faulty language of $G$} is $L_f=\Sigma^* \Sigma_f \Sigma^*\cap L(G)$ and the non-faulty language is its complement $L_n=L(G) \setminus L_f$. Let $\Psi (\Sigma_{f}) = L\cap \Sigma^* \Sigma_f$ denote the set of strings that end with a fault event.
    We adopt the common assumption used in the literature on diagnosability and prognosability.

    \begin{quote}
    {\bf (A1)} The considered DFAs are live and convergent.
    \end{quote}
    
    We now recall the definitions of diagnosability \cite{sampath1995diagnosability}, prognosability \cite{Chouchane2024kpro}, and $k$-prognosability \cite{cassez2013predictability}. 
    \begin{definition}[Diagnosability]\label{Def: dia}
        A live and convergent DFA $G$ is \emph{diagnosable} w.r.t. projection $P$ and the set of fault events $\Sigma_f$ if there exists a natural number $n$ such that, for every fault-ending string $s\in \Psi(\Sigma_f)$ and every extension $t\in s\backslash L(G)$ of length at least $n$, every string $w\in P^{-1}P(st)\cap L(G)$ contains a fault. \hfill$\diamond$
    \end{definition}

    \begin{definition}[Prognosability]\label{Def: pro}
        A live and convergent DFA $G$ is \textit{prognosable}  w.r.t. $P$ and $\Sigma_f$ if every $s\in \Psi(\Sigma_f)$ has a prefix $s'\in \overline{s}$ such that, for every $t\in P^{-1}P(s')$ that contains no fault, there is a natural number $n$ for which every extension $t'\in t\backslash L(G)$ of length at least $n$ contains a fault.
        \hfill$\diamond$
    \end{definition}

    \begin{definition}[$k$-prognosability]\label{Def: Kpro}
        Given a natural number $k$, a live and convergent DFA $G$ is \textit{$k$-prognosable} w.r.t. $P$ and $\Sigma_f$ if every fault-ending string $s\in \Psi(\Sigma_f)$ has a prefix $s'\in \overline{s}$ such that $|P(s)|-|P(s')| = k$ and, for every $t\in P^{-1}P(s')$ that contains no fault, there is a natural number $n$ for which every extension $t'\in t\backslash L(G)$ of length at least $n$ contains a fault.
        %$\forall s\in \Psi(\Sigma_f), \exists s'\in \overline{s}: |P(s)|-$ $|P(s')| \geq k, \forall t\in P^{-1}P(s'): t\notin L_f $  $\Rightarrow[\exists n\in \N, \forall t'\in L/t: |t'|\geq n \Rightarrow tt'\in L_f ]$.
        \hfill$\diamond$
    \end{definition}

    Intuitively, $k$-prognosability requires that every fault is predicted at $k$ steps before it occurs. From the security perspective, a larger value of $k$ is preferred to ensure more reliable and proactive fault prediction. It is known that prognosability implies diagnosability~\cite{Genc09pro}. In the following, we discuss the relationship between Definitions \ref{Def: pro} and \ref{Def: Kpro}. 
    
    \begin{lemma} \label{Lemma:k-1pro}
        If $G$ is a live and convergent DFA, $s\in \Psi(\Sigma_f)$ is a fault-ending string, and $s'$ is a prefix of $s$ that satisfies prognosability, then every $s's''$ that is a prefix of $s$ also satisfies prognosability.
    \end{lemma}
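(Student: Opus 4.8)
The plan is to unfold what it means for a prefix of $s$ to ``satisfy prognosability'' and then transport the witness from $s'$ to the longer prefix $s's''$ by factoring observationally equivalent strings at the right point. Recall that $s'\in\overline{s}$ satisfying prognosability means: for every fault-free $t\in P^{-1}P(s')$ there is a natural number $n$ such that every $t'\in t\backslash L(G)$ with $|t'|\ge n$ contains a fault; crucially, the quantifier order $\forall t\,\exists n$ allows $n$ to depend on $t$, which is what makes the transport work without any uniformity argument. So fix $s''$ with $s's''\in\overline{s}$; the goal is to establish the same property with $s's''$ replacing $s'$.

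First I would take an arbitrary fault-free string $u\in P^{-1}P(s's'')$. Since $P$ is a morphism, $P(u)=P(s's'')=P(s')P(s'')$, so $P(s')$ is a prefix of $P(u)$, and hence there is a prefix $u_1$ of $u$ with $P(u_1)=P(s')$ (obtained by scanning $u$ left to right and stopping just after the $|P(s')|$-th observable event, or $u_1=\varepsilon$ if $P(s')=\varepsilon$). Write $u=u_1u_2$. Then $u_1\in P^{-1}P(s')$, and $u_1$ is fault-free because it is a prefix of the fault-free string $u$. Applying the prognosability hypothesis for $s'$ to the string $u_1$ yields a natural number $n$ such that every $r\in u_1\backslash L(G)$ with $|r|\ge n$ contains a fault.

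Next I would set $m:=n$ as the witness associated with $u$ and verify it works. Let $u'\in u\backslash L(G)$ with $|u'|\ge m$. Then $u_1(u_2u')=uu'\in L(G)$, so $u_2u'\in u_1\backslash L(G)$ and $|u_2u'|=|u_2|+|u'|\ge |u'|\ge n$; by the choice of $n$, the string $u_2u'$ contains a fault. Since $u_2$ is a prefix of the fault-free string $u$, it contains no fault, so the fault must lie in $u'$. Thus every sufficiently long extension $u'$ of $u$ contains a fault, and since $u$ was an arbitrary fault-free member of $P^{-1}P(s's'')$, this shows that $s's''$ satisfies prognosability.

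The only slightly delicate step is the factorization $u=u_1u_2$ with $P(u_1)=P(s')$; it rests on the elementary fact that for any string and any prefix of its projection there is a matching prefix of the string, which follows by a one-line induction on word length from $P$ being a morphism. I do not expect any real obstacle: everything else is bookkeeping, and in particular there is no uniformity issue because Definition~\ref{Def: pro} already lets $n$ vary with the observationally equivalent string.
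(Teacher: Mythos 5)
Your proof is correct and follows essentially the same route as the paper's: factor an observationally equivalent non-faulty string at the point matching $P(s')$, invoke the prognosability witness for that prefix, and observe that long extensions of the longer string are (after prepending the fault-free middle segment) long extensions of the shorter one. Your write-up is just a more explicit version of the paper's argument, spelling out the prefix-matching step and the localization of the fault to $u'$.
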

    \begin{proof}
        If $r\in P^{-1}P(s's'')$ is a non-faulty string, then $r$ has a non-faulty prefix $r'\in P^{-1}P(s')$. By the definition of prognosability, there is a natural number $n\in \N$ such that every extension of $r'$ within $L(G)$ of length at least $n$ contains a fault. However, every extension of $r$ within $L(G)$ of length at least $n$ is also an extension of $r'$ of length at least $n$. Thus, $s's''$ satisfies prognosability.
    \end{proof}

    It is worth noting that, different from the concept of $k$-step (observation) prognosability, the study in \cite{cassez2013predictability} introduces $K$ time units prognosability within the framework of timed automata. In \cite{cassez2013predictability}, conclusions similar to Lemma~\ref{Lemma:k-1pro} can be found, although they are not directly comparable to ours. Lemma~\ref{Lemma:k-1pro} has the following two consequences.
    \begin{corollary} \label{Corollary:k-1pro}
        For every natural number $k>0$, if a live and convergent DFA $G$ is $k$-prognosable w.r.t. $P$ and $\Sigma_f$, then it is $(k-1)$-prognosable w.r.t. $P$ and $\Sigma_f$. 
    \end{corollary}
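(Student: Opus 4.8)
The plan is to obtain $(k-1)$-prognosability from $k$-prognosability by shifting the witness prefix one observable step closer to the fault and then invoking Lemma~\ref{Lemma:k-1pro} to certify that the shifted prefix still carries the required prognosis guarantee. The key observation is that the condition attached to the witness prefix $s'$ in Definition~\ref{Def: Kpro} (``for every non-faulty $t\in P^{-1}P(s')$ there is $n\in\N$ such that every extension $t'\in t\backslash L(G)$ with $|t'|\geq n$ contains a fault'') is precisely what Lemma~\ref{Lemma:k-1pro} calls ``$s'$ satisfies prognosability,'' so the lemma applies verbatim.

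First I would fix $k>0$, assume $G$ is $k$-prognosable w.r.t.\ $P$ and $\Sigma_f$, and take an arbitrary fault-ending string $s\in\Psi(\Sigma_f)$. By $k$-prognosability there is a prefix $s'\in\overline{s}$ with $|P(s)|-|P(s')|=k$ such that $s'$ satisfies prognosability in the sense above. Since $|P(s)|-|P(s')|=k\geq 1$, the suffix $v$ with $s=s'v$ contains at least one observable event; writing $v=u\sigma v'$ with $u\in\Sigma_{uo}^*$, $\sigma\in\Sigma_o$, and $v'\in\Sigma^*$, I set $s''=s'u\sigma$. Then $s''\in\overline{s}$, $s'\in\overline{s''}$, and $|P(s'')|=|P(s')|+1$, hence $|P(s)|-|P(s'')|=k-1$.

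It remains to show $s''$ satisfies prognosability. Since $s''$ is a prefix of $s$ that extends $s'$ (i.e.\ $s''=s'u\sigma$ with $s''\in\overline{s}$), Lemma~\ref{Lemma:k-1pro} gives that $s''$ satisfies prognosability, so $s''$ is a valid witness prefix for $(k-1)$-prognosability of $s$. As $s\in\Psi(\Sigma_f)$ was arbitrary, $G$ is $(k-1)$-prognosable w.r.t.\ $P$ and $\Sigma_f$. (When $k=1$ one may equivalently take $s''=s$, which satisfies prognosability by Lemma~\ref{Lemma:k-1pro} as well.)

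I expect the only subtle point to be the bookkeeping on the number of observable events: one must verify that a prefix $s''$ of $s$ with exactly one more observable event than $s'$ genuinely exists, which is exactly where the hypothesis $k\geq 1$ is consumed, and one must confirm that the prognosis clause in Definition~\ref{Def: Kpro} coincides with the hypothesis of Lemma~\ref{Lemma:k-1pro}. Beyond that, the argument is a routine application of the lemma, and no new estimates are needed.
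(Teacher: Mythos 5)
Your proof is correct and takes essentially the same approach as the paper: the paper's proof of Corollary~\ref{Corollary:k-1pro} consists of the single sentence that it ``directly follows from Lemma~\ref{Lemma:k-1pro}.'' Your argument is exactly the detailed instantiation of that claim, supplying the bookkeeping the paper leaves implicit, namely the explicit construction of the prefix $s''=s'u\sigma$ with one additional observable event and the verification that Lemma~\ref{Lemma:k-1pro} certifies it as a witness for $(k-1)$-prognosability.
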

    \begin{proof}
        %Let $s\in \Psi(\Sigma_f)$ be an arbitrary fault-ending string. Then, there is a prefix $s'$ of $s$ with $|P(s)|-|P(s')|=k$ satisfying the definition of $k$-prognosability. Since $k>0$, there is $s''$ such that $s's''$ is a prefix of $s$ and $|P(s)|-|P(s's'')|=k-1$. By Lemma~\ref{Lemma:k-1pro}, $s's''$ satisfies prognosability, and hence $G$ is $(k-1)$-prognosable.
        It directly follows from Lemma~\ref{Lemma:k-1pro}. 
    \end{proof}
    
By Corollary~\ref{Corollary:k-1pro}, $k$-prognosability implies $0$-prognosability. In the following, we establish that $0$-prognosability is in fact equivalent to the conventional notion of prognosability.

    \begin{corollary} \label{Corollary:equiv}
        Let $G$ be a live and convergent DFA. Then, $G$ is prognosable w.r.t. $P$ and $\Sigma_f$ if{f} $G$ is $0$-prognosable w.r.t. $P$ and $\Sigma_f$. 
    \end{corollary}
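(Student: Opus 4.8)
The plan is to prove the two implications separately. The implication ``$0$-prognosable $\Rightarrow$ prognosable'' I would dispose of immediately: reading Definition~\ref{Def: Kpro} with $k=0$, every $s\in\Psi(\Sigma_f)$ has a prefix $s'\in\overline{s}$ with $|P(s)|-|P(s')|=0$ for which every non-faulty $t\in P^{-1}P(s')$ admits the required bound $n$; forgetting the numerical constraint $|P(s)|-|P(s')|=0$ leaves exactly the condition in Definition~\ref{Def: pro}, so $G$ is prognosable. (This is also the $k=0$ instance of the observation recorded just before the corollary.)

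For the converse, ``prognosable $\Rightarrow$ $0$-prognosable'', I would fix an arbitrary fault-ending string $s\in\Psi(\Sigma_f)$ and, using (A1) together with $\Sigma_f\subseteq\Sigma\setminus\Sigma_o$, write $s=u\sigma_f$ with $\sigma_f\in\Sigma_f$ unobservable; then $u\in L(G)$ by prefix-closure and $P(u)=P(s)$, so $|P(s)|-|P(u)|=0$ and $u$ is the natural candidate for the $0$-prognosis witness. What remains is to show that $u$ itself satisfies the prognosability condition attached to a prefix, namely that every non-faulty $t\in P^{-1}P(u)$ admits some $n\in\N$ with every extension $t'\in t\backslash L(G)$ of length at least $n$ containing a fault. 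By prognosability of $G$ applied to $s$, some prefix $s'\in\overline{s}$ satisfies this condition (with respect to $P(s')$). Since $\overline{s}=\overline{u}\cup\{s\}$, either $s'$ is a prefix of $u$ --- in which case $u=s's''$ with $s's''$ a prefix of $s$, so Lemma~\ref{Lemma:k-1pro} transfers the property from $s'$ to $u$ --- or $s'=s$, in which case $P(s')=P(s)=P(u)$ and, since the prognosability condition on a prefix depends on that prefix only through its projection, $u$ inherits it directly. Either way $u$ witnesses $0$-prognosability, and as $s$ was arbitrary, $G$ is $0$-prognosable.

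The only mildly subtle point, and the one I would flag, is the case $s'=s$: a priori the prefix handed over by prognosability might be all of $s$, so Lemma~\ref{Lemma:k-1pro} (which only moves the property to longer prefixes) does not apply, and one instead has to notice that deleting the final, necessarily unobservable, fault event does not change the projection, making the prognosability condition for $u$ literally identical to the one already verified for $s'=s$. Beyond this I do not anticipate any obstacle: the corollary reduces to a short combination of Lemma~\ref{Lemma:k-1pro} with the standing assumptions on fault events.
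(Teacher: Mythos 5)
Your proposal is correct and follows essentially the same route as the paper: the forward direction is read off Definition~\ref{Def: Kpro} by discarding the constraint $|P(s)|-|P(s')|=0$, and the converse is obtained by pushing the prognosability witness to a longer prefix via Lemma~\ref{Lemma:k-1pro}. The only difference is cosmetic: your detour through $u$ with $s=u\sigma_f$ (and the resulting case split on $s'=s$) is avoidable, since Lemma~\ref{Lemma:k-1pro} already lets you take $s$ itself as the $0$-prognosability witness, $s$ being a prefix of $s$ with $|P(s)|-|P(s)|=0$.
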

    \begin{proof}
        If $G$ is $k$-prognosable w.r.t. $P$ and $\Sigma_f$, then it is prognosable w.r.t. $P$ and $\Sigma_f$ by definition. On the other hand, if $G$ is prognosable w.r.t. $P$ and $\Sigma_f$, then it is $0$-prognosable w.r.t. $P$ and $\Sigma_f$ by Lemma~\ref{Lemma:k-1pro}.
    \end{proof}

    %\sw{Lemma1, Corollary 1, and Corollary 2 are known results in \cite{cassez2013predictability}.}
    
    The following lemma characterizes the negation of $k$-prognosability. From the definition of negation of $k$-prognosability, to test whether a plant $G$ is non-prognosable, one needs to consider a fault-ending string $s\in \Psi(\Sigma_f)$ and all its prefixes such that the condition of Definition~\ref{Def: Kpro} does not hold.
    However, by the property of prefix, actually, we only need to consider a prefix $s'\in \overline{s}$.

    \begin{lemma} \label{Lem:non-k-pro}
        Given a natural number $k$, a live and convergent DFA $G$ is not $k$-prognosable w.r.t. $P$ and $\Sigma_f$ if and only if there is a fault-ending string $ss' \in \Psi(\Sigma_f)$ with $|P(s')|= k$ and a non-faulty string $t\in P^{-1}P(s)$, such that for every natural number $n$, there is a non-faulty extension $t'\in t\backslash L(G)$ of length at least $n$.  
    \end{lemma}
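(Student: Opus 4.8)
The plan is to negate Definition~\ref{Def: Kpro} verbatim and then collapse the universal quantifier over the prefixes at observable distance $k$ down to a single prefix, using Lemma~\ref{Lemma:k-1pro} on the observable part of the relevant prefix chain together with the elementary remark that two prefixes of one string that differ only by unobservable events have the same projection. Negating Definition~\ref{Def: Kpro}, $G$ is not $k$-prognosable iff some $\hat{s}\in\Psi(\Sigma_f)$ has the property that \emph{every} prefix $p\in\overline{\hat{s}}$ with $|P(\hat{s})|-|P(p)|=k$ fails the predictability requirement; spelled out, for each such $p$ there is a non-faulty $t\in P^{-1}P(p)$ admitting, for every $n\in\N$, a non-faulty extension $t'\in t\backslash L(G)$ with $|t'|\ge n$. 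Call such a $p$ \emph{unpredictable}. The useful observation is that, for a fixed fault-ending string $r$, its prefixes at observable distance exactly $k$ form a chain under the prefix order; unpredictability propagates to shorter prefixes along this chain, by the contrapositive of Lemma~\ref{Lemma:k-1pro} applied to the common string $r$; and any two prefixes of $r$ with the same number of observable events have equal projections, so a single $t$ witnesses unpredictability of both.

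For the direction from the stated existence claim to non-$k$-prognosability, I would take $ss'\in\Psi(\Sigma_f)$ with $|P(s')|=k$ and the non-faulty $t\in P^{-1}P(s)$ from the hypothesis, and show that $\hat{s}:=ss'$ witnesses non-$k$-prognosability. Here $s$ is itself a prefix of $ss'$ at observable distance $k$, and it is unpredictable via $t$; any other such prefix $p$ of $ss'$ is comparable to $s$, so either $p\in\overline{s}$, whence $p$ is unpredictable by downward propagation from $s$, or $s\in\overline{p}$, whence $p=sx$ with $x$ unobservable, so $P(p)=P(s)$ and the same $t$ witnesses unpredictability of $p$. Thus $ss'$ has no predictable prefix at observable distance $k$, so $G$ is not $k$-prognosable. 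Conversely, from a witness $\hat{s}$ of non-$k$-prognosability (which, see below, has $|P(\hat{s})|\ge k$), I would cut $\hat{s}=ss'$ at the shortest prefix $s$ with $|P(s)|=|P(\hat{s})|-k$, so that $|P(s')|=k$; since $s$ is one of the prefixes of $\hat{s}$ at observable distance $k$ it is unpredictable, and unpredictability of $s$ is precisely the assertion that there is a non-faulty $t\in P^{-1}P(s)$ with non-faulty extensions of every length, as required.

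The step I expect to be the main obstacle is exactly this collapse, i.e., equating ``every prefix at observable distance $k$ is unpredictable'' with ``one well-chosen prefix is unpredictable''. The downward part along the chain is just the contrapositive of Lemma~\ref{Lemma:k-1pro}, but the prefixes strictly longer than the chosen cut point are not governed by Lemma~\ref{Lemma:k-1pro} and must be handled separately by the projection-invariance remark. A secondary, minor point is the degenerate case $|P(\hat{s})|<k$, where $\hat{s}$ has no prefix at observable distance $k$ at all; under the standard convention that a fault reached within fewer than $k$ observable events imposes no $k$-step obligation (so that such an $\hat{s}$ cannot witness failure of $k$-prognosability), every genuine witness satisfies $|P(\hat{s})|\ge k$, which is exactly what makes the decomposition in the converse direction legitimate.
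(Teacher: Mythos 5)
Your proposal is correct and follows essentially the same route as the paper's proof: negate Definition~\ref{Def: Kpro} and collapse the universal quantifier over prefixes by observing that any two prefixes of a fixed fault-ending string at observable distance exactly $k$ have equal projections, so one non-faulty witness $t$ serves for all of them (this also makes your appeal to Lemma~\ref{Lemma:k-1pro} for the shorter-prefix case redundant, since your own projection-equality remark already covers it, and that remark is exactly how the paper dispatches the whole case analysis). The one place you are more careful than the paper is the degenerate case $|P(\hat{s})|<k$: under a literal negation of Definition~\ref{Def: Kpro} such a string vacuously witnesses non-$k$-prognosability yet admits no decomposition $ss'$ with $|P(s')|=k$, so the convention you make explicit is genuinely needed for the ``only if'' direction and is silently assumed in the paper's proof.
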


\begin{proof}
    (If) We prove that $G$ is not $k$-prognosable by showing that the string $ss' \in \Psi(\Sigma_f)$ from the statement of the lemma violates $k$-prognosability; namely, we show that for every prefix $w$ of $ss'$ with $|P(ss')|-|P(w)| = k$, there is a non-faulty string $r\in P^{-1}P(w)$ such that, for every $n\in\mathbb{N}$, there is a non-faulty extension $r'\in r\backslash L(G)$ of length $n$. To this end, let $w$ be a prefix of $ss'$. 
    If $P(w)=P(s)$, then the string $t\in P^{-1}P(s) = P^{-1}P(w)$ from the condition of the lemma completes the proof.
    If $P(w)\neq P(s)$, the claim holds vacuously as  $|P(ss')|-|P(w)| \neq k$.

    (Only if) Assume that $G$ is not $k$-prognosable. By Definition~\ref{Def: Kpro}, there is a fault-ending string $w\in \Psi(\Sigma_f)$ such that, for every prefix $w'\in \overline{w}$ with $|P(w)|-|P(w')|= k$, there is a non-faulty string $t\in P^{-1}P(w')$ and for all $n\in \N$, there is $t'\in t\backslash L(G)$ with $|t'|\geq n$ and $tt'\notin L_f$. This clearly implies the condition from the statement of the lemma by choosing $s=w'$ for some of those $w'\in \overline{w}$ with $|P(w)|-|P(w')|= k$ and $ss'=w$. 
\end{proof}

\section{Characterization of Diagnosability and Prognosability}\label{Sec: Diagnosability}

In this section, we show that prognosability and diagnosability can be characterized in terms of pre-normality. This characterization will further be employed in the subsequent sections to verify the existence of, and to compute, the supremal normal and $k$-prognosable/diagnosable sublanguages. 

\subsection{Characterizations of prognosability}

To characterize $k$-prognosability in terms of pre-normality, let 
\begin{multline*}
    \Psi_f^{-k}=\{ut\in \overline{\Psi(\Sigma_f)} \mid  \exists s's\in \Psi(\Sigma_f): |P(s)| \leq k~ \land\\ 
    u\in P^{-1}P(s')\}\,.
\end{multline*}
This corresponds to the completion within the prefix-closure of the language leading to the first fault (by string $t$) of all strings $u$ that look like strings (here $s'$) that are less than $k$ observations from the occurrence of the first fault,
which can be equivalently expressed using the right quotient operation as
   \begin{equation}
        \Psi_f^{-k}=P^{-1}P[ \Psi(\Sigma_f) / P^{-1}(\Sigma_o^{\leq k})]\Sigma^* \cap \overline{\Psi(\Sigma_f)}, \label{(1111)}
          \end{equation}
where $\Sigma_o^{\leq N}=\{t\in \Sigma_o^*\mid |P(t)|\leq N\}$. Consequently, there is a DFA marking the language $\Psi_f^{-k}$.
In the following, we show the property of language $\Psi_f^{-k}$, which is used to simplify the proof of the following proposition.

\begin{lemma} \label{Lemma:extension-closed}
Given a natural number $k\in \N$, a DFA $G$, and the set of fault-ending strings $\Psi(\Sigma_f)$, then $\Psi_f^{-k}$ is pre-normal w.r.t. $\overline{\Psi(\Sigma_f)}$ and $P$, and is extension-closed w.r.t. $\overline{\Psi(\Sigma_f)}$, as well as $\overline{\Psi(\Sigma_f)}\setminus \Psi_f^{-k}$ is prefix-closed. 
\end{lemma}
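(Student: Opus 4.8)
The plan is to establish the three claimed properties one at a time, exploiting the equivalence (already noted in the preliminaries) between extension-closedness of a sublanguage and prefix-closedness of its complement relative to the ambient language. Concretely, I would work with the ambient language $L' = \overline{\Psi(\Sigma_f)}$ and the sublanguage $K = \Psi_f^{-k}$, and I would prove: (i) $K$ is pre-normal w.r.t. $L'$ and $P$; (ii) $K$ is extension-closed w.r.t. $L'$; and (iii) $L' \setminus K$ is prefix-closed. Since (ii) and (iii) are equivalent by the characterization recalled in Section~\ref{Sec: Preliminary}, I would prove one of them and cite that equivalence for the other.

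For pre-normality, I would use the closed-form expression \eqref{(1111)}, namely $K = P^{-1}P[\Psi(\Sigma_f)/P^{-1}(\Sigma_o^{\leq k})]\Sigma^* \cap L'$. Write $R = P^{-1}P[\Psi(\Sigma_f)/P^{-1}(\Sigma_o^{\leq k})]\Sigma^*$, so $K = R \cap L'$. The key observation is that $R$ is itself an inverse-projected language saturated by $\Sigma^*$ on the right, hence $R = P^{-1}P(R)$: indeed $P^{-1}P(A)\Sigma^* = P^{-1}P(P^{-1}P(A)\Sigma^*)$ because projecting and re-inverse-projecting does not change an inverse-projection-closed language, and appending $\Sigma^*$ is compatible with this since $P(\Sigma^*) = \Sigma_o^*$. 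Then $P^{-1}P(K) \cap L' = P^{-1}P(R \cap L') \cap L' \subseteq P^{-1}P(R) \cap L' = R \cap L' = K$, and the reverse inclusion $K \subseteq P^{-1}P(K) \cap L'$ is immediate since $K \subseteq L'$ and $K \subseteq P^{-1}P(K)$ always. This gives pre-normality. I expect this step to be mostly routine once the identity $R = P^{-1}P(R)$ is pinned down cleanly.

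For extension-closedness, I would argue directly from the set-builder description: $\Psi_f^{-k} = \{ut \in \overline{\Psi(\Sigma_f)} \mid \exists s's \in \Psi(\Sigma_f): |P(s)| \leq k \ \land\ u \in P^{-1}P(s')\}$. Take $ut \in K$ witnessed by $s's$ and $u$, and take any extension $utv \in L' = \overline{\Psi(\Sigma_f)}$. The same witnesses $s's$ and $u$ still work for $utv$: we have $u \in P^{-1}P(s')$ unchanged, $|P(s)| \leq k$ unchanged, and $utv \in \overline{\Psi(\Sigma_f)}$ by assumption, so $utv$ sits in $K$ with the "$t$-part" now being $tv$. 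Hence $K\Sigma^* \cap L' \subseteq K$, i.e. $K$ is extension-closed w.r.t. $L'$. The prefix-closedness of $L' \setminus K$ then follows from the equivalence stated in the preliminaries.

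The main obstacle I anticipate is getting the inverse-projection bookkeeping in the pre-normality step exactly right — in particular being careful that the $\Sigma^*$-completion in \eqref{(1111)} does not spoil $R = P^{-1}P(R)$, and that the witness $s'$ is chosen as a prefix of $s's$ (so that $P(s') $ is well-defined and $u$ ranging over $P^{-1}P(s')$ is the correct "looks-like" set). It is worth double-checking that the set-builder form and \eqref{(1111)} genuinely coincide, since the lemma implicitly relies on both; but that is exactly the content of the sentence preceding \eqref{(1111)} in the excerpt, which I may assume. Everything else reduces to elementary manipulations of prefixes, projections, and quotients, so no heavy machinery is needed.
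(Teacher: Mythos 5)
Your proposal is correct and follows essentially the same route as the paper: pre-normality via the closed form \eqref{(1111)} using the fact that $P^{-1}P[\cdot]\Sigma^*$ is saturated under $P^{-1}P$, extension-closedness from the definition (the paper phrases it as the algebraic identity $K\Sigma^*\Sigma^*=K\Sigma^*$, you phrase it element-wise with witnesses, which is the same argument), and prefix-closedness of the complement from extension-closedness (the paper re-derives this by contradiction, but since $\overline{\Psi(\Sigma_f)}$ is prefix-closed, citing the equivalence from the preliminaries as you do is legitimate). Your explicit isolation of the identity $R=P^{-1}P(R)$ is, if anything, slightly more careful than the paper's chain of equalities.
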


\begin{proof}
We first show that $\Psi_f^{-k}$ is pre-normal w.r.t. $\overline{\Psi(\Sigma_f)}$ and $P$. We need to show $P^{-1}P(\Psi_f^{-k})\cap \overline{\Psi(\Sigma_f)}=\Psi_f^{-k}$ by the definition of pre-normality. It holds that 
\begin{align*}
&P^{-1}P(\Psi_f^{-k})\cap \overline{\Psi(\Sigma_f)}=\\
&P^{-1}P\big[P^{-1}P[ \Psi(\Sigma_f) / P^{-1}(\Sigma_o^{\leq k})]\Sigma^* \cap \overline{\Psi(\Sigma_f)} \big] \cap \overline{\Psi(\Sigma_f)}=\\
&P^{-1}P[ \Psi(\Sigma_f) / P^{-1}(\Sigma_o^{\leq k})]\Sigma^* \cap P^{-1}P(\overline{\Psi(\Sigma_f)}) \cap \overline{\Psi(\Sigma_f)}=\\
&P^{-1}P[ \Psi(\Sigma_f) / P^{-1}(\Sigma_o^{\leq k})]\Sigma^* \cap \overline{\Psi(\Sigma_f)}=\Psi_f^{-k}.
\end{align*}

    We then show that $\Psi_f^{-k}$ is extension-closed w.r.t. $\overline{\Psi(\Sigma_f)}$. For the sake of brevity, let $K= P^{-1}P[ \Psi(\Sigma_f) / P^{-1}(\Sigma_o^{\leq k})]$. We have $\Psi_f^{-k}=K\Sigma^* \cap \overline{\Psi(\Sigma_f)}$. Then it holds $[K\Sigma^* \cap \overline{\Psi(\Sigma_f)}]\Sigma^*\cap \overline{\Psi(\Sigma_f)}=K\Sigma^* \cap \overline{\Psi(\Sigma_f)}\Sigma^*\cap \overline{\Psi(\Sigma_f)}=K\Sigma^* \cap \overline{\Psi(\Sigma_f)}=\Psi_f^{-k}$, which implies that $\Psi_f^{-k}$ is extension-closed w.r.t. $\overline{\Psi(\Sigma_f)}$.

    Now we prove that $\overline{\Psi(\Sigma_f)}\setminus \Psi_f^{-k}$ is prefix-closed. By contradiction, there is a string $s\in \overline{\Psi(\Sigma_f)}\setminus \Psi_f^{-k}$ such that there exists $s'\in \overline{s}$ with $s'\notin \overline{\Psi(\Sigma_f)}\setminus \Psi_f^{-k}$. Since $\overline{\Psi(\Sigma_f)}$ is prefix-closed, we have $s'\in \Psi_f^{-k}$. According to $\Psi_f^{-k}$ is extension-closed w.r.t. $\overline{\Psi(\Sigma_f)}$, for all $s't\in \overline{\Psi(\Sigma_f)}$ we have  $s't\in \Psi_f^{-k}$. By $s'\in \overline{s}$, we have $s\in \Psi_f^{-k}$, which leads to a contradiction and completes the proof.
\end{proof}  

According to Lemmas \ref{Lem:non-k-pro} and \ref{Lemma:extension-closed}, we characterize $k$-prognosability in terms of pre-normality.

\begin{proposition}\label{Pro:prodef}
    Given a number $k\in \N$, a live and convergent DFA $G$ is $k$-prognosable w.r.t. $\Sigma_f$ and $P$ if and only if $P^{-1}P(\overline{L_n\setminus \Psi_f^{-k}})\cap L \subseteq L_n\setminus \Psi_f^{-k}$.
\end{proposition}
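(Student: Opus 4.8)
Write $M := L_n \setminus \Psi_f^{-k}$ and $K := P^{-1}P[\Psi(\Sigma_f)/P^{-1}(\Sigma_o^{\leq k})]$, so that $\Psi_f^{-k} = K\Sigma^*\cap\overline{\Psi(\Sigma_f)}$ by~\eqref{(1111)}. Since $M\subseteq\overline M\subseteq P^{-1}P(\overline M)\cap L$ holds trivially ($M\subseteq L_n\subseteq L$ and $L$ is prefix-closed), the asserted inclusion $P^{-1}P(\overline M)\cap L\subseteq M$ forces these three languages to coincide; equivalently, it says that $M$ is a prefix-closed and pre-normal sublanguage of $L$. I plan to prove the ``iff'' by contraposition in each direction, using Lemma~\ref{Lem:non-k-pro} for the negation of $k$-prognosability and Lemma~\ref{Lemma:extension-closed} to control $\Psi_f^{-k}$. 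Two facts will be used repeatedly: (i) if $s''s'\in\Psi(\Sigma_f)$ with $|P(s')|\le k$, then $s''\in\Psi(\Sigma_f)/P^{-1}(\Sigma_o^{\le k})$, whence $P^{-1}P(s'')\subseteq K$; and (ii) consequently, a non-faulty string with a prefix in $K$ lies outside $\Psi_f^{-k}$ iff it cannot reach a fault, in which case (by liveness) it has non-faulty extensions of every length.

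For ``$k$-prognosable $\Rightarrow$ inclusion'' I would argue contrapositively. Suppose the inclusion fails: there are $w\in L\setminus M$ and $v\in\overline M$ with $P(w)=P(v)$; fix $v\bar v\in M$, which is non-faulty and not in $\Psi_f^{-k}$. As $w\in L\setminus M=L_f\cup(L_n\cap\Psi_f^{-k})$, either $w$ is faulty or $w\in L_n\cap\Psi_f^{-k}$. Peeling off the shortest faulty prefix of $w$ in the first case, and unfolding $\Psi_f^{-k}$ through~\eqref{(1111)} in the second, then re-splitting the resulting fault-ending string so that exactly $k$ observations follow the split point, one obtains $SS'\in\Psi(\Sigma_f)$ with $|P(S')|=k$ and a prefix $S$ with $P(S)\sqsubseteq P(w)=P(v)$ (should some fault-ending string have fewer than $k$ observations, $G$ is non-$k$-prognosable and we are done). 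Let $v_0\sqsubseteq v$ have $P(v_0)=P(S)$; then $v_0$ is non-faulty and, by (i), $v_0\in P^{-1}P(S)\subseteq K$. If $v\bar v$ cannot reach a fault, then $v_0$ has non-faulty extensions of every length, so $G$ is not $k$-prognosable by Lemma~\ref{Lem:non-k-pro} (with witness $SS'$ and $t:=v_0$). If $v\bar v$ can reach a fault, then $v\bar v\in\overline{\Psi(\Sigma_f)}\setminus\Psi_f^{-k}$ forces $v\bar v\notin K\Sigma^*$, hence $v_0\notin K$, contradicting $v_0\in K$; so this case cannot occur.

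For the converse, ``inclusion $\Rightarrow$ $k$-prognosable'', assume contrapositively that $G$ is not $k$-prognosable. By Lemma~\ref{Lem:non-k-pro} there are $ss'\in\Psi(\Sigma_f)$ with $|P(s')|=k$ and a non-faulty $t\in P^{-1}P(s)$ admitting non-faulty extensions of arbitrary length. By (i), $t\in P^{-1}P(s)\subseteq K$, and $s\in\overline{\Psi(\Sigma_f)}$, so $s\in\Psi_f^{-k}$, hence $s\in L\setminus M$ while $P(s)=P(t)$. It therefore suffices to show $t\in\overline M$, i.e.\ that $t$ extends to some string in $M=L_n\setminus\Psi_f^{-k}$; since $t\in K$, by (ii) this amounts to finding a non-faulty extension of $t$ that cannot reach a fault. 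Here one should use finiteness of $G$ together with Lemma~\ref{Lemma:extension-closed} (prefix-closedness of $\overline{\Psi(\Sigma_f)}\setminus\Psi_f^{-k}$ and extension-closedness of $\Psi_f^{-k}$ within $\overline{\Psi(\Sigma_f)}$) to single out, among the unboundedly long non-faulty continuations of $t$, one that stays out of $\Psi_f^{-k}$. Once such $t\bar t\in M$ is found, $s\in P^{-1}P(\overline M)\cap L\setminus M$ contradicts the inclusion.

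The main obstacle is precisely this last step of the converse: producing an element of $M$ that extends the combinatorial witness $t$ of Lemma~\ref{Lem:non-k-pro}, equivalently, ruling out that every non-faulty continuation of $t$ can still reach a fault. This is where the full strength of Lemma~\ref{Lemma:extension-closed} and of the right-quotient description~\eqref{(1111)} must be deployed; the remaining steps are routine projection/quotient bookkeeping of the same flavour as in the proof of Lemma~\ref{Lemma:extension-closed}.
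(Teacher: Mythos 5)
Your overall strategy coincides with the paper's: both directions are argued by contraposition through Lemma~\ref{Lem:non-k-pro}, with Lemma~\ref{Lemma:extension-closed} controlling $\Psi_f^{-k}$, and your first direction (inclusion fails $\Rightarrow$ not $k$-prognosable) matches the paper's ``only if'' part and is sound modulo the re-splitting bookkeeping you sketch. The problem is the converse. You reduce it to showing that the witness $t$ of Lemma~\ref{Lem:non-k-pro} lies in $\overline{L_n\setminus\Psi_f^{-k}}$, i.e.\ that some non-faulty continuation of $t$ escapes $\Psi_f^{-k}$, and then you stop: you say that ``one should use finiteness of $G$ together with Lemma~\ref{Lemma:extension-closed}'' and yourself label this ``the main obstacle''. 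That step carries the whole direction, and none of the tools you name hands it to you. Lemma~\ref{Lemma:extension-closed} says $\Psi_f^{-k}$ is \emph{extension}-closed inside $\overline{\Psi(\Sigma_f)}$, which if anything works against you: once a continuation of $t$ enters $\Psi_f^{-k}$, every further continuation that can still reach a fault stays in $\Psi_f^{-k}$. A priori it is possible that every arbitrarily long non-faulty continuation of $t$ remains within $k$ observations of some reachable fault (picture an observable self-loop at a state from which a fault event is enabled), in which case no extension of $t$ ever lands in $L_n\setminus\Psi_f^{-k}$ and your intended violating pair, namely $t$ together with the fault-ending prefix $s$, does not exist.

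The paper does not hang the argument on that single witness. It splits into two cases: if all the long non-faulty extensions $tt_i$ stay in $L_n\setminus\Psi_f^{-k}$, then $t\in\overline{L_n\setminus\Psi_f^{-k}}$ and the pair $(t,s')$ violates the inclusion, exactly as you intend; if instead some $tt_i\in\Psi_f^{-k}$, the paper \emph{changes witness}, deducing from the pre-normality of $\Psi_f^{-k}$ (Lemma~\ref{Lemma:extension-closed}) that $t$ itself lies in $\Psi_f^{-k}$, hence $t\notin L_n\setminus\Psi_f^{-k}$, and then arguing that a sufficiently long $tt_j$ lies in $L_n\setminus\Psi_f^{-k}$ so that $t\in\overline{L_n\setminus\Psi_f^{-k}}$ still holds; the violation is then exhibited by $t$ alone. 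Your write-up contains neither the case split nor the change of witness, so the converse direction is not proved as it stands. To repair it you must either supply the missing argument that some non-faulty continuation of $t$ leaves $\Psi_f^{-k}$, or follow the paper and treat separately the case in which the continuations re-enter $\Psi_f^{-k}$ --- and note that even there the claim that a long enough $tt_j$ escapes $\Psi_f^{-k}$ requires a genuine argument, not just an appeal to finiteness, so you have correctly located the crux but not resolved it.
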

    \begin{proof}
    (If) We show by contrapositive that if $G$ is not $k$-prognosable, then the inclusion does not hold. By Lemma~\ref{Lem:non-k-pro}, there is a composed string $s's\in \Psi(\Sigma_f)$ with $|P(s)|= k$ and $t\in L_n$ with $P(t)=P(s')$ such that, for all $i\in \mathbb{N}$, there is a string $t_i\in t\backslash L$ with $|t_i|\geq i$ and $tt_i\notin L_f$. Then one sees $s'\in \Psi_f^{-k}$. If, for all $i\in \mathbb{N}$, $tt_i\in L_n\setminus\Psi_f^{-k}$, we have $t\in \overline{L_n\setminus \Psi_f^{-k}}$, which implies $s' \in P^{-1}P(\overline{L_n\setminus \Psi_f^{-k}})\cap L$. Since $s'\in \Psi_f^{-k}$, we have $s' \notin L_n \setminus \Psi_f^{-k}$, which shows that $P^{-1}P(\overline{L_n\setminus \Psi_f^{-k}})\cap L \not\subseteq L_n\setminus \Psi_f^{-k}$.

    Now we consider that there is an integer $i\in \mathbb{N}$ such that $tt_i\in \Psi_f^{-k}$. By $s'\in \Psi_f^{-k}$ and $P(t)=P(s')$, $t\in \Psi_f^{-k}$ holds. Furthermore, for all $j\in \N$, there exists $tt_j\in L_n$ with $|t_j|\geq j$. Since one can take $j> |tt_i|$, we have $tt_j\in L_n\setminus \Psi_f^{-k}$, i.e., $t\in \overline{L_n\setminus \Psi_f^{-k}}$ but $t\notin L_n\setminus\Psi_f^{-k}$, which again leads to 
    %a contradiction with  
$P^{-1}P(\overline{L_n\setminus \Psi_f^{-k}})\cap L\not \subseteq L_n\setminus \Psi_f^{-k}$.

    (Only if) First, let us consider that there exists $m\in \N$ such that for all $s \in L_n$ we have $|s|\leq m$. Since $G$ is live, we have $L_n=\overline{\Psi(\Sigma_f)}\setminus \Psi(\Sigma_f)$
     and $G$ is always $k$-prognosable. As $\overline{\Psi(\Sigma_f)}\setminus \Psi_f^{-k}$ is prefix-closed by Lemma \ref{Lemma:extension-closed}, $\overline{L_n\setminus \Psi_f^{-k}}=\overline{\overline{\Psi(\Sigma_f)}\setminus \Psi_f^{-k}}= \overline{\Psi(\Sigma_f)}\setminus \Psi_f^{-k}=L_n\setminus\Psi_f^{-k}$. By the symmetry of pre-normality, $P^{-1}P(\overline{\Psi(\Sigma_f)}\setminus\Psi_f^{-k})\cap \overline{\Psi(\Sigma_f)}=\overline{\Psi(\Sigma_f)}\setminus\Psi_f^{-k}$. According to the definition of $\Psi_f^{-k}$ and $L_n=\overline{\Psi(\Sigma_f)}\setminus \Psi(\Sigma_f)$, $L_n\setminus\Psi_f^{-k}=\overline{\Psi(\Sigma_f)}\setminus\Psi_f^{-k}$ holds.
We have $P^{-1}P(L_n\setminus\Psi_f^{-k})\cap L= P^{-1}P(L_n\setminus\Psi_f^{-k})\cap L\cap \overline{\Psi(\Sigma_f)}\subseteq L_n\setminus\Psi_f^{-k}$. 
 By viewing $L_n\setminus\Psi_f^{-k}$ is prefix-closed, it gives $P^{-1}P(\overline{L_n\setminus \Psi_f^{-k}})\cap L \subseteq L_n\setminus \Psi_f^{-k}$.

    Now, we consider the case where $L_n$ contains arbitrarily long strings. Assume that the formula does not hold, i.e., there is $t \in (P^{-1}P(\overline{L_n\setminus \Psi_f^{-k}})\cap L)  \setminus (L_n\setminus \Psi_f^{-k})$. Then, there is $s'\in \overline{L_n\setminus \Psi_f^{-k}}$ such that $P(s')=P(t)$. 
    Consider $s'\in \overline{L_n\setminus \Psi_f^{-k}}\setminus (L_n\setminus \Psi_f^{-k})$. We have $s'\in \Psi_f^{-k}$ and for all $i\in \N$, there exists $t_i\in L/s'$ with $|t_i|\geq i$ such that $s't_i\in L_n$. Since $\Psi_f^{-k}\subseteq \overline{\Psi(\Sigma_f)}$, by definition of  $\Psi_f^{-k}$, there exists $u'u\in \Psi(\Sigma_f)$ with $P(u')=P(s')$ and $|P(u)|\leq k$. Due to Lemma~\ref{Lem:non-k-pro}, $G$ is not $k$-prognosable.
    
    We assume that $s'\in L_n\setminus \Psi_f^{-k}$ and recall that $P(s')=P(t)$. Either $t\in \Psi_f^{-k}$ or $t\in L_f$ holds. For the former, by definition of $\Psi_f^{-k}$, if $u\in P^{-1}P(t)$ and there exists $v\in \Sigma^*$ with $uv\in \Psi(\Sigma_f)$, we have $u\in \Psi_f^{-k}$. By pre-normality of $\Psi_f^{-k}$ w.r.t. $\overline{\Psi(\Sigma_f)}$, for all $i\in \N$, there exists $t_i\in \Sigma^*$ with $P(t_i)\geq i$, $s't_i\in L_n$. Further, there exists $t_1t_2\in \Psi(\Sigma_f)$ with $P(t_1)=P(t)$ and $|P(t_2)|\leq k$.
    Then $G$ is not $k$-prognosable by Lemma~\ref{Lem:non-k-pro}. We consider $s'\in L_n\setminus \Psi_f^{-k}$ and $t\in L_f$. There exist $s''\in \overline{s'}$ and $t'\in \overline{t}$ such that $t'\in \Psi(\Sigma_f)$ and $P(s'')=P(t')$. This is equivalent to the case of $s'\in \overline{L_n\setminus \Psi_f^{-k}}$ and $t\in \Psi_f^{-k}$. We conclude that $G$ is $k$-prognosable w.r.t. $\Sigma_f$ and $P$ if and only if $P^{-1}P(\overline{L_n\setminus \Psi_f^{-k}})\cap L \subseteq L_n\setminus \Psi_f^{-k}$.
\end{proof}

%From Proposition \ref{Pro:prodef}, we establish a sufficient condition for non-$k$-prognosability verification of $G$, i.e., if $L_n\setminus \Psi_f^{-k}$ fails to be prefix-closed, then $G$ is not $k$-prognosable w.r.t. $\Sigma_f$ and $P$. 
Proposition \ref{Pro:prodef} can be reformulated as follows. Note that pre-normality of $M\subseteq L(G)$ is equivalent to normality of $M$ if $M$ is prefix-closed.

\begin{corollary} \label{Corollary: prenormal and closed}
   Given a number $k\in \N$, a live and convergent DFA $G$ is $k$-prognosable w.r.t. $\Sigma_f$ and $P$ if and only if the language $L_n\setminus \Psi_f^{-k}$ is prefix-closed, and is pre-normal (normal) w.r.t. $L(G)$ and $P$.
 %  \hfill$\blacksquare$
\end{corollary}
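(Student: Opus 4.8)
The plan is to read the corollary straight off Proposition \ref{Pro:prodef} by unwinding the single inclusion $P^{-1}P(\overline{L_n\setminus \Psi_f^{-k}})\cap L \subseteq L_n\setminus \Psi_f^{-k}$ into the two asserted properties of the language $M:=L_n\setminus \Psi_f^{-k}$, namely prefix-closure and (pre-)normality w.r.t.\ $L(G)$ and $P$. The elementary facts that do all the work are: (i) $M\subseteq P^{-1}P(M)$ for every language $M$; (ii) $M=L_n\setminus\Psi_f^{-k}\subseteq L_n\subseteq L$, and since $L=L(G)$ is prefix-closed, also $\overline{M}\subseteq \overline{L}=L$; and (iii) the remark recalled just before the corollary, that pre-normality of $M$ coincides with normality of $M$ whenever $M$ is prefix-closed.

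For the ``only if'' direction, assume $G$ is $k$-prognosable, so by Proposition \ref{Pro:prodef} we have $P^{-1}P(\overline{M})\cap L \subseteq M$. Combining this with (i) and (ii) applied to $\overline{M}$ gives $\overline{M}\subseteq P^{-1}P(\overline{M})\cap L\subseteq M\subseteq \overline{M}$, hence $\overline{M}=M$, i.e.\ $M$ is prefix-closed. Substituting $\overline{M}=M$ back into the inclusion of Proposition \ref{Pro:prodef} yields $P^{-1}P(M)\cap L\subseteq M$, and together with the trivial reverse inclusion $M\subseteq P^{-1}P(M)\cap L$ from (i)--(ii) we conclude $M=P^{-1}[P(M)]\cap L(G)$, which is precisely pre-normality of $M$; by (iii) and prefix-closure this is equivalently normality of $M$ w.r.t.\ $L(G)$ and $P$.

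For the ``if'' direction, assume $M$ is prefix-closed and pre-normal (equivalently normal) w.r.t.\ $L(G)$ and $P$. Prefix-closure gives $\overline{M}=M$ and pre-normality gives $M=P^{-1}[P(M)]\cap L(G)$, so $P^{-1}P(\overline{M})\cap L=P^{-1}P(M)\cap L=M=L_n\setminus\Psi_f^{-k}$; in particular $P^{-1}P(\overline{L_n\setminus \Psi_f^{-k}})\cap L \subseteq L_n\setminus \Psi_f^{-k}$, and Proposition \ref{Pro:prodef} delivers $k$-prognosability of $G$.

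I do not expect any real obstacle here: the statement is a bookkeeping reformulation, and all the substantive content has already been absorbed into Proposition \ref{Pro:prodef} (and, through it, Lemmas \ref{Lem:non-k-pro} and \ref{Lemma:extension-closed}). The only point needing care is to keep the definitions straight---pre-normality asks $M=P^{-1}P(M)\cap L$ while normality asks $\overline{M}=P^{-1}P(\overline{M})\cap L$---and to invoke their coincidence for prefix-closed $M$ so that the conclusion can be phrased in either form, as the parenthetical ``(normal)'' in the statement intends.
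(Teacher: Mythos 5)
Your proposal is correct and follows exactly the route the paper intends: the corollary is presented as an immediate reformulation of Proposition~\ref{Pro:prodef} (the paper gives no explicit proof beyond the remark that pre-normality and normality coincide for prefix-closed languages), and your argument simply fills in the elementary inclusions $\overline{M}\subseteq P^{-1}P(\overline{M})\cap L$ and $M\subseteq P^{-1}P(M)\cap L$ that make the equivalence rigorous. Both directions check out, so nothing further is needed.
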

%\sw{If we require $L_n\setminus \Psi_f^{-k}$ to be prefix-closed, basically we are dealing with ``normality'' and ``pre-normality'' is useless, isn't it?}

%In terms of pre-normality, a live and convergent DFA $G$ is $k$-prognosable w.r.t. $\Sigma_f$ and $P$ if and only if $L_n\setminus \Psi_f^{-k}$ is prefix-closed and  pre-normal w.r.t. $G$.
By Corollaries \ref{Corollary:equiv} and \ref{Corollary: prenormal and closed}, $G$ is prognosable w.r.t. $\Sigma_f$ and $P$ if and only if sublanguage $L_n\setminus \Psi_f^{-0}$ is prefix-closed, and is pre-normal (normal) w.r.t. $L(G)$ and $P$.

Proposition \ref{Pro:prodef} also holds for its complements because pre-normality is symmetric w.r.t. complements. Specifically, a plant $G$ is $k$-prognosable w.r.t. $P$ and $\Sigma_f$ if for $k\in \N$, $P^{-1}P((L_f\cup \Psi_f^{-k})\Sigma^*)\cap L \subseteq L_f\cup \Psi_f^{-k}$. Similar to Proposition~\ref{Pro:prodef}, i.e., language $L_n\setminus \Psi_f^{-k}$ is normal and prefix-closed, we require that $L_f\cup \Psi_f^{-k}$ should be pre-normal and extension-closed, i.e., all the extensions of strings are in itself ($L_f\cup \Psi_f^{-k}=(L_f\cup \Psi_f^{-k})\Sigma^* \cap L$).

\begin{example} \label{Exa:prodef}
Given a DFA $G_1=(Q_1, \Sigma_1, \delta_1, q_{0,1})$ depicted in Fig. \ref{Fig:G_1}(a), the event set is $\Sigma_1=\{a,b,c,f_1,\tau\}$ with fault event set $\Sigma_{1,f}=\{f_1\}$, where $a$, $b$, and $c$ are observable, and $\tau$ is non-faulty unobservable. The set of strings ending with a fault is $\Psi(\Sigma_{1,f})=\{abbf_1,cbbf_1\}$. Since there exist prefixes $ab\in \overline{abbf_1}$ and $cb\in \overline{cbbf_1}$ with $|P(abbf_1)|-|P(ab)|=1$ and $|P(cbbf_1)|-|P(cb)|=1$ such that $P^{-1}P(ab)=\{ab\}$ and $P^{-1}P(cb)=\{cb\}$, $G_1$ is $1$-prognosable w.r.t. $\Sigma_{1,f}$ and $P$ due to Definition~\ref{Def: Kpro}. 

Now, let us test $G_1$ by Proposition \ref{Pro:prodef}. We consider first $k=2$. By the definition of $\Psi_f^{-k}$, we have $\Psi_f^{-2}=\{a,c, ab,$ $cb, abb,cbb,abbf_1,cbbf_1\}$. According to $\overline{\Psi(\Sigma_{1,f})}=\overline{abbf_1}\cup \overline{cbbf_1}$, we have $\Psi_f^{-2}$ is pre-normal and extension-closed w.r.t. $\overline{\Psi(\Sigma_{1,f})}$ by the definitions of pre-normality and extension-closed language, respectively. 
Furthermore, we have $\overline{L_n\setminus \Psi_f^{-2}}=$ $L_n\setminus \Psi_f^{-2} =\{\varepsilon,\tau, \tau a, \tau aa, \tau aabc^j\}$ for $j\in \mathbb{N}$. It holds that $a\in P^{-1}P(\overline{L_n\setminus \Psi_f^{-2}})$ but $a\notin L_n\setminus \Psi_f^{-2}$ due to $P(a)=P(\tau a)$. By Proposition \ref{Pro:prodef}, $G_1$ is not $2$-prognosable w.r.t. $\Sigma_{1,f}$ and $P$.

Then let us consider $k=1$. We have $\Psi_f^{-1}=\{ab,cb, abb,cbb,abbf_1,cbbf_1\}$ and $\overline{L_n\setminus \Psi_f^{-1}}= L_n\setminus \Psi_f^{-1} =\{\varepsilon,\tau, a,\tau a,\tau aa,\tau aabc^j\}$ for $j\in \mathbb{N}$. According to $P^{-1}P(\overline{L_n\setminus \Psi_f^{-1}})\cap L \subseteq L_n\setminus \Psi_f^{-1}$, $G_1$ is $1$-prognosable w.r.t. $\Sigma_{1,f}$ and $P$ by Proposition \ref{Pro:prodef}. 
    \hfill$\blacksquare$
\end{example}

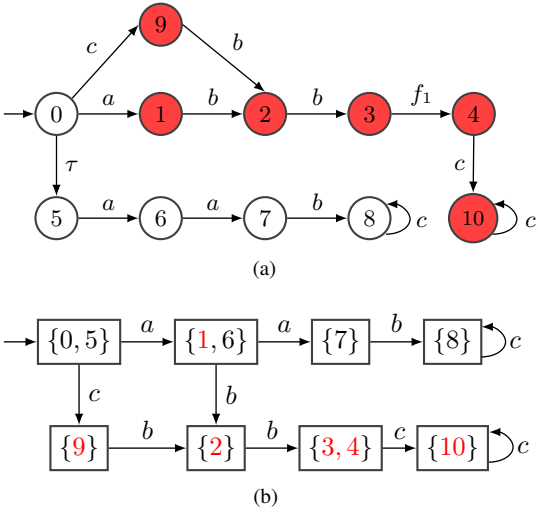
\begin{figure}[htbp]
\subfigure[]
{
    \centering
	\begin{tikzpicture}{every node}=[font= \fontsize{9pt}{10pt},scale=1]
	\node (c0) [circle,thick,draw=black!75] {$0$};
	%第一行		
	\node (c1) [circle,thick,draw=black!75,fill=red!75,right=0.8cm of c0] {$1$};
	\node (c2) [circle,thick,draw=black!75,fill=red!75,right=0.8cm of c1] {$2$};
    \node (c3) [circle,thick,draw=black!75,fill=red!75,right=0.8cm of c2] {$3$};
	\node (c4) [circle,thick,draw=black!75,fill=red!75,right=0.8cm of c3] {$4$};
	\node (c5) [circle,thick,draw=black!75,below=0.8cm of c0] {$5$};
	\node (c6) [circle,thick,draw=black!75,right=0.8cm of c5] {$6$};
    \node (c7) [circle,thick,draw=black!75,right=0.8cm of c6] {$7$};
	\node (c8) [circle,thick,draw=black!75,right=0.8cm of c7] {$8$};
    \node (c9) [circle,thick,draw=black!75,fill=red!75,above=0.6cm of c1] {$9$};
    \node (c10) [circle,thick,draw=black!75,fill=red!75,right=0.75cm of c8,font=\footnotesize] {$10$};

        \draw[-latex,line width=0.5pt] (-0.7,0) -- (c0.west);
	\draw[-latex,line width=0.5pt] (c0.east) --node [auto]{$a$} (c1.west);
    \draw[-latex,line width=0.5pt] (c0.north east) --node [auto]{$c$} (c9.west);
    \draw[-latex,line width=0.5pt] (c9.east) --node [auto]{$b$} (c2.north);
	%\draw[-latex,line width=0.5pt] (c1.south west) --node [auto,swap]{$b$} (c0.south east);
	\draw[-latex,line width=0.5pt] (c1.east) --node [auto]{$b$} (c2.west);
    \draw[-latex,line width=0.5pt] (c2.east) --node [auto]{$b$} (c3.west);
    \draw[-latex,line width=0.5pt] (c3.east) --node [auto]{$f_1$} (c4.west);
	%\draw[-latex,line width=0.5pt] (c0.south) --node [auto]{$b$} (c3.north);
	\draw[-latex,line width=0.5pt] (c0.south) --node [auto]{$\tau$} (c5.north);
    \draw[-latex,line width=0.5pt] (c6.east) --node [auto]{$a$} (c7.west);
    \draw[-latex,line width=0.5pt] (c7.east) --node [auto]{$b$} (c8.west);
	\draw[-latex,line width=0.5pt] (c5.east) --node [auto]{$a$} (c6.west);
    \draw[-latex,line width=0.5pt] (c4.south) --node [auto,swap]{$c$} (c10.north);
    %\draw[-latex,line width=0.5pt] (c2.north) -- (2.78,0.6)--node [auto,swap]{$b$}(0,0.6)-- (c0.north);
	%环
	\draw[-latex,line width=0.5pt] (4.38,-1.6)arc[start angle=-90, end angle=90, x radius=.32, y radius=.2];
	\node[black,below] at (4.85,-1.25) {{$c$}};

    \draw[-latex,line width=0.5pt] (5.8,-1.6)arc[start angle=-90, end angle=90, x radius=.32, y radius=.2];
	\node[black,below] at (6.3,-1.25) {{$c$}};

    %\draw[-latex,line width=0.5pt] (5.72,-.21)arc[start angle=-90, end angle=90, x radius=.32, y radius=.2];
	%\node[black,below] at (6.2,0.2) {{$c$}};

    %\draw[-latex,line width=0.5pt] (-0.19,-1.15)arc[start angle=90, end angle=270, x radius=.32, y radius=.2];
	%\node[black,below] at (-0.7,-1.15) {{$b$}};
	\end{tikzpicture}}
    \subfigure[]{
    \centering
\begin{tikzpicture}
    % 状态节点
    \node (c0) [draw,thick,draw=black!75] {$\{0,5\}$};
    \node (c1) [draw,thick,draw=black!75, right=0.7cm of c0] {$\{\textcolor{red}{1},6\}$};
    \node (c2) [draw,thick,draw=black!75, right=0.7cm of c1] {$\{7\}$};
    \node (c3) [draw,thick,draw=black!75, right=0.7cm of c2] {$\{8\}$};
    \node (c5) [draw,thick,draw=black!75, below=0.8cm of c1] {$\{\textcolor{red}{2}\}$};
    \node (c6) [draw,thick,draw=black!75, right=0.7cm of c5] {$\{\textcolor{red}{3,4}\}$};
    %\node (c7) [draw,thick,draw=black!75, right=0.7cm of c6] {$\{\textcolor{red}{4}\}$};
    \node (c8) [draw,thick,draw=black!75, below=0.8cm of c0] {$\{\textcolor{red}{9}\}$};
    \node (c9) [draw,thick,draw=black!75, below=0.8cm of c3] {$\{\textcolor{red}{10}\}$};

    \draw[-latex,line width=0.5pt] (-1,0) -- (c0.west);
    \draw[-latex,line width=0.5pt] (c0.east) --node [auto]{$a$} (c1.west);
    \draw[-latex,line width=0.5pt] (c1.east) --node [auto]{$a$} (c2.west);
    \draw[-latex,line width=0.5pt] (c2.east) --node [auto]{$b$} (c3.west);
    \draw[-latex,line width=0.5pt] (c1.south) --node [auto]{$b$} (c5.north);
    %\draw[-latex,line width=0.5pt] (c6.east) --node [auto]{$c$} (c7.west);
    \draw[-latex,line width=0.5pt] (c5.east) --node [auto]{$b$} (c6.west);
    \draw[-latex,line width=0.5pt] (c8.east) --node [auto]{$b$} (c5.west);
    \draw[-latex,line width=0.5pt] (c0.south) --node [auto]{$c$} (c8.north);
    \draw[-latex,line width=0.5pt] (c6.east) --node [auto]{$c$} (c9.west);
    %\draw[-latex,line width=0.5pt] (c0.west) --node [auto,swap]{$b$} (c9.east);

    % 环
    \draw[-latex,line width=0.5pt] (5.35,-0.2) arc[start angle=-90, end angle=90, x radius=.32, y radius=.2];
    \node[black,below] at (5.8,0.2) {{$c$}};
    \draw[-latex,line width=0.5pt] (5.45,-1.6) arc[start angle=-90, end angle=90, x radius=.32, y radius=.2];
    \node[black,below] at (5.9,-1.2) {{$c$}};

	\end{tikzpicture}
    }
    \caption{(a) A DFA $G_1$ and (b) its observer $Obs(G_1)$.}
    \label{Fig:G_1}
    %\sw{The highlighted states should be explained.}
\end{figure}

\begin{example}
    We adapt an observer-based method to verify the $k$-prognosability of $G_1$. This method is further employed in subsequent sections to compute the supremal normal and $k$-prognosable sublanguage. Let the marked states be the states reached by firing strings in $L_f\cup \Psi_f^{-2}$, i.e., $Q_{m}=\{1,2,3,$ $4,9,10\}$, which are depicted in red in Fig.~\ref{Fig:G_1}(a). The observer of $G_1$ is shown in Fig.~\ref{Fig:G_1}(b). Since there exists an observer state that contains both marked and non-marked states, i.e., $\{1,6\}$, we conclude that language $L_f\cup \Psi_f^{-2}$ is not pre-normal w.r.t. $L(G_1)$ and $P$. 
    %By Example \ref{Exa:prodef}, $L_n\setminus \Psi_f^{-1}$ is prefix-closed. Further, since $L(G_1)=L_n\cup L_f$ is prefix-closed, we have $L_f\cup \Psi_f^{-1}$ is extension-closed w.r.t. $L(G_1)$. In this way, $(L_f\cup \Psi_f^{-1})\Sigma^*$ is pre-normal w.r.t. $L(G_1)$ and $P$.  
    Since pre-normality of a language w.r.t. a plant is equivalent to pre-normality of its complement, we conclude that $L_n\setminus \Psi_f^{-2}$ is neither pre-normal w.r.t. $L(G_1)$ and $P$. By Proposition \ref{Pro:prodef}, $G_1$ is not $2$-prognosable.     
    \hfill$\blacksquare$
\end{example}

\subsection{Characterizations of diagnosability}

When a plant is not $k$-prognosable, we progressively reduce the value of $k$ to check whether prognosability holds for a smaller $k$. If the plant still fails to be prognosable even for $k=0$, the objective shifts to testing fault diagnosability after fault occurrences.
To this end, we focus on the characterization of diagnosability.
Let $L_{\Psi_f}^{\geq N}=\{st\in L_f \mid s\in \Psi(\Sigma_f),\, |P(t)|\geq N\}$ be the set of strings consisting of at least $N\in \mathbb{N}$ observation strings after a fault.

\begin{proposition}\label{Prop: diaPsi_f}
A live and convergent DFA $G$ is diagnosable w.r.t. $\Sigma_f$ and $P$ if and only if there exists $N\in \N$ such that  $P^{-1}P(L_{\Psi_f}^{\geq N})\cap L(G) \subseteq L_f$.
\end{proposition}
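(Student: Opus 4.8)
The plan is to prove both directions by relating the statement to Definition~\ref{Def: dia}, essentially unpacking the projection-inverse condition into the ``every look-alike string contains a fault'' phrasing. The key observation is that $L_{\Psi_f}^{\geq N}$ is exactly the set of strings $st$ where $s$ ends with a fault and $t$ has at least $N$ observations afterwards, which is precisely the family of strings quantified over in the diagnosability definition (after noting, via liveness, that every fault-ending string $s\in\Psi(\Sigma_f)$ admits arbitrarily long extensions, so the supply of such $st$ is nonempty).

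First I would prove the ``if'' direction. Assume there exists $N\in\N$ with $P^{-1}P(L_{\Psi_f}^{\geq N})\cap L(G)\subseteq L_f$. I claim $G$ is diagnosable with this same $N$. Take any $s\in\Psi(\Sigma_f)$ and any extension $t\in s\backslash L(G)$ with $|P(t)|\geq N$; then $st\in L_{\Psi_f}^{\geq N}$ by definition. For any $w\in P^{-1}P(st)\cap L(G)$, the hypothesis gives $w\in L_f$, i.e., $w$ contains a fault, which is exactly the conclusion required by Definition~\ref{Def: dia}. The only subtlety here is handling the length of the extension: the definition says ``length at least $n$'' (counting all events), whereas $L_{\Psi_f}^{\geq N}$ counts observable events; since the DFA is convergent there are no unobservable cycles, so bounding the number of observations $|P(t)|$ from above by some value forces $|t|$ to be bounded, hence a suitable $n$ can be chosen from $N$ and vice versa. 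I would make this translation between ``$|P(t)|\geq N$'' and ``$|t|\geq n$'' explicit early on, since it is reused in the converse.

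For the ``only if'' direction, assume $G$ is diagnosable, so there is $n$ witnessing Definition~\ref{Def: dia}. Set $N$ to be (a bound derived from) $n$ via convergence, so that $|P(t)|\geq N$ implies one can find a prefix $t_0$ of $t$ with $st_0$ still a fault-ending-prefixed string and $|t_0|\geq n$ while $|P(t_0)|$ is controlled. Take any $w'\in P^{-1}P(L_{\Psi_f}^{\geq N})\cap L(G)$; then $P(w')=P(st)$ for some $st\in L_{\Psi_f}^{\geq N}$, i.e., $s\in\Psi(\Sigma_f)$ and $|P(t)|\geq N$. Applying diagnosability to $s$ and the extension $t$ (whose length, after the convergence-based adjustment, exceeds $n$), every string in $P^{-1}P(st)\cap L(G)$ — in particular $w'$ — contains a fault, so $w'\in L_f$. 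Hence $P^{-1}P(L_{\Psi_f}^{\geq N})\cap L(G)\subseteq L_f$.

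The main obstacle I anticipate is the bookkeeping between ``number of observable events'' ($|P(t)|\geq N$, used in $L_{\Psi_f}^{\geq N}$) and ``total string length'' ($|t|\geq n$, used in Definition~\ref{Def: dia}). Assumption (A1) (liveness and convergence) is what makes this harmless: convergence bounds the number of consecutive unobservable events along any path by $|Q|$, so $|P(t)|$ and $|t|$ are within a multiplicative/additive factor of each other on reachable runs, and one can always extend a short-observation string (using liveness) to accumulate more observations. I would state this length-translation as a small preliminary remark and then the two inclusions become essentially a direct rewriting of quantifiers. A secondary point to be careful about is that $P^{-1}P(L_{\Psi_f}^{\geq N})$ is intersected with $L(G)$ on the left-hand side but the diagnosability definition already restricts $w$ to $L(G)$, so the two formulations match without extra work.
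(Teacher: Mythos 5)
Your proposal is correct and takes essentially the same route as the paper, which simply states that the result is straightforward from Definition~\ref{Def: dia} and the definition of $L_{\Psi_f}^{\geq N}$; your write-up supplies exactly that unpacking, including the translation between $|t|\geq n$ and $|P(t)|\geq N$ via convergence that the paper leaves implicit.
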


\begin{proof}
This proof is straightforward from Definition \ref{Def: dia} and the notion of $L_{\Psi_f}^{\geq N}$.
\end{proof}

By the definition of pre-normality and Proposition \ref{Prop: diaPsi_f}, since $L_{\Psi_f}^{\geq N}\subseteq L_f$, pre-normality of $L_{\Psi_f}^{\geq N}$ is a sufficient condition for diagnosability.
We show that diagnosability is equivalent to the pre-normality of another sublanguage of the faulty language. Let us denote the set of strings from the faulty language consisting of at least $N$ observable events by $L_f^{\geq N}=\{s\in L_f\mid |P(s)|\geq N\}$.  

\begin{proposition}
\label{Prop: ldia}
Assume that $P(\Psi(\Sigma_f))$ is finite. A live and convergent DFA $G$ is diagnosable w.r.t. $\Sigma_f$ and $P$ if and only if there exists $N\in \N$ such that $P^{-1}P(L_{f}^{\geq N})\cap L(G) \subseteq L_{f}^{\geq N}$.
\end{proposition}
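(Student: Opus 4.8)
The plan is to reduce Proposition~\ref{Prop: ldia} to the already established Proposition~\ref{Prop: diaPsi_f}, which characterizes diagnosability through the language $L_{\Psi_f}^{\geq N}$, by comparing $L_{\Psi_f}^{\geq N}$ with $L_f^{\geq N}$. The finiteness of $P(\Psi(\Sigma_f))$ supplies a uniform bound $m := \max\{\,|P(s)| \mid s\in\Psi(\Sigma_f)\,\}<\infty$ on the number of observable events occurring up to and including a first fault event.

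The first step is to record two containments between the two families of suffix languages. On the one hand, for every $N$, any $w=st\in L_{\Psi_f}^{\geq N}$ with $s\in\Psi(\Sigma_f)$ and $|P(t)|\geq N$ satisfies $|P(w)|=|P(s)|+|P(t)|\geq N$, so $L_{\Psi_f}^{\geq N}\subseteq L_f^{\geq N}$. On the other hand, for $N\geq m$, given $w\in L_f^{\geq N}$, factor $w=st$ at its first fault event, so that $s\in\Psi(\Sigma_f)$ and $|P(s)|\leq m$; then $|P(t)|=|P(w)|-|P(s)|\geq N-m$, whence $L_f^{\geq N}\subseteq L_{\Psi_f}^{\geq N-m}$.

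The $(\Leftarrow)$ direction is then immediate and does not use finiteness: assuming $P^{-1}P(L_f^{\geq N})\cap L(G)\subseteq L_f^{\geq N}$, monotonicity of $P^{-1}P$ together with the first containment gives $P^{-1}P(L_{\Psi_f}^{\geq N})\cap L(G)\subseteq L_f^{\geq N}\subseteq L_f$, and Proposition~\ref{Prop: diaPsi_f} with the same $N$ yields diagnosability. For $(\Rightarrow)$, if $G$ is diagnosable, Proposition~\ref{Prop: diaPsi_f} furnishes $N_0$ with $P^{-1}P(L_{\Psi_f}^{\geq N_0})\cap L(G)\subseteq L_f$; I would take $N:=N_0+m$. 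For $w\in P^{-1}P(L_f^{\geq N})\cap L(G)$, choose $v\in L_f^{\geq N}$ with $P(v)=P(w)$; the second containment gives $v\in L_{\Psi_f}^{\geq N_0}$, hence $w\in L_f$, and $|P(w)|=|P(v)|\geq N$ forces $w\in L_f^{\geq N}$, which is the desired inclusion.

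The only substantive point is the role of the hypothesis that $P(\Psi(\Sigma_f))$ is finite: it is precisely what makes the shift between the two families of languages uniform, via the single constant $m$. Without it a faulty string could accumulate arbitrarily many observable events before its first fault, so that membership in $L_f^{\geq N}$ would not imply membership in $L_{\Psi_f}^{\geq N'}$ for any large $N'$, breaking the transfer of the diagnosability condition in the $(\Rightarrow)$ direction. Everything else is routine manipulation of projections and prefixes, plus the invocation of Proposition~\ref{Prop: diaPsi_f}, whose standing assumptions (liveness and convergence) carry over from the present statement.
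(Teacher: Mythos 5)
Your proof is correct and follows essentially the same route as the paper: both use the finiteness of $P(\Psi(\Sigma_f))$ to bound the number of observations preceding the first fault, relate $L_f^{\geq N}$ to $L_{\Psi_f}^{\geq N'}$ via that bound, invoke Proposition~\ref{Prop: diaPsi_f}, and close with the observation that strings sharing a projection have the same number of observable events. Your write-up is in fact more precise than the paper's (which asserts an equality $L_f^{\geq N'+N_1}=L_{\Psi_f}^{\geq N_2}$ where only the two sandwiching inclusions actually hold), but the underlying argument is identical.
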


\begin{proof}
Let $N'\in \N$ be the largest number of observations before the fault occurs for the first time in $G$, which is finite since $P(\Psi(\Sigma_f))$ is finite. There are $N_1,N_2\in\N$ such that $P^{-1}P(L_{f}^{\geq N'+N_1})\cap L(G) \subseteq L_f$, where $L_{f}^{\geq N'+N_1}=L_{\Psi_f}^{\geq N_2}$. Equivalently, $P^{-1}P(L_{\Psi_f}^{\geq N_2})\cap L(G) \subseteq L_f$, which is by Proposition \ref{Prop: diaPsi_f} equivalent to diagnosability. 
Furthermore, we have $L_f^{\geq N}= L_f \parallel \Sigma_o^{\geq N}$, where $\Sigma_o^{\geq N}=\{t\in \Sigma_o^*\mid |P(t)|\geq N\}$. Since the strings with the same observation have the same number of observable events, it is equivalent to requiring $P^{-1}P(L_{f}^{\geq N})\cap L(G) \subseteq L_{f}^{\geq N}$. 
\end{proof}

By the definition of pre-normality and Proposition \ref{Prop: ldia}, assuming $P(\Psi(\Sigma_f))$ is finite, diagnosability is equivalent to pre-normality of the language $L_f^{\geq N}$  w.r.t. $L$ for some $N\in \N$. 
Even if the assumption does not hold, i.e., $P(\Psi(\Sigma_f))$ is not finite, by Proposition \ref{Prop: diaPsi_f}, a sufficient condition for diagnosability verification exists, i.e., a DFA $G$ is diagnosable w.r.t. $\Sigma_f$ and $P$ if there exists $N\in \N$ such that $P^{-1}P(L_{f}^{\geq N})\cap L(G) \subseteq L_{f}^{\geq N}$. In this way, we can enforce diagnosability by enforcing pre-normality, which allows us to compute diagnosable and pre-normal sublanguages.
  
Since pre-normality and normality coincide for prefix-closed languages and pre-normality is symmetric w.r.t. complement, if $P(\Psi(\Sigma_f))$ is finite, a live and convergent plant $G$ is diagnosable w.r.t. $P$ and $\Sigma_f$ if and only if there is $N\in \N$ such that the language $L_n\cup L_{f}^{< N}$ is normal, i.e., $P^{-1}P(L_n\cup L_{f}^{< N})\cap L \subseteq L_n\cup L_{f}^{< N}$, where $L_f^{< N}=\{s\in \overline{L_f}\mid |P(s)|< N\}$. 

The authors in \cite{Genc09pro} claim that prognosability implies diagnosability. In the following, thanks to Propositions \ref{Pro:prodef} and \ref{Prop: ldia}, we show that prognosability equals diagnosability if there exists certain $N\in\N$ such that $P^{-1}P(L_{f}^{\geq N})\cap L(G) \subseteq L_{f}^{\geq N}$.

\begin{proposition}\label{prop:pro=dia}
    \rm
Let $G$ be a live and convergent DFA and $N_s$ be the smallest number of observations in $\Psi(\Sigma_f)$. Then, prognosability is equivalent to diagnosability if $P^{-1}P(L_{f}^{\geq N_s+1})\cap L(G) \subseteq L_{f}^{\geq N_s+1}$. 
\end{proposition}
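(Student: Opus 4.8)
The plan is to handle the two implications separately. That ``prognosable $\Rightarrow$ diagnosable'' holds is already known~\cite{Genc09pro}, so the real work is the converse under the stated condition. Observe first that $P^{-1}P(L_{f}^{\geq N_s+1})\cap L(G)\subseteq L_{f}^{\geq N_s+1}$ is exactly pre-normality of $L_f^{\geq N_s+1}$ w.r.t.\ $L(G)$ and $P$, which by the sufficient condition for diagnosability noted after Proposition~\ref{Prop: ldia} already implies that $G$ is diagnosable. Hence it remains to show that this pre-normality condition forces $G$ to be prognosable; by Corollary~\ref{Corollary:equiv} it is enough to prove $0$-prognosability, and by Corollary~\ref{Corollary: prenormal and closed} (with $k=0$) this is equivalent to $L_n\setminus\Psi_f^{-0}$ being prefix-closed and normal w.r.t.\ $L(G)$ and $P$.

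I would proceed by contrapositive, using the characterization of non-$0$-prognosability from Lemma~\ref{Lem:non-k-pro}. If $G$ is not $0$-prognosable, there is a fault-ending string $ss'\in\Psi(\Sigma_f)$ with $|P(s')|=0$ and a non-faulty string $t\in P^{-1}P(s)$ admitting, for every $n\in\N$, a non-faulty extension $t'\in t\backslash L(G)$ of length at least $n$. Since $P(s')=\varepsilon$, we have $P(t)=P(s)=P(ss')$, and because $ss'\in\Psi(\Sigma_f)$ the definition of $N_s$ gives $|P(t)|=|P(ss')|\geq N_s$. Using liveness and convergence, $ss'$ extends by a single observable event to some $ss'w\in L_f$ with $|P(ss'w)|\geq N_s+1$, hence $ss'w\in L_f^{\geq N_s+1}$; and, again by convergence, the non-faulty continuations of $t$ carry arbitrarily many observable events, so $t$ has a non-faulty extension $tv\in L_n$ with $|P(tv)|\geq N_s+1$.

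The crux is to produce a non-faulty string that is a look-alike (w.r.t.\ $P$) of some string in $L_f^{\geq N_s+1}$: if $r\in L_n$ satisfies $P(r)=P(z)$ for some $z\in L_f^{\geq N_s+1}$, then the pre-normality hypothesis gives $r\in L_f^{\geq N_s+1}\subseteq L_f$, contradicting $r\in L_n$; pulling the contradiction back through Corollaries~\ref{Corollary: prenormal and closed} and~\ref{Corollary:equiv} yields prognosability, and with it the asserted equivalence. The intended $z$ is a faulty extension of $ss'$ and the intended $r$ a non-faulty extension of $t$, so one needs an observation word of length at least $N_s+1$ realized simultaneously by a non-faulty run out of $\delta(q_0,t)$ and by a faulty run out of $\delta(q_0,ss')$. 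This is where diagnosability (inherited from the hypothesis) and the minimality of $N_s$ are meant to be used: diagnosability, combined with the infinite non-faulty run out of $\delta(q_0,t)$ obtained by a pigeonhole argument in the finite automaton, should force the required common observation word, while minimality of $N_s$ ensures that a single observable step past the fault in $ss'$ already lies within the scope $L_f^{\geq N_s+1}$ of the hypothesis.

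The step I expect to be the main obstacle is exactly this alignment of observations. The relation $P(t)=P(s)$ only equates the observations seen up to the fault and says nothing directly about whether the non-faulty future of $t$ can be matched, observation by observation, against a faulty extension of $ss'$; reconciling these two futures --- or, equivalently, recasting the hypothesis as pre-normality \emph{and} extension-closure of $L_f\cup\Psi_f^{-0}$ as required by Corollary~\ref{Corollary: prenormal and closed}, and checking that the stated condition implies it --- is the technical heart of the argument, and the place where any additional structural assumption (such as finiteness of $P(\Psi(\Sigma_f))$) would have to be invoked.
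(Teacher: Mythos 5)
Your reduction of the statement is sound: under the hypothesis, diagnosability holds automatically (via Proposition~\ref{Prop: diaPsi_f}), so the claimed equivalence collapses to the single implication ``pre-normality of $L_f^{\geq N_s+1}$ implies prognosability,'' and you correctly identify that everything hinges on producing, from a witness of non-prognosability, a pair $r\in L_n$ and $z\in L_f^{\geq N_s+1}$ with $P(r)=P(z)$. The gap you flag at the end is not a technicality that a cleverer alignment argument would close: the implication you are trying to prove is false. Take $\Sigma_o=\{a,b,c\}$, $\Sigma_{uo}=\{f,\tau\}$, $\Sigma_f=\{f\}$, and the live, convergent DFA with runs $q_0\xrightarrow{a}q_1\xrightarrow{f}q_2\xrightarrow{b}q_3$ (self-loop $b$ at $q_3$) and $q_0\xrightarrow{\tau}q_4\xrightarrow{a}q_5\xrightarrow{c}q_6$ (self-loop $c$ at $q_6$). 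Here $\Psi(\Sigma_f)=\{af\}$, $N_s=1$, and $L_f^{\geq 2}=\{afb^n\mid n\geq 1\}$ is pre-normal w.r.t. $L(G)$ and $P$ because no non-faulty string projects to $ab^n$; moreover $G$ is diagnosable. Yet $\tau a\in P^{-1}P(a)$ is non-faulty and has arbitrarily long non-faulty extensions $\tau a c^n$, so $G$ is not prognosable. The faulty future (all $b$'s) and the non-faulty future (all $c$'s) never share an observation word reaching length $N_s+1$, which is precisely the alignment you could not force.

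This also explains why your attempt diverges from the paper's proof: the paper never establishes the direction you (correctly) identify as the content of the statement. Its argument assumes $G$ is prognosable, shows $P^{-1}P(\Psi_f^{-0})\cap L\subseteq\Psi_f^{-0}$, takes the union with the pre-normal language $L_n\setminus\Psi_f^{-0}$ supplied by Proposition~\ref{Pro:prodef}, and concludes that $L_n\cup L_f^{<N_s+1}$ --- equivalently, by symmetry of pre-normality, $L_f^{\geq N_s+1}$ --- is pre-normal. That is ``prognosable $\Rightarrow$ condition,'' the converse of what is needed, and the passage back fails because pre-normality of a union does not yield pre-normality of its parts; the counterexample above lives exactly in that failure. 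So your proposal is incomplete, but the missing step cannot be supplied as stated, and the paper's own proof does not supply it either.
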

\begin{proof}
It is shown that prognosability implies diagnosability in \cite{Genc09pro}. We only need to prove that diagnosability implies prognosability if the condition holds.
As is known, the prognosability is equivalent to $0$-prognosability by Corollary \ref{Corollary:equiv}.
According to Proposition \ref{Pro:prodef}, we have $P^{-1}P(\overline{L_n\setminus \Psi_f^{-0}})\cap L \subseteq L_n\setminus \Psi_f^{-0}$. 
Now, we show that $P^{-1}P(\Psi_f^{-0})\cap L \subseteq \Psi_f^{-0}$ holds if $G$ is prognosable. 
By contrapositive, there exists a string $s\in P^{-1}P(\Psi_f^{-0})$ but $s\notin \Psi_f^{-0}$. By the definition of $\Psi_f^{-0}$, $\Psi(\Sigma_f)\subseteq \Psi_f^{-0}$. It holds that $s\notin \Psi(\Sigma_f)$, i.e., $s\in L_n$. Due to $s\notin \Psi_f^{-0}$, for all $i\in \mathbb{N}$, there exists $t_i\in L/s$ with $|t_i|\geq i$ such that $st_i\in L_n$. By Definition~\ref{Def: pro}, $G$ is not prognosable, which leads to a contradiction.

Due to $\Psi_f^{-0}\nsubseteq L_n$, $(L_n\setminus \Psi_f^{-0}) \cup \Psi_f^{-0}=L_n\cup \Psi_f^{-0}$ holds.
Since the union of two pre-normal languages is also pre-normal w.r.t. $L$, we have $P^{-1}P(L_n\cup \Psi_f^{-0})\cup L \subseteq L_n\cup \Psi_f^{-0}$. By the definition of the languages $\Psi_f^{-0}$ and $L_{f}^{< N}$, we have $L_n\cup \Psi_f^{-0}=L_n\cup L_{f}^{< N_s+1}$. It holds that $P^{-1}P(L_n\cup L_{f}^{< N_s+1})\cap L \subseteq L_n\cup L_{f}^{< N_s+1}$, which completes the proof.
\end{proof}

Proposition \ref{prop:pro=dia} shows that verifying the prognosability of a DFA $G$ does not require computing language $\Psi_f^{-0}$ and checking whether $L_n \setminus \Psi_f^{-0}$ is prefix-closed and pre-normal, while it suffices to test the pre-normality of $L_f^{\geq N_s+1}$.
Further, based on Proposition~\ref{Prop: ldia}, to verify whether $G$ is diagnosable, it is sufficient to check $P^{-1}P(L_{f}^{\geq N})\cap L(G) \subseteq L_{f}^{\geq N}$ for some $N\in \N$. 

\begin{proposition}
\label{prop:max}
Let $G$ be a DFA recognizing $L$ and $Obs(G)$ be its observer, whose state cardinality is $N_o$.  $G$ is diagnosable w.r.t. projection $P: \Sigma^* \to \Sigma_o^*$ and the set of fault events $\Sigma_f$ if and only if $P^{-1}P(L_{f}^{\geq N_o})\cap L(G) \subseteq L_{f}^{\geq N_o}$.
\end{proposition}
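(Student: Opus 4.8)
The plan is to prove the two inclusions separately; the backward (``if'') direction is short, while the forward (``only if'') direction carries the real content.

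For the ``if'' direction, assume $P^{-1}P(L_f^{\geq N_o})\cap L(G)\subseteq L_f^{\geq N_o}$. I would deduce diagnosability from Proposition~\ref{Prop: diaPsi_f} by taking $N=N_o$ there. Indeed, any $st\in L_{\Psi_f}^{\geq N_o}$ has $|P(st)|\ge |P(t)|\ge N_o$ and lies in $L_f$, so $L_{\Psi_f}^{\geq N_o}\subseteq L_f^{\geq N_o}$; hence $P^{-1}P(L_{\Psi_f}^{\geq N_o})\cap L(G)\subseteq P^{-1}P(L_f^{\geq N_o})\cap L(G)\subseteq L_f^{\geq N_o}\subseteq L_f$, and diagnosability follows from Proposition~\ref{Prop: diaPsi_f}.

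For the ``only if'' direction I would argue the contrapositive: if the inclusion fails for $N=N_o$, then $G$ is not diagnosable. Failure means there are $w\in L(G)$ with $w\notin L_f^{\geq N_o}$ and $v\in L_f^{\geq N_o}$ with $P(w)=P(v)=:u$; since $|P(w)|=|P(v)|\ge N_o$, the only possibility for $w\notin L_f^{\geq N_o}$ is $w\in L_n$. Thus $u$ is an observation of length at least $N_o$ that possesses simultaneously a faulty preimage $v$ and a non-faulty preimage $w$. I would then pump: since the run of $Obs(G)$ on $u$ visits $|u|+1>N_o$ subsets, some observer state recurs, giving a factorization $u=u_1u_2u_3$ with $|u_2|\ge 1$ such that $Obs(G)$ reaches the same estimate after $u_1$ and after $u_1u_2$; iterating $u_2$ produces observations $u_1u_2^ku_3$ of unbounded length with unchanged estimate. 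From such a family of arbitrarily long ambiguous observations, a pigeonhole over the finitely many states at which a fault first occurs yields a single fault-ending string $s\in\Psi(\Sigma_f)$ admitting extensions $st\in L_f$ with $|t|$ (and hence $|P(t)|$, by convergence) arbitrarily large, together with non-faulty look-alikes $w'\in L_n$ satisfying $P(w')=P(st)$, which contradicts Definition~\ref{Def: dia}.

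The step I expect to be the main obstacle is making this pumping actually preserve \emph{both} a faulty and a non-faulty preimage of $u_1u_2^ku_3$. On the faulty side this is essentially automatic once a fault has occurred, using liveness of $G$ to extend a fault-ending prefix of $v$; on the non-faulty side one must verify that the observer loop on $u_2$ is witnessed by a genuine cyclic run through non-faulty states, which is cleanest to phrase on the refinement of $G$ that records whether a fault has already been seen (so that ``faulty'' and ``non-faulty'' become properties of the visited states). Reconciling this refinement with the threshold $N_o$ itself — rather than a larger, fault-aware state count — is the delicate point, and is where I expect the careful argument (and any additional bookkeeping) to be needed; everything else, including the ``if'' direction above, is routine.
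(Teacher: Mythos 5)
Your ``if'' direction is correct and is essentially the paper's: $L_{\Psi_f}^{\geq N_o}\subseteq L_f^{\geq N_o}$, so the assumed inclusion feeds directly into Proposition~\ref{Prop: diaPsi_f} (the paper routes the same observation through Proposition~\ref{Prop: ldia}). The ``only if'' direction, however, contains a genuine gap---the one you yourself flag---and it sits exactly where the content of the proposition lies. Pumping a repeated state of $Obs(G)$ along $u$ gives equal \emph{state estimates} after $u_1$ and $u_1u_2$, but equality of estimates does not give you a cycle through individual plant states: a state $q$ in the estimate before $u_2$ need not return to $q$ after $u_2$, so you cannot simply iterate $u_2$ to prolong the particular faulty run $v$, let alone the non-faulty run $w$. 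Moreover, as you note, the unlabeled observer does not record whether the runs realizing an estimate are faulty, so even if the estimate is preserved under pumping, the existence of a \emph{non-faulty} preimage of $u_1u_2^ku_3$ is not. Both defects are repaired only by passing to a fault-labeled refinement (a diagnoser- or verifier-like structure), whose state count exceeds $N_o$ in general, so your argument does not recover the threshold $N_o$. A further unaddressed point: from a single ambiguous pair with $|P(v)|\geq N_o$ you must extract arbitrarily many observations \emph{after the first fault}, and your pigeonhole ``over the finitely many states at which a fault first occurs'' does not bound the number of observations \emph{before} the fault; this step silently requires the finiteness of $P(\Psi(\Sigma_f))$ assumed in Proposition~\ref{Prop: ldia} and tacitly inherited here.

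For comparison, the paper takes a different route for the hard direction: it first proves a monotonicity lemma---if $P^{-1}P(L_{f}^{\geq N})\cap L \subseteq L_{f}^{\geq N}$ then the same inclusion holds for $N+1$---so that it suffices to exhibit a single witness $N\leq N_o$, and it then invokes the claim that a diagnosable DFA diagnoses every fault within $N_o$ observations. That delay bound is precisely what your pumping argument is attempting to establish from scratch (and the paper itself asserts it rather than deriving it in detail), so the obstacle you hit is real; but your route through the plain observer cannot close it, and you should either prove the monotonicity lemma together with the $N_o$ delay bound on the appropriate labeled structure, or restructure the argument along the paper's lines.
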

\begin{proof}
It is sufficient to show that there exists $ N\in \N$ such that  $P^{-1}P(L_{f}^{\geq N})\cap L(G) \subseteq L_{f}^{\geq N}$ if and only if $P^{-1}P(L_{f}^{\geq N_o})\cap L(G) \subseteq L_{f}^{\geq N_o}$ by Proposition  \ref{Prop: ldia}.
We first show the following monotonicity property, i.e., 
 if $P^{-1}P(L_{f}^{\geq N})\cap L \subseteq L_{f}^{\geq N}$, then $P^{-1}P(L_{f}^{\geq N+1})\cap L \subseteq L_{f}^{\geq N+1}$.
 %Let $P^{-1}P(L_f^{\geq N}) \cap L \subseteq L_f^{\geq N}$.
 Since $\Sigma_o^{\geq N+1}=\Sigma_o^{\geq N} \cap \Sigma_o^{\geq N+1}$, we have
\begin{align*}
&  P^{-1}P(L_f^{\geq N+1}) \cap L=P^{-1}P(L_f\cap P^{-1}(\Sigma_o^{\geq N+1}))\cap L\\
&=  P^{-1}P(L_f\cap P^{-1}(\Sigma_o^{\geq N} \cap \Sigma_o^{\geq N+1})) \cap L\\
&\subseteq   P^{-1}P(L_f\cap P^{-1}(\Sigma_o^{\geq N}))\cap L \cap
P^{-1}P(P^{-1}(\Sigma_o^{\geq N+1}))\\
&\subseteq L_f\cap P^{-1}(\Sigma_o^{\geq N})\cap
P^{-1}(\Sigma_o^{\geq N+1})=L_f^{\geq N+1}.
\end{align*}

This means that the pre-normality of $L_f^{\geq N}$ is stronger than that of $L_f^{\geq N+1}$ for all $N\geq 0$. Intuitively, since we deal with finite automata and natural projections as observations (with finite state observers), it should not be surprising that we cannot weaken the pre-normality of these languages indefinitely in this manner, but it will be useless to consider $N$ from some value on.
Now, the size $N_o$ of the observer of $G$ is used to show that $N_o$ is the right value, meaning that it is useless to consider pre-normality of $L_f^{\geq N}$ for $N> N_o$. 

Consider languages $L_n\cup L_f^{< N}$ for different values of $N$.
According to the definition of pre-normality,  $L_f^{\geq N}$ is pre-normal w.r.t. $L$ and $P$ if and only if there do not exist two strings $w_1,w_2\in L(G)$ such that $P(w_1)=P(w_2)$, $w_1\in L_n\cup L_f^{< N}$ and $w_2\in L_f^{\geq N}$. Otherwise, for all $w_2\in L_f^{\geq N}$ and for all $w_1 \in L(G)$ with $P(w_1)=P(w_2)$, we have $w_1\in L_f^{\geq N}$, which is equivalent to diagnosability after $N$ observations.
Notice that according to Proposition~\ref{Prop: ldia}, $L_f^{\geq N}$ is pre-normal w.r.t. $L$ iff $G$ is diagnosable in $N$ observable steps. However, the number of observations needed to diagnose a fault in a diagnosable DFA is upper bounded by $N_o$, i.e., the size of the observer. This ends the proof.
\end{proof}

Propositions \ref{Prop: ldia} and \ref{prop:max} imply that, to verify whether a DFA $G$ is diagnosable w.r.t. a fault $\Sigma_f$, it suffices to check whether the language $L_f^{\geq N_o}$ is pre-normal w.r.t. $L(G)$, where $N_o$ is the number of observer states.
To test the pre-normality of $L_f^{\geq N_o}$, we need to mark the states reached by firing $s'\in L_n\cup L_f^{< N_o}$ and unmark the states reached by firing $s\in L_f^{\geq N_o}$. However, there may exist a state that is both marked and non-marked. To address this issue, we introduce a verifier-based approach to check the pre-normality of $L_f^{\geq N}$.

Building upon the verifier in \cite{masopust2019critical}, we introduce a slight modification, i.e., the transition labels are changed from single events to pairs of seemingly identical events. Additionally, to reduce the computational burden, certain symmetric sequences are avoided. 
%The modified verifier is formally defined as follows. 
In fact, the verifier defined in the following can be regarded as a sub-automaton of the verifier in \cite{masopust2019critical}.

\begin{definition}[Verifier]\label{Def:verifier}
    \rm
    Given a DFA $G=(Q,\Sigma, \delta, q_0)$, its verifier, denoted by $G|||G$, is a DFA $G|||G$ $=(V,\Sigma_{v},\delta_{v}, V_0)$, where $V\subseteq Q\times Q$ is the set of states,  $\Sigma_{v}\subseteq(\Sigma\cup\{\varepsilon\}\setminus\Sigma_f) \times (\Sigma\cup\{\varepsilon\})$ is the event set, $V_0=(q_0,q_0)$ is the initial verifier state, and $\delta_{v}\subseteq V\times \Sigma_v\times V$ is the transition function such that for all $q,q'\in Q$ $\delta_{v}((q,q'),(\alpha,\alpha'))=(\delta(q,\alpha),\delta(q',\alpha'))$ if $\alpha=\alpha'\in \Sigma_o$; $\delta_{v}((q,q'),$ $(\alpha,\varepsilon))=(\delta(q,\alpha), q')$ if $\alpha\in \Sigma_{uo}\setminus \Sigma_f$; $\delta_{v}((q,q'),$ $(\varepsilon,\alpha'))=(q, \delta(q',\alpha'))$ if $\alpha'\in \Sigma_{uo}$.
    \hfill$\Diamond$
\end{definition}

Let $Q_m\subseteq Q$ denote the set of marked states of $G$.
Given a state  $(q,q')=\delta_v((q_0,q_0), (s_1,s_2)) \in V$, where $s_1,s_2\in L$, let $q'\in Q_m$ if $s_2\in L_f^{\geq N}$. Since $s_1\in L_n$, state $q$ is always outside $Q_m$. A state $(q,q')\in V$ is said to be uncertain if $q'\in Q_m$. The set of all uncertain states is defined as $V_c^N=\{(q,q')\in V \mid \exists s\in L_n, \exists s'\in L_f^{\geq N}: \delta_v(V_0, (s, s'))=(q,q')\}$.
%Initially, any two states $q,q'\in Q$ with $(q,q')\in V$ are non-marked, i.e., $q,q'\in Q\setminus Q_m$. 

\begin{lemma}\label{lemma:verifier}
Let $G$ be a DFA recognizing $L$, and $G|||G$ be its verifier. Language $L_n\cup L_f^{< N}$ is not pre-normal w.r.t. $L$ and natural projection $P$ if and only if there exists an uncertain state $(q,q')\in V_c^N$.
\end{lemma}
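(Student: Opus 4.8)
The plan is to unfold the definition of pre-normality and translate the failure condition into a statement about two strings with the same projection that are separated by the marking induced by $L_f^{\geq N}$. Recall that $L_n \cup L_f^{<N}$ is pre-normal w.r.t.\ $L$ and $P$ iff $P^{-1}P(L_n \cup L_f^{<N}) \cap L \subseteq L_n \cup L_f^{<N}$. Equivalently, by taking complements inside $L$ (and using that pre-normality is symmetric w.r.t.\ complement, as recalled in the preliminaries), $L_n \cup L_f^{<N}$ fails to be pre-normal iff there exist strings $w_1 \in L_n \cup L_f^{<N}$ and $w_2 \in L_f^{\geq N}$ with $P(w_1) = P(w_2)$. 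This is exactly the characterization already used in the proof of Proposition~\ref{prop:max}, so I would state it as the starting point.

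The core of the argument is then to show that such a witnessing pair $(w_1, w_2)$ exists in $L$ if and only if the verifier $G|||G$ reaches an uncertain state $(q,q') \in V_c^N$, i.e.\ one with $q' \in Q_m$. First I would prove the forward direction: given $w_1 \in L_n \cup L_f^{<N}$ and $w_2 \in L_f^{\geq N}$ with $P(w_1)=P(w_2)$, construct an interleaving run of the verifier on a pair of strings $(s_1,s_2)$ that ``executes'' $w_1$ in the first coordinate and $w_2$ in the second. This is the standard synchronization-on-observable-events construction: since $P(w_1) = P(w_2)$, the two strings can be written as interleavings of their common observable backbone with their respective unobservable segments, and the verifier transition rules of Definition~\ref{Def:verifier} allow exactly (a) synchronized moves on a shared observable event $\alpha=\alpha'\in\Sigma_o$, and (b) asynchronous moves on unobservable events in either coordinate (with the restriction that the first coordinate never fires a fault, which is fine because $w_1 \in L_n \cup L_f^{<N} \subseteq L_n$ when we additionally arrange $w_1$ to be fault-free — here I should be careful, see below). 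The resulting verifier state is $(q,q') = (\delta(q_0,w_1),\delta(q_0,w_2))$, and since $w_2 \in L_f^{\geq N}$ we get $q' \in Q_m$, so $(q,q') \in V_c^N$. Conversely, if $(q,q') \in V_c^N$, then by definition there are $s \in L_n$ and $s' \in L_f^{\geq N}$ with $\delta_v(V_0,(s,s')) = (q,q')$; projecting the verifier run onto each coordinate yields $s,s' \in L$ with $P(s) = P(s')$ (the synchronization discipline guarantees the observable projections agree), $s \in L_n \subseteq L_n \cup L_f^{<N}$, and $s' \in L_f^{\geq N}$, which is precisely the failure of pre-normality.

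The main obstacle I anticipate is the bookkeeping around which strings the first verifier coordinate is allowed to carry. The verifier's event set forbids fault events in the first coordinate, so the first coordinate always stays in $L_n$, whereas the failure of pre-normality of $L_n \cup L_f^{<N}$ a~priori allows $w_1 \in L_f^{<N}$ (a faulty string with few observations). I would resolve this by the same trick used implicitly in Proposition~\ref{prop:max}: if $P(w_1) = P(w_2)$ with $w_2 \in L_f^{\geq N}$ and $w_1 \in L_f^{<N}$, then since strings with the same observation have the same number of observable events (a fact the paper uses in the proof of Proposition~\ref{Prop: ldia}), we would have $|P(w_1)| \geq N$ and $|P(w_1)| < N$, a contradiction; hence any witnessing $w_1$ must actually lie in $L_n$. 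This closes the gap and makes the correspondence with $V_c^N$ exact. The remaining steps — verifying that the interleaving run is well-defined and that projections of verifier runs recover the component strings — are routine inductions on string length using Definition~\ref{Def:verifier}, and I would only sketch them.
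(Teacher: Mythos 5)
Your proposal is correct and follows essentially the same route as the paper's proof: both unfold pre-normality into the existence of a pair $w_1\in L_n\cup L_f^{<N}$, $w_2\in L_f^{\geq N}$ with $P(w_1)=P(w_2)$, both use the equal-observation-count argument to force $w_1\in L_n$ (so the fault-free first coordinate of the verifier is not a restriction), and both then pass between such pairs and uncertain verifier states via Definition~\ref{Def:verifier}. Your treatment of the interleaving run is somewhat more explicit than the paper's, but the underlying argument is identical.
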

\begin{proof}
(If) Let $(q,q')\in V_c^N$ be an uncertain state in $G|||G$. There exist two sequences $s_1,s_2\in L$ with $P(s_1)=P(s_2)$ such that $s_1\in L_n$ and $s_2\in L_f^{\geq N}$. Due to the definition of pre-normality, $L_n\cup L_f^{< N}$ is not pre-normal w.r.t. $L$ and $P$.

(Only if)  
By the definition of pre-normality, if $L_n\cup L_f^{< N}$ is not pre-normal w.r.t. $L$ and $P$, there exist two sequences $w_1,w_2\in L(G)$ with $P(w_1)=P(w_2)$, $w_1\in L_n\cup L_f^{< N}$ and $w_2\in L_f^{\geq N}$. Since $P(w_1)=P(w_2)$, i.e., $w_1$ and $w_2$ have the same number of observations, we necessarily have $w_1\in L_n$. Due to Definition \ref{Def:verifier}, there is a verifier state $(q,q')\in V_c^N$, where $q=\delta(q_0,w_1)$ and $q'=\delta(q_0,w_2)$. We conclude that $(q,q')\in V_c^N$ is an uncertain state in $G_n|||G_n^N$.
\end{proof}

\begin{example} \label{Exa:diaN_o}
    Given a DFA $G_2=(Q_2, \Sigma_2, \delta_2, q_{0,2})$ depicted in Fig. \ref{G_2}(a), the event set is $\Sigma_2=\{a,b,f_2,\lambda\}$ with fault event set $\Sigma_{2,f}=\{f_2\}$, where $a$ and $b$ are observable, and $\lambda$ is non-faulty unobservable. Then $L_n=(\lambda ab^n)^*\cup \{a\}$ and $L_f=L\setminus L_n$. Since there exist a fault-ending string $af_2\in \Psi(\Sigma_f)$ and a non-faulty string $\lambda a\in L(G_2)$ with $|P(af_2)|=|P(\lambda a)|=0$ such that for all $m\in \N$, there is a non-faulty extension $b^m\in \lambda a \backslash L(G_2)$, $G_2$ is not $0$-prognosable w.r.t. $\Sigma_{2,f}$ and $P$ by Lemma \ref{Lem:non-k-pro}. Due to Corollary \ref{Corollary:equiv}, $G_2$ is not prognosable w.r.t. $\Sigma_{2,f}$ and $P$. We then test whether it is diagnosable.
    Since for all $s\in L_f$ with $|P(s)|\geq 3$, there does not exist a string $s'\in L_n$ such that $P(s)=P(s')$, $G_2$ is diagnosable w.r.t. $\Sigma_{2,f}$ and $P$ by Definition~\ref{Def: dia}. 
    The observer and verifier of $G_2$, i.e., $Obs(G_2)$ and $G_2|||G_2$ are shown in Figs. \ref{G_2}(b) and \ref{G_2}(c), respectively. 
      
We test $N=2$ first. By the definition of $V_c^N$, the set of uncertain verifier states is $V_c^{4}=\{(4,0),(4,3)\}$, which are portrayed in red in Fig. \ref{G_2}(c). According to Lemma \ref{lemma:verifier},  $L_n\cup L_f^{< 2}$ is not pre-normal w.r.t. $L(G_2)$ and $P$.
    Then we consider $N=N_o=4$.
    The set of uncertain verifier states is $V_c^{4}=\emptyset$. Due to Lemma \ref{lemma:verifier}, $L_f^{\geq 4}$ is pre-normal w.r.t. $G_2$. %By the definition of pre-normality, language $L_n\cup L_f^{< 4}$ is also pre-normal w.r.t. $L(G_2)$ and $P$. 
    By Propositions~\ref{Prop: ldia} and \ref{prop:max}, $G_2$ is diagnosable w.r.t. $P$ and $\Sigma_{2,f}$.  
    \hfill$\blacksquare$
\end{example}

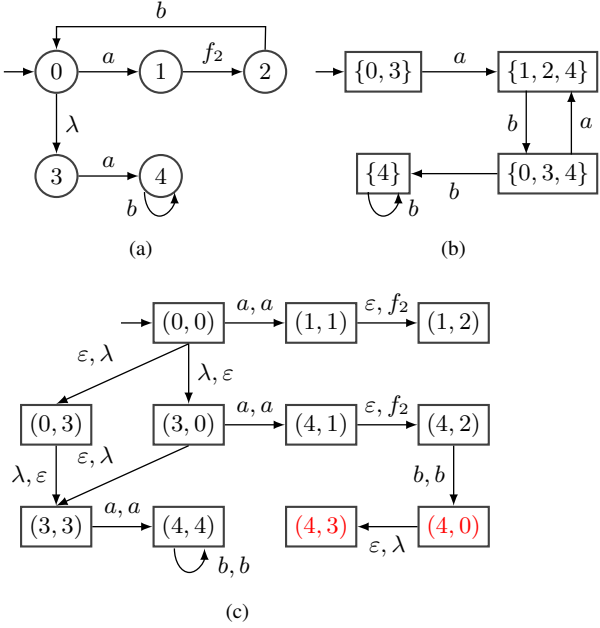
\begin{figure} [htbp]
\subfigure[]
	{
	\centering
	\begin{tikzpicture}{every node}=[font= \fontsize{9pt}{10pt},scale=1]
	\node (c0) [circle,thick,draw=black!75] {$0$};
	%第一行		
	\node (c1) [circle,thick,draw=black!75,right=0.8cm of c0] {$1$};
	\node (c2) [circle,thick,draw=black!75,right=0.8cm of c1] {$2$};
	\node (c3) [circle,thick,draw=black!75,below=0.8cm of c0] {$3$};
	\node (c4) [circle,thick,draw=black!75,right=0.8cm of c3] {$4$};

        \draw[-latex,line width=0.5pt] (-0.7,0) -- (c0.west);
	\draw[-latex,line width=0.5pt] (c0.east) --node [auto]{$a$} (c1.west);
	%\draw[-latex,line width=0.5pt] (c1.south west) --node [auto,swap]{$b$} (c0.south east);
	\draw[-latex,line width=0.5pt] (c1.east) --node [auto]{$f_2$} (c2.west);
	%\draw[-latex,line width=0.5pt] (c0.south) --node [auto]{$b$} (c3.north);
	\draw[-latex,line width=0.5pt] (c0.south) --node [auto]{$\lambda$} (c3.north);
	\draw[-latex,line width=0.5pt] (c3.east) --node [auto]{$a$} (c4.west);
    \draw[-latex,line width=0.5pt] (c2.north) -- (2.78,0.6)--node [auto,swap]{$b$}(0,0.6)-- (c0.north);
	%环
	\draw[-latex,line width=0.5pt] (1.17,-1.6)arc[start angle=-180, end angle=0, x radius=.2, y radius=.32];
	\node[black,below] at (1,-1.58) {{$b$}};
	%\draw[-latex,line width=0.5pt] (2.8,0.3)arc[start angle=0, end angle=180, x radius=1.4, y radius=.6];
	%\node[black,below] at (1.5,0.9) {{$b$}};
	\end{tikzpicture}	}
\subfigure[]
{
	\centering
      \begin{tikzpicture}{every node}=[font= \fontsize{9pt}{10pt},scale=1]
		\node (r0) [draw,thick,draw=black!75] {$\{0,3\}$};
		\node (r1) [draw,thick,draw=black!75,right=1cm of r0] {$\{1,2,4\}$};
        \node (r2) [draw,thick,draw=black!75,below=0.8cm of r1] {$\{0,3,4\}$};
        \node (r3) [draw,thick,draw=black!75,below=0.8cm of r0] {$\{4\}$};

    \draw[-latex,line width=0.5pt] (-0.9,0) -- (r0.west);
    \draw[-latex,line width=0.5pt] (r0.east) --node [auto]{$a$} (r1.west);
    \draw[-latex,line width=0.5pt] (r2.west) --node [auto]{$b$} (r3.east);
    \draw[-latex,line width=0.5pt] (1.9,-0.25) --node [auto,swap]{$b$} (1.9,-1.1);
    \draw[-latex,line width=0.5pt] (2.5,-1.1) --node [auto,swap]{$a$} (2.5,-0.25);
	%环
	\draw[-latex,line width=0.5pt] (-0.2,-1.6)arc[start angle=-180, end angle=0, x radius=.2, y radius=.32];
	\node[black,below] at (.4,-1.58) {{$b$}};
\end{tikzpicture}	}

\subfigure[]
{
    \centering
    \begin{tikzpicture}{every node}=[font= \fontsize{9pt}{10pt},scale=1]
    \node (r0) [draw,thick,draw=black!75] {$(0,0)$};
    \node (r1) [draw,thick,draw=black!75,right=0.8cm of r0] {$(1,1)$};
    \node (r2) [draw,thick,draw=black!75,right=0.8cm of r1] {$(1,2)$};
    \node (r3) [draw,thick,draw=black!75,below=0.8cm of r0] {$(3,0)$};
    \node (r4) [draw,thick,draw=black!75,left=0.8cm of r3] {$(0,3)$};
    \node (r5) [draw,thick,draw=black!75,below=0.8cm of r1] {$(4,1)$};
    \node (r6) [draw,thick,draw=black!75,below=0.8cm of r2] {$(4,2)$};
    \node (r7) [draw,thick,draw=black!75,below=0.8cm of r3] {$(4,4)$};
    \node (r8) [draw,thick,draw=black!75,below=0.8cm of r4] {$(3,3)$};
    \node (r9) [draw,thick,draw=black!75,below=0.8cm of r5] {\textcolor{red}{$(4,3)$}};
    \node (r10) [draw,thick,draw=black!75,below=0.8cm of r6] {\textcolor{red}{$(4,0)$}};

    \draw[-latex,line width=0.5pt] (-0.9,0) -- (r0.west);
    \draw[-latex,line width=0.5pt] (r0.east) --node [auto]{$a,a$} (r1.west);
    \draw[-latex,line width=0.5pt] (r1.east) --node [auto]{$\varepsilon,f_2$} (r2.west);
    \draw[-latex,line width=0.5pt] (r3.east) --node [auto]{$a,a$} (r5.west);
    \draw[-latex,line width=0.5pt] (r5.east) --node [auto]{$\varepsilon,f_2$} (r6.west);
    \draw[-latex,line width=0.5pt] (r8.east) --node [auto]{$a,a$} (r7.west);
    \draw[-latex,line width=0.5pt] (r10.west) --node [auto]{$\varepsilon,\lambda$} (r9.east);
    \draw[-latex,line width=0.5pt] (r0.south) --node [auto]{$\lambda,\varepsilon$} (r3.north);
    \draw[-latex,line width=0.5pt] (r0.south) --node [auto,swap]{$\varepsilon,\lambda$} (r4.north);
    \draw[-latex,line width=0.5pt] (r4.south) --node [auto,swap]{$\lambda,\varepsilon$} (r8.north);
    \draw[-latex,line width=0.5pt] (r3.south) --node [auto,swap]{$\varepsilon,\lambda$} (r8.north);
    \draw[-latex,line width=0.5pt] (r6.south) --node [auto,swap]{$b,b$} (r10.north);

	%环
	\draw[-latex,line width=0.5pt] (-0.19,-3)arc[start angle=-180, end angle=0, x radius=.2, y radius=.32];
	\node[black,below] at (0.6,-3) {{$b,b$}};
\end{tikzpicture}}
    \caption{(a) A DFA $G_2$, (b) its observer $Obs(G_2)$, and (c) its verifier $G_2|||G_2$.}
    \label{G_2}
\end{figure}

\section{Existence of supremal prognosable/diagnosable and normal languages}\label{Sec: Existence}

In this section, we consider the case where $G$ fails to be $k$-prognosable (resp. diagnosable), and we are looking for the largest possible sublanguages of the plant that satisfy these properties. 
We will show that supremal $k$-prognosable (resp. diagnosable) and normal sublanguages always exist. 
We need the following result stating the transitivity of pre-normality.

\begin{lemma}\label{trans_LCO}
    Let $L,L',$ and $M$ be languages such that  $L'\subseteq L\subseteq M\subseteq \Sigma^*$, $L'$ is pre-normal w.r.t. $L$ and $P$, and $L$ is pre-normal w.r.t. $M$ and $P$. Then, $L'$ is pre-normal w.r.t. $M$ and $P$.
\end{lemma}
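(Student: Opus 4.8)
The plan is to verify the defining equation of pre-normality, $L' = P^{-1}P(L')\cap M$, by proving the two inclusions separately. One direction, $L'\subseteq P^{-1}P(L')\cap M$, is immediate: every language is contained in the inverse projection of its own projection, and $L'\subseteq M$ by hypothesis. So the work is entirely in the reverse inclusion $P^{-1}P(L')\cap M\subseteq L'$.

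For the reverse inclusion, I would take an arbitrary string $w\in P^{-1}P(L')\cap M$ and ``push it down'' through the two pre-normality hypotheses in turn. First, since $L'\subseteq L$ we have $P(L')\subseteq P(L)$, hence $w\in P^{-1}P(L')\subseteq P^{-1}P(L)$; combined with $w\in M$ this gives $w\in P^{-1}P(L)\cap M$, which equals $L$ by pre-normality of $L$ w.r.t. $M$ and $P$. So $w\in L$. Now $w\in P^{-1}P(L')\cap L$, which equals $L'$ by pre-normality of $L'$ w.r.t. $L$ and $P$. Hence $w\in L'$, completing the inclusion and the proof.

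The only ``idea'' in the argument is the monotonicity step $P(L')\subseteq P(L)$, which is what lets the membership $w\in P^{-1}P(L')$ be re-used at the coarser level $M$ to conclude $w\in L$ before descending to $L'$; everything else is elementary set manipulation. I do not anticipate a genuine obstacle here — the statement is a routine transitivity lemma and the proof is a two-line chain of inclusions — so the main care required is simply bookkeeping: keeping straight which pre-normality identity is invoked at which stage, and using the hypotheses $L'\subseteq L$ and $L\subseteq M$ in the right places.
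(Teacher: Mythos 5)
Your proof is correct and follows essentially the same route as the paper's: both arguments first use the monotonicity $P^{-1}P(L')\subseteq P^{-1}P(L)$ together with pre-normality of $L$ w.r.t.\ $M$ to conclude $P^{-1}P(L')\cap M\subseteq L$, and then apply pre-normality of $L'$ w.r.t.\ $L$ to descend to $L'$. The only difference is presentational — you argue elementwise while the paper chains set identities — so there is nothing to change.
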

\begin{proof}
    We know that $P^{-1}P(L')\cap L = L'$ and $P^{-1}P(L)\cap M = L$. Then, $P^{-1}P(L')\cap M \subseteq P^{-1}P(L)\cap M = L$. This implies that $P^{-1}P(L')\cap M = (P^{-1}P(L')\cap M) \cap L = (P^{-1}P(L')\cap L) \cap M = L' \cap M = L'$, i.e. $L'$ is pre-normal w.r.t. $M$ and $P$, which completes the proof.
 \end{proof}

 \begin{proposition}\label{lemma:punion}
Let $L(G)=L_f \cup L_n$ be the language recognized by DFA  $G$ with the corresponding faulty and non-faulty languages.
Let $L_i\subseteq L(G)$, where $i\in I$, be a family of sublanguages that are $k$-prognosable and normal w.r.t. $L$ and $P$. Then their union $\cup_{i\in I} L_i$ is also  $k$-prognosable and normal w.r.t. $L$ and $P$.  
\end{proposition}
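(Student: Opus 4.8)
The plan is to establish the two closure properties separately and then observe how they combine. First, normality of arbitrary unions is already known---normality is preserved under language union---so the set $\bigcup_{i\in I} L_i$ is normal w.r.t.\ $L$ and $P$ with no further work. The real content is showing $k$-prognosability is preserved. By Corollary~\ref{Corollary: prenormal and closed} (its ``for its complements'' form, or equivalently Proposition~\ref{Pro:prodef} applied to complements), a sublanguage $L_i$ being $k$-prognosable and normal w.r.t.\ $L$ translates into a statement about the language $L_f^i \cup (\Psi_f^{-k})^i$ being pre-normal and extension-closed w.r.t.\ $L$, where $L_f^i$, $L_n^i$ are the faulty/non-faulty parts of $L_i$. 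So I would first reduce the claim to: pre-normality and extension-closedness (equivalently, prefix-closedness of the complement within $L_i$) of the relevant sublanguages are preserved under the appropriate set operations as $i$ ranges over $I$.

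The main step is to track what happens to $\Psi_f^{-k}$ when we pass to a sublanguage. For each $L_i$ we have its own faulty language $L_f^i = L_f \cap L_i$ (assuming, as is standard, that the $L_i$ are extension-closed w.r.t.\ $L$ so faults propagate correctly) and its own fault-ending set $\Psi^i(\Sigma_f)$, yielding $(\Psi_f^{-k})^i$. The key observation to nail down is that $\Psi^i(\Sigma_f) = \Psi(\Sigma_f)\cap \overline{L_i}$ and hence $(\Psi_f^{-k})^i = \Psi_f^{-k}\cap \overline{L_i}$ (using the formula \eqref{(1111)} for $\Psi_f^{-k}$ together with the fact that $L_i$ is $P$-normal, which lets the inverse-projection/quotient operations commute with intersecting by $\overline{L_i}$). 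Then $L_n^i \setminus (\Psi_f^{-k})^i = (L_n \setminus \Psi_f^{-k})\cap L_i$, so that the family $\{L_n^i\setminus(\Psi_f^{-k})^i\}_{i\in I}$ is just $\{(L_n\setminus\Psi_f^{-k})\cap L_i\}_{i\in I}$, and its union is $(L_n\setminus\Psi_f^{-k})\cap \bigcup_i L_i$. From here, prefix-closedness of each $L_n^i\setminus(\Psi_f^{-k})^i$ passes to the union because a union of prefix-closed languages is prefix-closed, and normality (= pre-normality for prefix-closed languages) w.r.t.\ $L$ passes to the union because normality is union-closed; by Corollary~\ref{Corollary: prenormal and closed} applied to $\bigcup_i L_i$, this is exactly $k$-prognosability of $\bigcup_i L_i$.

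I expect the main obstacle to be the bookkeeping in the step $(\Psi_f^{-k})^i = \Psi_f^{-k}\cap \overline{L_i}$: one must be careful that the ``first fault'' structure is respected, i.e.\ that a string which is fault-ending in $L_i$ is fault-ending in $L$ and conversely that the prefixes taken in the definition of $\Psi_f^{-k}$ stay inside $\overline{L_i}$. This is where the hypothesis that each $L_i$ is normal (so $\overline{L_i} = P^{-1}P(\overline{L_i})\cap L$) and extension-closed gets used essentially, and it is the place where a careless argument could break. An alternative, possibly cleaner route is to bypass the $\Psi_f^{-k}$ identity and argue directly from Lemma~\ref{Lem:non-k-pro}: if $\bigcup_i L_i$ were not $k$-prognosable, the witnessing strings $ss'$ and the non-faulty $t$ with its arbitrarily long non-faulty extensions, together with the non-faulty extensions required, all lie in $\bigcup_i L_i$; since there are only finitely many ways these finitely-described witnesses (plus one arbitrarily long extension) can be distributed, and each $L_i$ is extension-closed, one can push the whole configuration into a single $L_i$, contradicting its $k$-prognosability. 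I would present the $\Psi_f^{-k}$-based argument as the main line since it reuses the paper's established machinery (Proposition~\ref{Pro:prodef}, Corollary~\ref{Corollary: prenormal and closed}, Lemma~\ref{Lemma:extension-closed}) most directly, and fall back on the direct Lemma~\ref{Lem:non-k-pro} argument only if the commutation identity turns out to need extra hypotheses not available here.
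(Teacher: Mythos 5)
Your plan follows essentially the same route as the paper: reduce $k$-prognosability to pre-normality via Proposition~\ref{Pro:prodef}/Corollary~\ref{Corollary: prenormal and closed}, use the identity $L_{i,n}\setminus\Psi_{i,f}^{-k}=(L_n\setminus\Psi_f^{-k})\cap L_i$ so that the union of the restricted non-faulty languages is $(L_n\setminus\Psi_f^{-k})\cap\bigcup_i L_i$, and then pass closure properties to the union. The paper asserts the same identity (writing $L_{i,n\setminus f}^{-k}=L_i\cap L_{n\setminus f}^{-k}$), so you have correctly located the load-bearing decomposition, and your observation that normality of $L_i$ is what makes it work is the right instinct.

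The gap is in your last step, where you write that ``normality w.r.t.\ $L$ passes to the union because normality is union-closed.'' Union-closedness of (pre-)normality only applies to a family of languages that are all pre-normal w.r.t.\ the \emph{same} reference language. What $k$-prognosability of $L_i$ gives you is pre-normality of $L_{i,n}\setminus\Psi_{i,f}^{-k}$ w.r.t.\ $L_i$, and what you need at the end is pre-normality w.r.t.\ $\bigcup_i L_i$; neither of these is $L$, and the reference languages vary with $i$. Concretely, the danger is a cross term: a string in $L_j$ that looks like a string in $L_{i,n}\setminus\Psi_{i,f}^{-k}$ for $j\neq i$, which is not excluded by pre-normality within $L_i$ alone. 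This is exactly the term $P^{-1}P\bigl(L_i\cap L_{n\setminus f}^{-k}\bigr)\cap L_j$ that the paper isolates and bounds. The repair uses machinery you did not invoke: first lift pre-normality from $L_i$ to $L$ by transitivity (Lemma~\ref{trans_LCO}), using the hypothesis that $L_i$ is normal (hence, being prefix-closed, pre-normal) w.r.t.\ $L$; then a union of languages pre-normal w.r.t.\ the common reference $L$ is pre-normal w.r.t.\ $L$, and pre-normality w.r.t.\ $L$ restricts down to pre-normality w.r.t.\ the intermediate language $\bigcup_i L_i$ since $P^{-1}P(K)\cap M=P^{-1}P(K)\cap L\cap M=K\cap M=K$ for $K\subseteq M\subseteq L$. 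With that insertion your argument closes; without it, the step as written does not follow. Your fallback argument via Lemma~\ref{Lem:non-k-pro} would face the same issue in a different guise (the witness $t$ and the string $ss'$ it mimics may live in different $L_i$'s), so it is not a free escape route either.
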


\begin{proof}
Since it is well known that $\cup_{i\in I} L_i$ is   normal w.r.t. $L$ and $P$, let us show that $\cup_{i\in I} L_i$ is $k$-prognosable. 
Denote by $L_i=L_{i,f}\cup L_{i,n}$ the decomposition of $L_i$'s into their non-faulty and faulty parts.
Let us denote $\Psi_{i,f}^{-k}$ the set of strings that may be extended to reach a fault in at most $k\in \N$ observations and contain all their faulty extensions.  

By Proposition \ref{Pro:prodef}, it suffices to show that $\cup_{i\in I} \overline{L_{i,n}\setminus \Psi_{i,f}^{-k}}$ is pre-normal w.r.t. $\cup_{i\in I} L_i$, i.e.
$P^{-1}P(\overline{\cup_{i\in I} L_{i,n}\setminus \Psi_{i,f}^{-k}})\cap L \subseteq \cup_{i\in I} L_{i,n}\setminus \Psi_{i,f}^{-k}$.
For simplicity we write in this proof  $L_{i,n\setminus f}^{-k}=L_{i,n}\setminus \Psi_{i,f}^{-k}$ and  $L_{n\setminus f}^{-k}=L_{n}\setminus \Psi_{f}^{-k}$.

It amounts to showing that
\begin{equation} \label{eq:punion}
P^{-1}P (\cup_{i\in I} L_{i,n\setminus f}^{-k} )\cap [\cup_{i\in I} L_i]  \subseteq  \cup_{j\in I} L_{j,n\setminus f}^{-k} .
\end{equation}
After distributing the first union with $P^{-1}P$ and distributing both unions with intersection, we obtain by distinguishing terms with $i=j$:
\begin{multline*}
 P^{-1}P(\cup_{i\in I} L_{i,n\setminus f}^{-k}) \cap [\cup_{i\in I} L_i] \\
 = \cup_{i\in I} P^{-1}P(L_{i,n\setminus f}^{-k}) \cap L_i  \cup  \bigcup_{j\not=i}
L_{i,n\setminus f}^{-k} \cap L_j
\end{multline*}

Note that $(\cup_{i\in I} L_{i,n\setminus f})^{-k}=\cup_{i\in I} L_{i,n\setminus f}^{-k}$, where
$L_{i,n\setminus f}^{-k}= L_i \cap L_{n\setminus f}^{-k} $.
Thus,  to prove inequality (\ref{eq:punion}), it suffices to show that
 the mixed terms do not increase the language on the left-hand side. Namely, for every $i,j\in I, j\not=i$, let us consider the languages
$ P^{-1}P \bigl[L_{i,n\setminus f}^{-k})  \bigr]  \cap  L_j$. 
We now use the assumption that $L_i$ is normal w.r.t. $L$, which for prefix-closed languages means
$ P^{-1}P(L_i) \cap L \subseteq L_i$, the distributivity of projections and inverse projections w.r.t. unions,
and the fact that $L_j \subseteq L$ to get:
\begin{align*} 
    & P^{-1}P ( L_i\cap L_{n\setminus f}^{-k} )  \cap  L_j 
    \subseteq  P^{-1}P (L_i)  \cap  P^{-1}P ( L_{n\setminus f}^{-k}) \cap L \\
    & = L_i \cap   P^{-1}P ( L_{n\setminus f}^{-k}). 
\end{align*}
Note that 
by transitivity of pre-normality, cf. Lemma \ref{trans_LCO}, we obtain from 
$L_i \cap   P^{-1}P ( L_{n\setminus f}^{-k})$ is pre-normal w.r.t. $L_i$  and  $L_i$ is pre-normal w.r.t. $L$ (normality of prefix closed $L_i$ w.r.t. $L$) that $L_i \cap   P^{-1}P ( L_{n\setminus f}^{-k})$ is pre-normal w.r.t. $L$, that is,
$$P^{-1}P (L_i \cap   P^{-1}P ( L_{n\setminus f}^{-k})) \cap L
\subseteq L_i \cap   P^{-1}P ( L_{n\setminus f}^{-k}).$$
Altogether, 
$$ P^{-1}P (L_i \cap   P^{-1}P ( L_{n\setminus f}^{-k}) ) \cap  L_j \subseteq L_i \cap   P^{-1}P ( L_{n\setminus f}^{-k}).$$
Therefore, inequality \eqref{eq:punion} holds and
 $\cup_{i\in I} L_i \cap   P^{-1}P ( L_{n\setminus f}^{-k})$ is pre-normal w.r.t. $\cup_{i\in I} L_i$. By Proposition \ref{Pro:prodef}, we conclude that $\cup_{i\in I} L_i$ is $k$-prognosable and normal w.r.t. $L$ and $P$.
\end{proof}

%\sw{The affect/relationship of parameter $\Psi_f^{-k}$ did not be reflected in this formula ${\rm supNP}^k(L,\Psi_f^{-k}, P)$. Similar issue also appear in the subsequent ${\rm supND}(L,L_{f}^{\geq N_o}, P)$ and $L_{f}^{\geq N_o}$.}
It follows from Proposition \ref{lemma:punion} that the supremal $k$-prognosable sublanguage of $L$ that is normal w.r.t. $L$ always exists and equals the union of all sublanguages of  $L$ that are $k$-prognosable and  normal w.r.t. $L$.
We denote it by ${\rm supNP}^k(L,\Psi_f^{-k}, P)= \{ \cup_{i\in I} L_i \mid$ $L_i \text{ is  $k$-prognosable and normal w.r.t. $L$ and $P$} \}$. 
\begin{remark}
We emphasize that, unlike standard notation for supremal languages in supervisory control, where the specification comes first and the plant in the second place, here the plant comes first because we are looking for the largest sublanguage of the plant that is $k$-prognosable and normal. The language $\Psi_f^{-k}$ then plays the role of the specification, because $k$-prognosability is equivalent to pre-normality of it w.r.t. a plant (Proposition \ref{Pro:prodef}). However, in active prognosis, we do not compute the (supremal) sublanguage of the specification $L_n\setminus \Psi_f^{-k}$ like in classical supervisory control theory, but naturally rather the sublanguage of the plant. Note also that this new $k$-prognosable (sub)-plant needs to be normal w.r.t. the original plant $L$ anyway (to be achievable by a supervisor), hence this normality is not an additional restriction. \hfill$\blacksquare$
\end{remark}

%Since $k$-prognosability implies $k-1$-prognosability (cf. Corollary~\ref{Corollary:k-1pro}), ${\rm supNP}^k(L,\Psi_f^{-k}, P)\subseteq {\rm supNP_{k-1}}(L,\Psi_f^{-k}, P)$. 
%By Corollary~\ref{Corollary:equiv}, prognosability is equivalent to $0$-prognosability. We denote the supremal  prognosable and normal sublanguage of $L$ by ${\rm supNP}(L,\Psi_f^{-k}, P)={\rm supNP_0}(L,\Psi_f^{-k},P)$. 
A similar result holds for diagnosability, namely that the supremal normal and diagnosable sublanguage exists, as also shown based on a game-theoretic approach by Yin and Lafortune~\cite{yin2015uniform}. 
% Their approach is based on a game-theoretic approach rather than a classical framework for supervisory control.
% \sw{I prefer deleting this comment on their approach, as they finally could derive a supervisor, which is exactly what people want.}

\begin{proposition}\label{lemma:union}
Let $L(G)=L_f \cup L_n$ be the language recognized by DFA  $G$ with the corresponding faulty and non-faulty languages.
Let $L_i\subseteq L(G)$, where $i\in I$, be a family of sublanguages that are diagnosable w.r.t. $\Sigma_f$ and normal w.r.t. $L$ and $P$. Then their union $\cup_{i\in I} L_i$ is also  diagnosable and normal w.r.t. $L$ and $P$.  
\end{proposition}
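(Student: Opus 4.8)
The plan is to mirror the structure of the proof of Proposition~\ref{lemma:punion}, but using the characterization of diagnosability furnished by Proposition~\ref{Prop: diaPsi_f} (or equivalently Proposition~\ref{Prop: ldia}) in place of Proposition~\ref{Pro:prodef}. First I would recall that normality is preserved under arbitrary unions, so $\cup_{i\in I} L_i$ is automatically normal w.r.t.\ $L$ and $P$; only diagnosability needs work. Writing $L_i = L_{i,f}\cup L_{i,n}$ for the faulty/non-faulty decomposition of each $L_i$, and letting $N_i\in\N$ be a diagnosis delay witnessing diagnosability of $L_i$, the subtlety is that there is no single $N$ working for all $i$ a priori. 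However, since each $L_i$ is normal w.r.t.\ $L$ and $L$ is recognized by a DFA $G$, each $L_i$ is recognized by a sub-DFA of $G$ (up to the observer-level argument of Proposition~\ref{prop:max}); hence the diagnosis delay of each diagnosable $L_i$ is bounded by $N_o$, the number of observer states of $G$. I would therefore fix $N = N_o$ and aim to show $P^{-1}P\big((\cup_{i\in I}L_i)_{\Psi_f}^{\geq N}\big)\cap (\cup_{i\in I}L_i) \subseteq \cup_{i\in I}L_{i,f}$, where the superscript denotes the sublanguage of faulty strings with at least $N$ observations after the fault.

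The core computation then proceeds exactly as in Proposition~\ref{lemma:punion}: distribute $P^{-1}P$ and the intersections over the unions, separate the diagonal terms (those with $i=j$, which are absorbed by diagnosability of $L_i$ itself) from the mixed terms $P^{-1}P\big((L_i)_{\Psi_f}^{\geq N}\big)\cap L_j$ with $j\neq i$. For a mixed term I would use that $(L_i)_{\Psi_f}^{\geq N} \subseteq L_i \cap L_{f}^{\geq N}$ (every faulty string of $L_i$ that is faulty in $L_i$ is faulty in $L$, since the fault events are the same), then bound
\begin{align*}
P^{-1}P\big((L_i)_{\Psi_f}^{\geq N}\big)\cap L_j
&\subseteq P^{-1}P(L_i)\cap P^{-1}P\big(L_{f}^{\geq N}\big)\cap L \\
&= L_i \cap P^{-1}P\big(L_{f}^{\geq N}\big),
\end{align*}
using normality of $L_i$ w.r.t.\ $L$ and $L_j\subseteq L$. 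By transitivity of pre-normality (Lemma~\ref{trans_LCO}) together with pre-normality of $L_{f}^{\geq N}$ w.r.t.\ $L$ (which holds for $N=N_o$ by Propositions~\ref{Prop: ldia} and \ref{prop:max}, under the relevant finiteness assumption, or else one argues directly from diagnosability of $G$ itself when it holds), the language $L_i\cap P^{-1}P(L_{f}^{\geq N})$ is pre-normal w.r.t.\ $L$, so it is closed under $P^{-1}P(\cdot)\cap L$ and in particular under $P^{-1}P(\cdot)\cap L_j$. This shows the mixed terms do not enlarge the left-hand side beyond $\cup_{i\in I} L_i\cap P^{-1}P(L_f^{\geq N})$, and unraveling the definitions gives the desired inclusion; Proposition~\ref{Prop: diaPsi_f} then yields diagnosability of $\cup_{i\in I}L_i$.

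The main obstacle I anticipate is the uniformity of the diagnosis delay: unlike $k$-prognosability, where $k$ is fixed in the statement, diagnosability quantifies existentially over $N$, and a naive union could in principle require unbounded delays. The resolution hinges on the observation (essentially the content of Proposition~\ref{prop:max}) that once $L$ is fixed and all the $L_i$ are normal sublanguages, the diagnosis delay is uniformly bounded by the observer size $N_o$ of $G$; this is what legitimizes working with a single $N$. A secondary technical point is ensuring that the hypothesis needed to invoke pre-normality of $L_f^{\geq N}$ (finiteness of $P(\Psi(\Sigma_f))$, or direct diagnosability of $G$) is actually available in the intended application; if $G$ itself is not diagnosable one instead argues that $L_f^{\geq N_o}$ being pre-normal is exactly equivalent to diagnosability of the relevant sublanguage, and the transitivity argument goes through relative to whichever $L_i$ realizes the supremum. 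Apart from this, the proof is a routine adaptation of the set-algebra manipulations already carried out for Proposition~\ref{lemma:punion}.
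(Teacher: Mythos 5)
Your proposal is, in substance, exactly what the paper does: the paper's entire proof of this proposition is the single line ``Straightforward from Propositions~\ref{Prop: ldia} and~\ref{lemma:punion},'' i.e., replace the specification language $L_{n}\setminus\Psi_f^{-k}$ by $L_f^{\geq N}$ and rerun the mixed-terms computation of Proposition~\ref{lemma:punion}, which is precisely your core calculation. So the approach is the same, and your set-algebra steps (diagonal terms absorbed by diagnosability of each $L_i$, mixed terms controlled by normality of $L_i$ plus Lemma~\ref{trans_LCO}) are correct.

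The one place where you go beyond the paper is also the one place that needs care: the uniformity of the diagnosis delay over the family $\{L_i\}_{i\in I}$. You are right that this is the real obstacle (the paper's one-line proof simply does not address it), but your justification --- that each normal $L_i$ is recognized by a sub-DFA of $G$ and hence has delay at most $N_o$ --- is not quite available as stated: the proposition places no regularity assumption on the $L_i$, and even a regular normal sublanguage of $L$ need not be recognized by a strict subautomaton of $G$ (one may have to unroll cycles), so its observer can exceed $N_o$ states. Two cleaner ways to close this: either fix $N$ a priori (diagnosability ``in $N$ observable steps,'' matching how $k$ is fixed in the prognosability version), or use normality directly --- since each $L_i$ satisfies $L_i = P^{-1}P(L_i)\cap L$, any ambiguous pair $(st,w)$ with $P(w)=P(st)$ and $st\in L_i$ forces $w\in L_i$ as well, so every violation of diagnosability of the union is witnessed inside a single member of the family; one then still has to argue that the witnessing indices cannot escape to infinity, which is exactly the point the paper leaves open. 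Your proof is therefore at least as rigorous as the paper's, but the appeal to $N_o$ should be flagged as requiring either a regularity/subautomaton hypothesis on the $L_i$ or an a priori bound on $N$.
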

\begin{proof}
    Straightforward from Propositions \ref{Prop: ldia} and \ref{lemma:punion}.
\end{proof}

%According to Proposition \ref{prop:max}, the pre-normality of $L_f^{\geq N}$ is stronger than that of $L_f^{\geq N+1}$ for all $N\geq 0$.
%Further, we have $L_{f}^{\geq N'}\subseteq L_{f}^{\geq N}$. These imply that, in order to obtain a maximally permissive supervisor, one needs to enforce the pre-normality of the language $L_{f}^{\geq N}$ with the largest possible $N$. However, the value of $N$ determines how many steps of observation are required to infer the occurrence of a fault.
If a fault cannot be diagnosed after $N_o$ observations, then we compute the supremal diagnosable (w.r.t. $N_o$ observations) sublanguage that is normal w.r.t. $L$. We denote the supremal  diagnosable sublanguage of $L$ that is normal w.r.t. $L$ by ${\rm supND}(L,L_{f}^{\geq N_o}, P)$.

\section{Active prognosis and diagnosis}\label{Sec: Active diagnosis}

In this section, we will present an approach to enforce prognosability (resp. diagnosability). There are several approaches in the literature for enforcing prognosability (resp. diagnosability). Notably, one approach is based on a diagnoser or a verifier, which involves removing indeterminate cycles that violate prognosability (resp. diagnosability) through supervisory control \cite{haar2020active,Paoli2005supervisor, yin2015uniform,hu2025iot,Hu2020supervisor, Hu2023dt}. Other approaches focus on sensor selection \cite{Ranproenforcement2022,Ran2019sensor,Hu2024sensor}.
As far as we know, there are few works that touch upon computing the supremal prognosable (resp. diagnosable) and normal sublanguage. 

In Section \ref{Sec: Diagnosability}, we characterized (i) prognosability in terms of the pre-normality of a sublanguage of the non-faulty language  
(in Proposition \ref{Pro:prodef}) and (ii) diagnosability in terms of pre-normality of 
an extension of the non-faulty language by a prefix of the faulty language, determined by a bounded number of observable events (in Proposition \ref{prop:max}). In Section~\ref{Sec: Existence}, it is shown that the supremal $k$-prognosable and normal sublanguage as well as the supremal diagnosable and normal sublanguage exist by Propositions \ref{lemma:punion} and \ref{lemma:union}, respectively.

\subsection{Active prognosis}

Let us recall that the supremal normal and $k$-prognosable sublanguage ${\rm supNP}^k(L,L_n \setminus \Psi_f^{-k},P)$ always exists.
Note that  by Proposition \ref{Pro:prodef} we need not only that $L_{n}\setminus \Psi_{f}^{-k}$ is pre-normal w.r.t. a subplant $L'\subseteq L$, but also that  $L'_{n}\setminus \Psi_{f}^{-k}=(L_{n}\cap L')\setminus \Psi_{f}^{-k}$ is prefixed-closed to achieve $k$-prognosability of $L'$. 
Since $L'_{n}\setminus \Psi_{f}^{-k}$ is not always prefix-closed, we need to compute $L''\subseteq L'\subseteq L$ such that   $(L'_{n}\cap L'')\setminus \Psi_{f}^{-k}$ is  prefix-closed. First, we show that prefix-closedness is preserved by enforcing normality.
%we do not need to iterate more than twice between $k$-prognosability of $L'$ by Proposition \ref{Pro:prodef} (pre-normality of $L'_{n}\setminus \Psi_{f}^{-k}$ and prefix-closedness of  $L'_{n}\setminus \Psi_{f}^{-k}$).

\begin{lemma} \label{Lemma:profix}
   Let  $L(G)$ be the language recognized by a  non $k$-prognosable DFA $G$ that satisfies Assumption A1 with $L(G)=L_f \cup L_n$. Let $L'\subseteq L$ with $L'=\overline{L'}$ be the sublanguage of $L$ such that $L'_{n}\setminus \Psi_{f}^{-k}=(L_n\cap L')  \setminus \Psi_{f}^{-k}$ is now prefix-closed w.r.t. $L'$ and $L'_f=L_f\cap L'$. If there exists a sublanguage $L''\subseteq L'$ with $L''=\overline{L''}$ such that $L''_{n}\setminus \Psi_{f}^{-k}=(L_n\cap L'')  \setminus \Psi_{f}^{-k} $ is normal w.r.t. $L'$, then $L''_{n}\setminus \Psi_{f}^{-k}$ is also prefix-closed w.r.t. $L''$.
\end{lemma}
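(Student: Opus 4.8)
The plan is to argue by contradiction: suppose $L''_{n}\setminus \Psi_{f}^{-k}$ fails to be prefix-closed w.r.t. $L''$, so there is a string $w\in L''_{n}\setminus \Psi_{f}^{-k}$ with a prefix $w'\in\overline{w}$ (hence $w'\in L''$, since $L''=\overline{L''}$) such that $w'\notin L''_{n}\setminus \Psi_{f}^{-k}$. First I would clarify why $w'$ must lie in $\Psi_{f}^{-k}$: because $w$ contains no fault, neither does any prefix $w'$, so $w'\in L_n\cap L''$; the only way $w'$ can be excluded from $L''_{n}\setminus\Psi_{f}^{-k}$ is therefore $w'\in\Psi_{f}^{-k}$. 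Now by Lemma~\ref{Lemma:extension-closed}, $\Psi_{f}^{-k}$ is extension-closed w.r.t. $\overline{\Psi(\Sigma_f)}$; combining $w'\in\Psi_{f}^{-k}$ with $w\in\overline{\Psi(\Sigma_f)}$ (which holds because $w\in\Psi_{f}^{-k}$ would contradict $w\in L''_{n}\setminus\Psi_{f}^{-k}$, and yet $w\in L_n\subseteq\overline{\Psi(\Sigma_f)}$ by liveness and Assumption A1) and $w\in w'\Sigma^*$, extension-closedness forces $w\in\Psi_{f}^{-k}$ — contradicting $w\in L''_{n}\setminus\Psi_{f}^{-k}$.

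Wait — that chain already gives a contradiction without using normality, which cannot be right since the hypothesis of the lemma explicitly invokes normality of $L''_n\setminus\Psi_f^{-k}$ w.r.t. $L'$. So the correct subtlety must be that $w'\notin L''$ is possible even though $w\in L''$: the issue is not that $w'$ is excluded by membership in $\Psi_f^{-k}$, but rather that the subplant $L''$ itself is not a priori prefix-closed in the relevant sense, or that $w\in L''$ does not automatically give $w\in\overline{\Psi(\Sigma_f)}$. The real approach, then, is: take $w\in L''_n\setminus\Psi_f^{-k}$ and a prefix $w'\in\overline{w}$; since $L''=\overline{L''}$ we have $w'\in L''\cap L_n$; the only obstruction to $w'\in L''_n\setminus\Psi_f^{-k}$ is $w'\in\Psi_f^{-k}$. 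Apply the hypothesis that $L'_n\setminus\Psi_f^{-k}$ is prefix-closed w.r.t. $L'$: since $w\in L''\subseteq L'$, we have $w\in L'_n\setminus\Psi_f^{-k}$ (as $w\notin\Psi_f^{-k}$ and $w\in L_n\cap L'$), hence $w'\in L'_n\setminus\Psi_f^{-k}$, so in particular $w'\notin\Psi_f^{-k}$. Thus $w'\in(L_n\cap L')\setminus\Psi_f^{-k}=L'_n\setminus\Psi_f^{-k}$, and it remains only to promote $w'$ from $L'_n\setminus\Psi_f^{-k}$ into $L''_n\setminus\Psi_f^{-k}$.

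This last promotion is where normality enters, and I expect it to be the crux. We have $w\in L''_n\setminus\Psi_f^{-k}$ and $P(w')=P(w')$ trivially, but I would instead use $P(w)$: since $w\in L''_n\setminus\Psi_f^{-k}$ which is normal w.r.t. $L'$, $P^{-1}P(L''_n\setminus\Psi_f^{-k})\cap L'=L''_n\setminus\Psi_f^{-k}$; so any $v\in L'$ with $P(v)=P(w'')$ for some $w''\in L''_n\setminus\Psi_f^{-k}$ lies in $L''_n\setminus\Psi_f^{-k}$. The question is to exhibit such a string for $w'$. Here I would invoke prefix-closedness of $L'_n\setminus\Psi_f^{-k}$ w.r.t. $L'$ together with convergence/liveness (Assumption A1) to trace $w'$ forward inside $L''$: since $w\in L''$ extends $w'$, and $L''=\overline{L''}$, the word $w'$ is already in $L''$, and I claim $P(w')$ is realized in $L''_n\setminus\Psi_f^{-k}$ by $w'$ itself once we know $w'\notin\Psi_f^{-k}$ — but that is exactly what we want. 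The honest resolution: because $L''_n\setminus\Psi_f^{-k}$ is normal w.r.t. $L'$ and $w'\in L'$, it suffices to find \emph{one} string $z\in L''_n\setminus\Psi_f^{-k}$ with $P(z)=P(w')$; take $z=w$ if $P(w)=P(w')$ (impossible unless $w=w'$ up to unobservables), otherwise descend: the set $\overline{L''_n\setminus\Psi_f^{-k}}$ of prefixes, intersected with $P^{-1}P(w')$, is nonempty since $w$ lies above $w'$ and $L''$ is prefix-closed; normality then pulls each such prefix back into $L''_n\setminus\Psi_f^{-k}$ provided it is itself fault-free and outside $\Psi_f^{-k}$, which follows from the $L'$-level prefix-closedness established above. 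Chaining these observations yields $w'\in L''_n\setminus\Psi_f^{-k}$, the desired contradiction.

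The main obstacle is precisely this promotion step: normality w.r.t. $L'$ only lets us move between strings of $L'$ with equal projection, not between a string and its strict prefix, so one must carefully combine (i) the $L'$-level prefix-closedness hypothesis to keep $w'$ out of $\Psi_f^{-k}$, (ii) the extension-closedness of $\Psi_f^{-k}$ from Lemma~\ref{Lemma:extension-closed} to control membership along prefixes, and (iii) normality of $L''_n\setminus\Psi_f^{-k}$ w.r.t. $L'$ to recover $w'\in L''$ as an element of $L''_n\setminus\Psi_f^{-k}$ rather than merely of $L'_n\setminus\Psi_f^{-k}$. Getting the order of these three ingredients right — and correctly identifying which string plays the role of the "witness with equal projection" — is the delicate part; everything else is set-algebra bookkeeping with prefixes and projections.
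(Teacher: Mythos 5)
Your second paragraph already contains a complete proof, and you then talk yourself out of it. Once you have established that a prefix $w'$ of $w\in L''_{n}\setminus\Psi_{f}^{-k}$ satisfies (i) $w'\in L''$ (because $L''=\overline{L''}$), (ii) $w'\in L_n$ (because $L_n$ is prefix-closed, $L_f$ being extension-closed in $L$), and (iii) $w'\notin\Psi_{f}^{-k}$ (because $w\in L'_{n}\setminus\Psi_{f}^{-k}$ and that set is prefix-closed by hypothesis), you are done: these three facts are, by definition, exactly $w'\in(L_n\cap L'')\setminus\Psi_{f}^{-k}=L''_{n}\setminus\Psi_{f}^{-k}$. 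Equivalently, since $L''\subseteq L'$ one has $L''_{n}\setminus\Psi_{f}^{-k}=(L'_{n}\setminus\Psi_{f}^{-k})\cap L''$, an intersection of two prefix-closed languages, hence prefix-closed; the normality hypothesis plays no role in the conclusion. The ``promotion'' step to which you devote your third paragraph does not exist, and the normality-based argument you sketch there (hunting for a witness $z\in L''_{n}\setminus\Psi_{f}^{-k}$ with $P(z)=P(w')$) is circular and never closes; as submitted, the proposal therefore ends in hand-waving rather than a finished argument. A further slip: in the first paragraph you assert $L_n\subseteq\overline{\Psi(\Sigma_f)}$, which is false in general (e.g., the strings $\tau aabc^{j}$ in $G_1$ of the paper's first example are non-faulty but are not prefixes of any fault-ending string); the spurious contradiction you rightly distrusted comes from this false inclusion, not from the normality issue you guessed at.

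Regarding the comparison with the paper: the paper's own proof takes a different and less direct route, reasoning about normality-violating pairs $(s,t)$ with $P(s)=P(t)$ and arguing that the passage from $L'$ to $L''$ removes one member of each such pair together with all its extensions, so that no prefix-closedness violation can be created. Your paragraph-two argument is more elementary and strictly more general --- it applies to any prefix-closed $L''\subseteq L'$, normal or not --- and, once stated cleanly, it is the preferable proof. You should cut the first and third paragraphs, keep the second, and add the one-line observation that (i)--(iii) already constitute membership in $L''_{n}\setminus\Psi_{f}^{-k}$.
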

\begin{proof}
The condition holds if $L''=L'$. 
Now we consider $L''\subsetneq L'$. According to Proposition \ref{Pro:prodef}, if $L'_{n}\setminus \Psi_{f}^{-k}$ is prefix-closed but not normal, for every string $s\in L'_{n}\setminus \Psi_{f}^{-k}$ and $t\in L'$ with $P(s)=P(t)$ but $t\notin L'_{n}\setminus \Psi_{f}^{-k}$, there does not exist a string $t'\in L'/t$ such that $tt'\in L_f'$. In other words, for all $t'\in L'/t$, $tt'\in L_n'$. If there exists a sublanguage $L''\subseteq L'$ with $L''=\overline{L''}$ such that $L''_{n}\setminus \Psi_{f}^{-k} $ is normal, then either $s\notin L''_{n}\setminus \Psi_{f}^{-k}$ or $t\notin L''$ holds due to $P(s)=P(t)$. Further, by $L''=\overline{L''}$, for all $s'\in L'/s$ and for all $t'\in L'/t$, we have either $s'\notin L''$ or $t'\notin L''$. In summary, we conclude that $L''_{n}\setminus \Psi_{f}^{-k}$ is prefix-closed w.r.t. $L''$.
\end{proof}

Lemma \ref{Lemma:profix} implies that normality does not compromise prefix-closedness for $L_{n}\setminus \Psi_{f}^{-k}$, i.e., if a sublanguage $L_{n}\setminus \Psi_{f}^{-k}$ is prefix-closed, then enforcing $k$-prognosability (normality) preserves this property. 
Before establishing an algorithm to compute the supremal controllable, normal, and $k$-prognosable sublanguage of $G$, some notions and notations are proposed first.

A DFA $H= (X, \Sigma, \Delta, x_0)$ is said to be a \emph{strict sub-automaton} of $G$, denoted by $H \sqsubset G$, if the following conditions hold: 1) $\Delta(x_0, s) = \delta(q_0, s)$ for all $s \in L(H)$, and 2) if $x, x' \in X$ and $\delta(x, s) = x'$ for $s \in \Sigma^*$, then $\Delta(x, s) = x'$.
We abuse the notation and recast the partial transition function $\Delta: X\times \Sigma \to X $ as a subset $\Delta \subseteq  X\times \Sigma \times X$ with an obvious correspondence.
An automaton $G$ is a \emph{state-partition automaton} (SPA) w.r.t. $P$ if any two states of its observer do not have a nontrivial overlap, i.e., either they are identical or their intersection is empty. Without loss of generality, we assume that $H \sqsubset G$, which can be ensured by refining the state space \cite[Section 2.3.3]{cassandras2021introduction}, such that $G$ is an SPA w.r.t. $P$. The SPA property can always be achieved by computing $G \parallel \mathrm{Obs}(G)$ \cite[Section 3.7.5]{cassandras2021introduction} as a new recognizer for $L$.

Given a DFA recognizing $L$, the sets of fault events $\Sigma_f$ and uncontrollable events $\Sigma_{uc}$, and  natural projection $P$, the supremal controllable, normal, and $k$-prognosable sublanguage of $G$ is denoted by $\mathrm{supCNP}^k(L, \Psi_f^{-k}, \Sigma_{uc}, P)$.
The following statement simplifies the computation of $\mathrm{supCNP}^k(L, \Psi_f^{-k}, \Sigma_{uc}, P)$.

\begin{remark}
\label{rem2}
Let us recall at this point the notion of critical observability, restricted to a DFA, which is closely related to $k$-prognosability. A DFA $G = (Q,\Sigma,\delta,q_0)$ is critically observable w.r.t. the set of critical states $Q_c\subseteq Q$ if for every $s,s' \in L(G)$ with $P(s)=P(s')$, $\delta(q_0,s) \in Q_c$ if and only if $\delta(q_0,s') \in Q_c$ \cite{masopust2019critical}. A sub-plant $L'=\overline{L'}\subseteq L(G)$ is critically observable w.r.t. $G$, $P$, and $Q_c$ if, for every $s,s' \in L'$ with $P(s)=P(s')$, $\delta(q_0,s) \in Q_c$ if and only if $\delta(q_0,s') \in Q_c$ \cite{arxiv_CO}.  It is now easy to see that $L'\subseteq L(G)$ is critically observable w.r.t. $G$, $P$, and $Q_c$ if and only if $P^{-1}P(K_c\cap L') \cap L' \subseteq K_c\cap L'$, where $K_c$ is the language corresponding to $Q_c$, i.e. $K_c=\{s\in L(G) \mid \delta(q_0,s)\in Q_c\}$.
It follows from Proposition \ref{Pro:prodef} applied to $L'\subseteq L$ that $L'\subseteq L(G)$ is $k$-prognosable if and only if $L'$ is critically observable w.r.t. $G$, $P$, and the set of critical states given by the critical language $K_c=(L_n\cap L')\setminus \Psi_f^{-k}$ and $(L_n\cap L')\setminus \Psi_f^{-k}$ is prefix-closed.
Note that the prefix-closedness is achieved by computing the supremal prefix-closed sublanguage, and due to Lemma \ref{Lemma:profix}, computation of $k$-prognosable sublanguages $L'\subseteq L(G)$ does not alter prefix-closedness of $(L_n\cap L')\setminus \Psi_f^{-k}$.
\hfill$\blacksquare$
\end{remark}

	\begin{algorithm}[t]
    \caption{Computation of $\mathrm{supCNP}^k(L,\Psi_f^{-k}, \Sigma_{uc}, P)$}
    \label{Alg: pro}
    \begin{algorithmic}[1]
    \REQUIRE A non-negative integer $k\in \{0,1,2,...\}$ and a DFA $G = (Q, \Sigma, \delta, q_0)$, which is an SPA w.r.t. $P$;\\
    a DFA $H_1= (X_1, \Sigma, \Delta_1, x_0)$ with $H_1 \sqsubset G$ and $L(H_1)=L(G)=L$.
    \ENSURE $\mathrm{supCNP}^k(L,\Psi_f^{-k}, \Sigma_{uc}, P)$.
    %\sw{So the output $\mathrm{supCNP}^k(L,\Psi_f^{-k}, \Sigma_{uc}, P)$ actually is only supremal w.r.t. $\Psi_f^{-k}$ instead of $L$, isn't it strange?}
    %\sw{Essentially, the algorithm starts with $H$ (i.e., $\Psi_f^{-k}$). So we actually only obtain a supremal sublanguage of $\Psi_f^{-k}$, which satisfies those properties w.r.t. $L$ and $P$.}
    \STATE Construct the observer $\mathrm{Obs}(H_1) = (Y, \Sigma_o, \xi, y_{0})$~\cite{cassandras2021introduction}; 
        
    \STATE prefix-closedness 
    \[
M: = \underbrace{\{s\in L_n\setminus \Psi_f^{-k} \mid \forall s'\in \overline{s}: s'\in L_n\setminus \Psi_f^{-k}\}}_{\substack{\text{The supremal prefix-closed subset of $L_n\setminus \Psi_f^{-k}$.}}};
\]
    
    % \STATE $X_m:=\{x\in X\mid \exists s\in M : \Delta(x_0,s)=x\}$;
    \STATE $X_m: = \cup_{s \in M}\{\Delta_1(x_0,s)\}$;

\STATE  pre-normality part of $k$-prognosability
   \[
\underbrace{
\begin{aligned}
X^{N_p} := \{ x \in X_1 \mid\ \forall y \in Y,\ x \in y \Rightarrow {} & y \subseteq X_m \lor\  \\
& y \subseteq X_1 \setminus X_m \};
\end{aligned}
}_{\substack{\text{The set of all $x$ from $X_1$ such that whenever}\\
\text{$x$ belongs to $y$, then $y \subseteq X_m$ or $X_1 \setminus X_m$.}}}
\]
    
\STATE $X_2:=X^{N_p}$, $\Delta_2:=\Delta_1\cap  X_2 \times \Sigma \times X_2$, compute $H_2 = (X_2, \Sigma, \Delta_2, x_0)$, and $i:=2$;
    
    \STATE compute conditions for:

    \begin{itemize}

        \item controllability 
        \[
\underbrace{
\begin{aligned}
X^{C}_i: = X_i \setminus \{ x \in X_i \mid \exists s \in \Sigma^*_{uc} :~ & \delta(x,s)\in Q \land\  \\
& f_i(x,s) \notin X_i \};
\end{aligned}
}_\text{The states satisfying controllability.}
\]

        \item normality 
        \[
            X^{N}_i: = \underbrace{\{ x \in X_i \mid \forall y\in Y, x \in y \Rightarrow y \subseteq X_i \}}_{\substack{\text{The set of all $x$ from $X_i$ such that whenever}\\\text{$x$ belongs to $y$ then $y$ is a subset of $X_i$.}}};
        \]
    \end{itemize}

    \STATE $X'_i := X^{C}_i \cap X^{N}_i$ and  
        $\Delta'_i := \Delta_i \cap  X'_i \times \Sigma \times X'_i$; 

    \STATE $X_{i+1}:=X_i'\setminus \{x\in X_i'\; | \; \nexists s\in\Sigma^*: \Delta_i'(x_0,s)=x\}$ and\\
    $\Delta_{i+1}:=\Delta_i'\cap X_{i+1}\times \Sigma \times X_{i+1}$;

    \IF{$L(H_{i+1})=\emptyset$,}
    \STATE Output: No solution;
    \ELSE
    \IF{$X_{i+1}=X_i$ and $\Delta_{i+1} = \Delta_{i}$,} 
     \STATE Output: $\mathrm{supCNP}^k(L,\Psi_f^{-k}, \Sigma_{uc}, P)  = L(H_{i+1})$; 
    \ELSE
        \STATE $i: = i+1$ and goto Step 6.
    \ENDIF
    \ENDIF
    \end{algorithmic}  
	\end{algorithm}

%We emphasize that the algorithm stops after a finite number of iterations, and the supremality of the output sublanguage is guaranteed by \cite[Lemma 2.1]{cho1989supremal}.

In Algorithm \ref{Alg: pro}, given a DFA $G=(Q, \Sigma, \delta, q_0)$ that is an SPA w.r.t. $P$, we will iteratively build subautomata $H_i= (X_i, \Sigma, \Delta_i, x_0)$ ($i\in \{1,2,\ldots\}$) of $G$ starting from $L(H_1)=L(G)$, which corresponds to a fix-point procedure that restricts a plant to a subplant that satisfies $k$-prognosability (cf. Proposition \ref{Pro:prodef}) from which follows that $G$ is $k$-prognosable w.r.t. $\Sigma_f$ and $P$ if and only if $L_n\setminus \Psi_f^{-k}$ is prefix-closed and pre-normal w.r.t. $L(G)$.
Initially, we construct the observer $Obs(H_1)$ (Step 1). 
%Then, we compute, in Steps 2--4, a sublanguage of $L'\subseteq L$ that is $k$-prognosable, i.e., such that $L'\cap L_n \setminus \Psi_f^{-k}$ is prefix-closed and pre-normal. This is because pre-normality is preserved by sublanguages, which means that we do not need to iterate with controllability and normality. 
Since language $L_n \setminus \Psi_f^{-k}$ may not be prefix-closed, we need to compute the supremal prefix-closed subset $M\subseteq L_n \setminus \Psi_f^{-k}$ (Step~2). According to Step 3, the set of states reached by firing strings in $M$ is obtained and viewed as the set of marked states, i.e., $X_m=\{x\in X_1\mid \exists s\in M : \Delta_1(x_0,s)=x\}$. According to $Q_m$, we compute in Step 4 the set of states $X^{N_p}\subseteq X_1$, which correspond to a sublanguage of the plant (say $L'\subseteq L(G)$) such that $M$ is pre-normal w.r.t. $L'$, or, equivalently, $L'$ is $k$-prognosable. Although in general there is no such supremal $k$-prognosable  $L'\subseteq L$ (without requiring normality of $L'\subseteq L$), the language given by $X^{N_p}$ computed in Step 4 is unique, and the subsequent iterative computation of its supremal controllable and normal sublanguage in Steps~5--15 gives finally
$\mathrm{supCNP}^k(L,\Psi_f^{-k}, \Sigma_{uc}, P)$.
Specifically, let sub-automaton $H_2 = (X_2, \Sigma, \Delta_2, x_0)$, where $X_2:=X^{N_p}$ and $\Delta_2:=\Delta_1$ (Steps 4--5). 
Then, we compute the intersection of the sets of controllable states $X_i^C$ and normal states $X_i^N$ in the $i$-th iteration, i.e., $X'_i := X^{C}_i \cap X^{N}_i$ (Steps 6--7). After removing the set of unreached states and the corresponding arcs, due to Step 8, a new sub-automaton $H_{i+1}= (X_{i+1}, \Sigma, \Delta_{i+1}, q_0)$ is obtained. If $L(H_{i+1})$ is empty, Algorithm \ref{Alg: pro} returns no solution. If $L(H_{i+1})$ is non-empty, we further test whether $X_{i+1}=X_i$ and $\Delta_{i+1} = \Delta_{i}$ hold. If it is true, Algorithm \ref{Alg: pro} returns the supremal controllable, normal, and $k$-prognosable sublanguage of $G$, i.e., $\mathrm{supCNP}^k(L,\Psi_f^{-k}, \Sigma_{uc}, P)  = L(H_{i})$; Algorithm \ref{Alg: pro} tests the $(i+1)$-th iteration, otherwise.
Algorithm~\ref{Alg: pro} stops after a finite number of iterations. %, and achieves the supremal controllable, normal, and $k$-prognosable sublanguage. 

Now, we analyze the computational complexity of Algorithm \ref{Alg: pro} in detail. Due to the subset construction, the observer $Obs(G)$ contains at most $2^{|Q|}$ states, resulting in an exponential complexity, i.e., $O(2^{|Q|})$. In Step 2, since $\Psi_f^{-k}=P^{-1}P[ \Psi(\Sigma_f) / P^{-1}(\Sigma_o^{\leq k})]\Sigma^* \cap \overline{\Psi(\Sigma_f)}$, cf. Eq. (\ref{(1111)}),  each component is regular, all operations are closed under regular languages and can be performed in polynomial time w.r.t. the sizes of the underlying automata. %Thus, the construction of a DFA recognizing $\Psi_f^{-k}$ is of polynomial complexity, i.e., $O(|Q|^2)$. 
Consequently, computing the supremal prefix-closed sublanguage $M$ and its corresponding states reached by firing $s\in M$ also remains polynomial, as they involve standard state pruning techniques on DFAs. At each iteration (Steps 5–15), the algorithm checks the conditions of controllability and normality by performing set-based operations over the observer states, whose size is bounded by $O(2^{|Q|})$. The computational complexity of these steps, including checking set inclusions and removing unreachable states, is $O(2^{|Q|}\cdot |Q|)$. Since at each iteration the state space is strictly reduced unless a fixed point is reached, the number of iterations is at most $|Q|$. Therefore, the overall worst-case complexity of the algorithm is $O(2^{|Q|}\cdot |Q|^2)$.
We now obtain the main result of this paper.

\begin{theorem}
\label{Theorem:sup_prognosis_l}
  Let  $L(G)$ be the language recognized by DFA $G$ satisfying Assumption A1, with the partition of faulty and non-faulty sublanguages, i.e., $L(G)=L_f \cup L_n$.  
  The supremal controllable, normal, and $k$-prognosable sublanguage w.r.t. $L(G)$ and projection $P$ is $\mathrm{supCNP}^k(L, \Psi_f^{-k}, \Sigma_{uc},P)$, computed by Algorithm \ref{Alg: pro}.
\end{theorem}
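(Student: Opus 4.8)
The plan is to prove correctness of Algorithm~\ref{Alg: pro} in two stages: first, establish that the language output by the algorithm is controllable, normal, and $k$-prognosable w.r.t. $L(G)$ and $P$; second, establish that it is the supremal such sublanguage, i.e., it contains every controllable, normal, and $k$-prognosable sublanguage of $L(G)$. Throughout, I would rely on the characterization of $k$-prognosability from Proposition~\ref{Pro:prodef} and Corollary~\ref{Corollary: prenormal and closed}: a sub-plant $L'\subseteq L(G)$ is $k$-prognosable if and only if $(L_n\cap L')\setminus\Psi_f^{-k}$ is prefix-closed and normal (equivalently, pre-normal, since it is prefix-closed) w.r.t. $L'$, together with the critical-observability reformulation in Remark~\ref{rem2}.

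\emph{Stage 1 (soundness).} First I would argue that Step~2 correctly isolates the supremal prefix-closed subset $M$ of $L_n\setminus\Psi_f^{-k}$, and that by Lemma~\ref{Lemma:extension-closed} (prefix-closedness of $\overline{\Psi(\Sigma_f)}\setminus\Psi_f^{-k}$) and Lemma~\ref{Lemma:profix} this prefix-closedness is never subsequently destroyed by the normality/controllability pruning of Steps~6--8. Next, the set $X^{N_p}$ computed in Step~4 uses the SPA property of $G$ (guaranteed by the hypothesis $H_1\sqsubset G$ with $G$ an SPA, so observer states partition the reachable states): retaining exactly those states $x$ whose every observer cell $y\ni x$ lies entirely in $X_m$ or entirely in $X_1\setminus X_m$ yields the largest sub-plant $L'$ for which $M$ is pre-normal w.r.t. $L'$ — that is, $L'$ is $k$-prognosable by Remark~\ref{rem2}. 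Because the SPA property makes the observer cells a genuine partition, I would argue this $X^{N_p}$-language is in fact \emph{unique} (this is the point Remark~\ref{rem2} stresses: no supremal $k$-prognosable $L'$ exists in general, but with the SPA refinement the greedy state removal is unambiguous). Then the inner loop (Steps~6--15) is the standard fix-point for $\mathrm{supCN}$: $X_i^C$ keeps states that cannot be driven outside $X_i$ by uncontrollable strings (controllability), $X_i^N$ keeps states whose entire observer cell stays inside $X_i$ (normality, again via the SPA partition), and the iteration removes just-unreachable states. Since each pass either strictly shrinks the state set or stabilizes, the loop halts in at most $|Q|$ iterations; at the fixed point $H_{i+1}=H_i$, the surviving language is simultaneously controllable (no $X^C$ removal), normal (no $X^N$ removal), pre-normal-with-$M$ (the $X^{N_p}$ restriction of Step~4 is never re-expanded and, since pruning only removes states, pre-normality of $M$ w.r.t. the smaller plant is preserved), and has $(L_n\cap L')\setminus\Psi_f^{-k}$ prefix-closed (Lemma~\ref{Lemma:profix}). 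By Corollary~\ref{Corollary: prenormal and closed} the output is $k$-prognosable.

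\emph{Stage 2 (supremality).} Let $K\subseteq L(G)$ be any controllable, normal, and $k$-prognosable sublanguage; I would show, by induction on the iteration index $i$, that $K\subseteq L(H_i)$, so in particular $K\subseteq L(H_{i+1})$ at the fixed point. For the base case $K\subseteq L(H_1)=L(G)$ trivially. For Step~2--4: since $K$ is $k$-prognosable, $(L_n\cap K)\setminus\Psi_f^{-k}$ is prefix-closed, hence contained in $M$, and pre-normal w.r.t.\ $K$; because $K$ is normal w.r.t.\ $L(G)$, any state of $K$ in an observer cell forces the whole cell's $K$-reachable part to behave consistently, which — using that $G$ is an SPA — yields that every state reached in $K$ survives into $X^{N_p}$, so $K\subseteq L(H_2)$. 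The inductive step for Steps~6--8 is the classical argument that $\mathrm{supCN}$-type pruning never discards a state reachable in a controllable and normal sublanguage: if $x=\Delta_i(x_0,w)$ for $w\in K$, controllability of $K$ keeps $x\in X_i^C$ and normality of $K$ keeps $x\in X_i^N$, and $x$ remains reachable. Hence $K\subseteq L(H_i)$ for all $i$, and passing to the fixed point gives $K\subseteq \mathrm{supCNP}^k(L,\Psi_f^{-k},\Sigma_{uc},P)$, combined with Stage~1 this proves the equality and that Algorithm~\ref{Alg: pro} computes it. Existence of the supremum itself is already guaranteed by Proposition~\ref{lemma:punion} (union-closedness of $k$-prognosable normal languages), so the theorem follows.

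\emph{Main obstacle.} I expect the delicate point to be Stage~1's claim that the Step~4 language is the \emph{unique} maximal $k$-prognosable sub-plant and that pre-normality of $M$ (hence $k$-prognosability) is genuinely preserved through every later pruning round, since $k$-prognosability — unlike normality — is not a priori union-closed without the accompanying normality requirement. The SPA hypothesis on $G$ is what rescues this: it turns "pre-normal w.r.t.\ the plant" into a purely state-based, monotone-under-state-removal condition (the critical-observability formulation of Remark~\ref{rem2}), so that shrinking the state set can only help. Pinning down this monotonicity carefully — and checking that Step~8's removal of newly unreachable states does not resurrect a pre-normality violation — is where the real work lies; the controllability and normality bookkeeping in the inner loop is then routine.
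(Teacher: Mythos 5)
Your proposal is correct and shares the paper's overall architecture---Step~2 together with Lemma~\ref{Lemma:profix} for prefix-closedness, the observation that any sub-plant of a $k$-prognosable plant stays $k$-prognosable (so the Step-4 restriction never needs to be re-iterated), and the standard $\mathrm{supCN}$ fixpoint for Steps 5--15---but your supremality argument takes a genuinely different route. The paper delegates supremality to the critical-observability machinery of \cite{arxiv_CO} via Remark~\ref{rem2}: it only asserts that applying $\mathrm{supCN}$ to the Step-4 language dominates applying $\mathrm{supCN}$ to any other $k$-prognosable sub-plant, and it explicitly avoids claiming that an arbitrary controllable, normal, $k$-prognosable $K$ is contained in $L(H_2)$ (indeed it stresses that no supremal $k$-prognosable sub-plant exists without normality, so your phrase ``the largest sub-plant for which $M$ is pre-normal'' is, strictly speaking, wrong---only the homogeneous-cell choice is canonical, as your own parenthetical concedes). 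You instead prove the stronger, self-contained invariant $K\subseteq L(H_i)$ by induction, which buys independence from the external lemmas at the cost of having to establish the base step $K\subseteq L(H_2)$ yourself; that step is true but under-argued in your sketch, since normality of $K$ alone does not deliver it. The argument that closes it is: if some $w\in K$ reaches a state of a mixed observer cell, normality of $K$ pulls into $K$ two strings $v_1\in M$ and $v_2\notin M$ with $P(v_1)=P(v_2)$; pre-normality of $(L_n\cap K)\setminus\Psi_f^{-k}$ w.r.t.\ $K$ (part of $k$-prognosability of $K$, via Proposition~\ref{Pro:prodef}) forces $v_2$ into that language, and its prefix-closedness (the other part) forces every prefix of $v_2$ into $L_n\setminus\Psi_f^{-k}$, i.e.\ $v_2\in M$, a contradiction. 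With that supplied, your induction goes through and is arguably more elementary and verifiable than the paper's appeal to \cite{arxiv_CO}; the paper's version, in exchange, reuses an existing result and sidesteps the delicate cell-homogeneity argument entirely.
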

\begin{proof}
In accordance with Proposition \ref{Pro:prodef} and Corollary~\ref{Corollary: prenormal and closed}, we compute the supremal prefix-closed sublanguage of $L_{n}\setminus \Psi_{f}^{-k}$ in Step~2 of Algorithm \ref{Alg: pro}. 
According to Lemma~\ref{Lemma:profix}, prefix-closedness is preserved under controllability and normality computations in Steps 5--15.
%after computing $L''=\supNP_k(L, L'_{n}\setminus \Psi_{f}^{'-k}, P)$, it holds that language $L''_{n}\setminus \Psi_{f}^{''-k}=(L_n\cap L'')  \setminus (\Psi_{f}^{-k}\cap L'')$ is prefix-closed, i.e.,  $\overline{ L''_{n}\setminus \Psi_{f}^{''-k}} =L''_{n}\setminus \Psi_{f}^{''-k}$.  
Note that, directly from the definition, any sublanguage of a $k$-prognosable plant is also $k$-prognosable. 
Therefore, it is not necessary to iterate $k$-prognosability.
%, i.e., their prefix-closedness and pre-normality parts in Steps 2--4 with controllability and normality computations in Steps 5--15. 
In this way, Algorithm \ref{Alg: pro} first computes, in Step 4, a special $k$-prognosable sublanguage $(L_n\cap L')  \setminus \Psi_{f}^{-k} $. This is the same as \cite[Lemma 6]{arxiv_CO}, since, by Proposition \ref{Pro:prodef} and Remark~\ref{rem2}, $k$-prognosability of the computed subplant $H_2$ (given by $X^{N_p}$ from Step 4) with $L(H_2)\subseteq L$ is equivalent to the critical observability of $H_2$ w.r.t. $X^{N_p}$ as the set of critical states. 
In this way, we can apply the computation scheme from \cite{arxiv_CO} that first computes a special critically observable sublanguage, here $k$-prognosable sublanguage in Step 4 of Algorithm \ref{Alg: pro}, and then controllability and normality are enforced by iterations in Steps 5--15. The supremality then follows \cite[Lemma 7]{arxiv_CO} and Remark \ref{rem2}, namely that after applying the supremal normal operation on the $k$-prognosable sublanguage given by $X^{N_p}$ in Step 4 will always be larger than or equal to the supremal normal operation applied to any other $k$-prognosable sublanguage of the plant.
%$k$-prognosability is an instance of critical observability given by a special critical language $(L_n\cap L')  \setminus \Psi_{f}^{-k} $.
We conclude that $\mathrm{supCNP}^k(L, \Psi_f^{-k}, \Sigma_{uc},P)$ of Algorithm \ref{Alg: pro} is the supremal $k$-prognosable language that is controllable and normal w.r.t. $L$ and $P$.
\end{proof}

\begin{example} \label{Exa:compute_suppro}
Consider again the system $G_1$ depicted in Fig.~\ref{Fig:G_1}(a). Let $H_1=G_1$. The observer $Obs(G_1)=Obs(H_1)$ is portrayed in Fig.~\ref{Fig:G_1}(b). For simplicity, assume that $E_{c}=E_o$ and $E_{uc}=E_{uo}$. By Example 1, $G_1$ is not $2$-prognosable. Now, we illustrate the computation of $\mathrm{supCNP}^2(L(G_1), \Psi_f^{-2},$ $\Sigma_{uc},P)$ for $G_1$. Specifically, we have $\overline{L_n\setminus \Psi_f^{-2}}= L_n\setminus \Psi_f^{-2} =\{\varepsilon,\tau, \tau a, \tau aa, \tau aabc^j\}$ for $j\in \mathbb{N}$. 
In this way, the set of marked states is $X_m=\{1,2,3,4,9,10\}$. 
According to the observer $Obs(H_1)$, the observer state that contains both marked and unmarked states is $\{1,6\}$. It holds that $X^{N_p}=X_1\setminus\{1,6\}=\{0,2,3,4,5,7,8,9,10\}$. In the first iteration, we compute the condition $X_2^C=X_2^N=\{0,2,3,4,5,7,$ $8,9,10\}$. After removing the unreached states and the corresponding arcs, we have $X_3= \{0,2,3,4,5,9,10\}$ and $L(H_3)$ as shown in Fig.~\ref{fig:supG1}. In the next iteration,  we derive $X_3^C=X_3^N=\{0,2,3,4,5,9,10\}=X_3$. Since the computed sets remain unchanged, we conclude that $\mathrm{supCNP}^2(L(G_1), \Psi_f^{-2}, \Sigma_{uc},P)=L(H_3)$.
    \hfill$\blacksquare$
\end{example}

\begin{figure}[htbp]
    \centering
    \begin{tikzpicture}{every node}=[font= \fontsize{9pt}{10pt},scale=1]
	\node (c0) [circle,thick,draw=black!75] {$0$};
	%第一行		
	\node (c1) [circle,thick,draw=black!75,fill=red!75,right=0.6cm of c0] {$9$};
	\node (c2) [circle,thick,draw=black!75,fill=red!75,right=0.6cm of c1] {$2$};
    \node (c3) [circle,thick,draw=black!75,fill=red!75,right=0.6cm of c2] {$3$};
	\node (c4) [circle,thick,draw=black!75,fill=red!75,right=0.6cm of c3] {$4$};
	\node (c5) [circle,thick,draw=black!75,left=0.6cm of c0] {$5$};
	\node (c6) [circle,thick,draw=black!75,fill=red!75,font=\footnotesize,right=0.6cm of c4] {$10$};
    %\node (c7) [circle,thick,draw=black!75,right=0.8cm of c6] {$7$};
	%\node (c8) [circle,thick,draw=black!75,right=0.8cm of c7] {$8$};
    %\node (c9) [circle,thick,draw=black!75,fill=red!75,above=0.6cm of c1] {$9$};

        \draw[-latex,line width=0.5pt] (0,0.7) -- (c0.north);
	\draw[-latex,line width=0.5pt] (c0.east) --node [auto]{$c$} (c1.west);
    %\draw[-latex,line width=0.5pt] (c0.north east) --node [auto]{$c$} (c9.west);
    %\draw[-latex,line width=0.5pt] (c9.east) --node [auto]{$b$} (c2.north);
	\draw[-latex,line width=0.5pt] (c1.east) --node [auto]{$b$} (c2.west);
    \draw[-latex,line width=0.5pt] (c2.east) --node [auto]{$b$} (c3.west);
    \draw[-latex,line width=0.5pt] (c3.east) --node [auto]{$f_1$} (c4.west);
	%\draw[-latex,line width=0.5pt] (c0.south) --node [auto]{$b$} (c3.north);
	\draw[-latex,line width=0.5pt] (c0.west) --node [auto]{$\tau$} (c5.east);
    \draw[-latex,line width=0.5pt] (c4.east) --node [auto]{$c$} (c6.west);
    %\draw[-latex,line width=0.5pt] (c7.east) --node [auto]{$b$} (c8.west);
	%\draw[-latex,line width=0.5pt] (c5.east) --node [auto]{$a$} (c6.west);
    %\draw[-latex,line width=0.5pt] (c2.north) -- (2.78,0.6)--node [auto,swap]{$b$}(0,0.6)-- (c0.north);
	%环
	%\draw[-latex,line width=0.5pt] (4.38,-1.6)arc[start angle=-90, end angle=90, x radius=.32, y radius=.2];
	%\node[black,below] at (4.85,-1.25) {{$c$}};

    \draw[-latex,line width=0.5pt] (6.2,-.21)arc[start angle=-90, end angle=90, x radius=.32, y radius=.2];
	\node[black,below] at (6.6,0.2) {{$c$}};

    %\draw[-latex,line width=0.5pt] (-1.6,0.21)arc[start angle=90, end angle=270, x radius=.32, y radius=.2];
	%\node[black,below] at (-2.1,0.2) {{$b$}};
	\end{tikzpicture}
    \caption{$\mathrm{supCNP}^2(L(G_1), \Psi_f^{-2}, \Sigma_{uc},P)$.}
    \label{fig:supG1}
\end{figure}

    \subsection{Active diagnosis}
In a similar way, according to Proposition \ref{prop:max}, we can enforce diagnosability by enforcing pre-normality of $L_f^{\geq N_o}$ w.r.t. $L(G)$ and $P$.
%Since language $L_n\cup L_f^{< N}$ always prefix-closed, the normality of $L_n\cup L_f^{< N}$ equivalent to the pre-normality w.r.t. $L$. 
Given a DFA recognizing $L$, the sets of fault events $\Sigma_f$ and uncontrollable events $\Sigma_{uc}$, the number of observer states $N_o$, and natural projection $P$, the supremal diagnosable sublanguage of $G$ that is controllable and normal w.r.t. $G$ and projection $P$ is denoted by $\mathrm{supCND}(L, L_{f}^{\geq N_o}, \Sigma_{uc}, P)$. Then, inspired by Theorem \ref{Theorem:sup_prognosis_l}, this language can be computed by taking $M = L_{f}^{\geq N_o}$ in Step 2 of Algorithm~\ref{Alg: pro}. %This is because $L_n \cup L_f^{< N_o}=L\setminus  L_f^{\geq N_o}$ is pre-normal w.r.t. $L$ and $P$ if and only if its complement $L_{f}^{\geq N_o}$ is. This is a symmetry property of pre-normality. But unlike  $L_{f}^{\geq N_o}$, which is only extension-closed,  $L_n \cup L_f^{< N_o}$ is prefix-closed.

\begin{theorem}\label{Theorem:sup_diagnosable_l}
    \rm
Assume that $P(\Psi(\Sigma_f))$ is finite. Given a DFA $G$ recognizing $L=L(G)$, the supremal diagnosable sublanguage of $G$ that is controllable and normal w.r.t. $G$ and projection $P$ is $\mathrm{supCND}(L, L_{f}^{\geq N_o}, \Sigma_{uc},P)$ computed by Algorithm \ref{Alg: pro} with $M = L_{f}^{\geq N_o}$.
\end{theorem}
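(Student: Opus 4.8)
The plan is to mirror the proof of Theorem~\ref{Theorem:sup_prognosis_l}, substituting the diagnosability characterization of Propositions~\ref{Prop: ldia} and~\ref{prop:max} for the prognosability characterization of Proposition~\ref{Pro:prodef}. The finiteness hypothesis on $P(\Psi(\Sigma_f))$ is exactly what these propositions require, and they yield: for every subplant $L'=\overline{L'}\subseteq L$, $L'$ is diagnosable iff $L_f^{\geq N_o}\cap L'$ is pre-normal w.r.t. $L'$ and $P$, equivalently iff the prefix-closed language $(L_n\cup L_f^{<N_o})\cap L'$ is normal w.r.t. $L'$. The key structural difference from the prognosis case is that this is a \emph{pure} pre-normality condition, carrying no extra prefix-closedness requirement like the one $k$-prognosability imposes on $L_n\setminus\Psi_f^{-k}$. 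Hence the ``good'' specification language $L_n\cup L_f^{<N_o}$ is already prefix-closed, so Step~2 simply sets $M:=L_n\cup L_f^{<N_o}$ (equivalently $M:=L_f^{\geq N_o}$, since marking $X_m$ or $X_1\setminus X_m$ yields the same $X^{N_p}$ in Step~4, the Step~4 condition being symmetric under complementation), no supremal-prefix-closed-subset computation is needed, and Lemma~\ref{Lemma:profix} need not be invoked.

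Second I would show that the iterations in Steps~5--15 preserve diagnosability, replacing the role of Lemma~\ref{Lemma:profix}; this is now immediate, because pre-normality is inherited by sublanguages: if $L_f^{\geq N_o}\cap L(H_2)$ is pre-normal w.r.t. $L(H_2)$ and $L(H_i)\subseteq L(H_2)$, then $P^{-1}P(L_f^{\geq N_o}\cap L(H_i))\cap L(H_i)\subseteq P^{-1}P(L_f^{\geq N_o}\cap L(H_2))\cap L(H_2)\cap L(H_i)=(L_f^{\geq N_o}\cap L(H_2))\cap L(H_i)=L_f^{\geq N_o}\cap L(H_i)$. Consequently, exactly as in Theorem~\ref{Theorem:sup_prognosis_l}, diagnosability is computed only once (in Step~4 via $X^{N_p}$), after which only controllability and normality are iterated.

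For supremality I would re-run the critical-observability argument. By Proposition~\ref{prop:max} and the equivalence recorded in Remark~\ref{rem2} (with critical language $K_c=L_f^{\geq N_o}$, whose critical states are the marked states $X_m$ of Step~3), a subplant $L'\subseteq L$ is diagnosable iff it is critically observable w.r.t. $G$, $P$, and $X_m$ --- and, unlike for $k$-prognosability, with no prefix-closedness side condition attached. Thus the language given by $X^{N_p}$ in Step~4 is the special critically observable sublanguage of \cite[Lemma~6]{arxiv_CO}, and the fixed point reached by Steps~5--15 is, by \cite[Lemma~7]{arxiv_CO}, the supremal critically observable (hence diagnosable), controllable, and normal sublanguage; this is the output $\mathrm{supCND}(L,L_f^{\geq N_o},\Sigma_{uc},P)$. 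It remains to match this with ``the supremal diagnosable, controllable, and normal sublanguage'' of the statement: one inclusion is trivial, and for the other one notes that a controllable and normal sublanguage of $L$ is, by the Step~6 normality condition, recognized by the restriction of the SPA $G$ to a union of observer-classes, hence has an observer with at most $N_o$ states, so by Proposition~\ref{prop:max} every such diagnosable sublanguage is already diagnosable within $N_o$ observations. Termination and the $O(2^{|Q|}\,|Q|^2)$ complexity bound carry over verbatim from Theorem~\ref{Theorem:sup_prognosis_l}.

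The step I expect to be the main obstacle is not the logic above --- which largely copies and simplifies Theorem~\ref{Theorem:sup_prognosis_l} --- but the bookkeeping that makes $L_f^{\geq N_o}$ \emph{state-based} in the recognizer on which Algorithm~\ref{Alg: pro} actually runs: just as a DFA marking $\Psi_f^{-k}$ had to be composed in for the prognosis case, here $G$ must be refined (beyond $G\parallel\mathrm{Obs}(G)$) by the product with a deterministic automaton that records whether a fault has occurred and counts observable events up to the threshold $N_o$, so that $L_f^{\geq N_o}$ corresponds to a well-defined $X_m$. One must then argue that this refinement leaves the relevant observer size --- hence the threshold $N_o$ licensed by the monotonicity in the proof of Proposition~\ref{prop:max} --- unchanged for every subplant visited by the algorithm, and that liveness and convergence (Assumption~A1) are maintained, or otherwise accounted for, along the iterations.
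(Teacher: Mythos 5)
Your proposal is correct and follows essentially the same route as the paper, whose entire proof of this theorem is the one-line observation that it follows from Theorem~\ref{Theorem:sup_prognosis_l} and Proposition~\ref{prop:max}; you simply spell out the details that the paper leaves implicit (the disappearance of the prefix-closedness side condition, inheritance of pre-normality by subplants, the critical-observability supremality argument, and the justification that $N_o$ observations suffice for every normal subplant). The state-based bookkeeping for $L_f^{\geq N_o}$ that you flag as the main obstacle is indeed glossed over by the paper as well, so raising it is a fair and useful addition rather than a divergence.
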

\begin{proof}
It follows from Theorem \ref{Theorem:sup_prognosis_l} and Proposition~\ref{prop:max}. 
\end{proof}

Note that if $P(\Psi((\Sigma_f))$ is not finite, due to Proposition~\ref{Prop: diaPsi_f}, then we can still use the sufficient condition, namely $L_{f}^{\geq N_o}$ is pre-normal w.r.t. $L$ to enforce diagnosability.
Although the supremal supervisors enforcing $k$-prognosability and diagnosability, given by $\mathrm{supCNP}^k(L,$ $\Psi_f^{-k},\Sigma_{uc}, P)$ and $\mathrm{supCND}(L,L_{f}^{\geq N_o},\Sigma_{uc}, P)$, always exist and can be computed using Algorithm \ref{Alg: pro}, their exponential complexity renders them impractical for large-scale modular systems. To address this limitation, we introduce a modular synthesis approach in the next section. 
\section{Active prognosis and diagnosis for modular DESs}\label{Section:modularenforcement}

In this section, we extend active prognosis/diagnosis from monolithic to modular DESs. Given a modular system $G=\parallel_{i=1}^l G_i$, and a component $G_i=(Q_i,\Sigma_i, \delta_i, q_{0,i})$ with $\Sigma_i=\Sigma_{i,o}\cup \Sigma_{i,uo}$ and $\Sigma_{i,f}\subseteq \Sigma_{i,uo}$, for $i=1,\dots ,l$, let the set of faults be $\Sigma_f=\bigcup_{i=1}^l \Sigma_{i,f}$.
We denote the faulty and non-faulty sublanguages of $G_i$, by $L_{i,f}$ and $L_{i,n}$ with $L_{i,f}\cup L_{i,n}=L(G_i)$, respectively. Then, the faulty and non-faulty sublanguages of the global plant $G$ are defined by 
\begin{align}\label{faulty_composition}
L_f = &\, L_{1,f}\parallel L_2\parallel L_3\ldots L_l \cup L_1\parallel L_{2,f}\parallel L_3\ldots L_l \cup \ldots \notag \\
      &\, \cup L_1\parallel L_2\parallel\ldots L_{l,f} \tag{3}
\end{align}
and $L_n=\parallel_{i=1}^l L_{i,n}$, respectively. 
Let $\Psi (\Sigma_{i,f}) = L_i\cap \Sigma_i^* \Sigma_{i,f}$ denote the set of strings that end with a fault event in a component $G_i$. The set of strings ending with a fault in the global plant $G=\parallel_{i=1}^l G_i$ is defined by $\Psi (\Sigma_{f}) = \Psi (\Sigma_{1,f})\parallel L_2\parallel L_3\ldots L_l\cup L_1\parallel \Psi (\Sigma_{2,f})\parallel L_3\ldots L_l\cup \ldots \cup L_1\parallel L_2\parallel\ldots \Psi (\Sigma_{l,f})$.

\subsection{Modular active prognosis}
 
Similar to the monolithic case, our approach is based on pre-normality of the prefix-closed languages $ L_{i,n}\setminus \Psi_{i,f}^{-k}$ w.r.t. $L_i$, where $\Psi_{i,f}^{-k}=P^{i^{-1}}_{i,o}P^i_{i,o}[ \Psi(\Sigma_{i,f}) / P^{i^{-1}}_{i,o}(\Sigma_{i,o}^{\leq k})]\Sigma_i^* \cap \overline{\Psi(\Sigma_{i,f})}$.  
It is known that pre-normality and prefix-closedness are preserved by the synchronous product, namely, if
for $i=1,\dots l$, $K_i$ are pre-normal (resp. prefix-closed) w.r.t. $L_i$ and $P^i_{i,o}$ then 
 $\parallel_{i=1}^l K_i$ is pre-normal (resp. prefix-closed) w.r.t. $\parallel_{i=1}^l L_i$ and $P$.
%It is known that pre-normality is preserved under parallel composition. Specifically, if a sublanguage $K_i\subseteq L_i$ is pre-normal w.r.t. $L_i$ and $P_{i,o}^i$ for $i=1,\dots l$, then language $\parallel_{i=1}^l K_i$ is pre-normal w.r.t. $\parallel_{i=1}^l L_i$ and $P$.
Now, we show that the pre-normality is also preserved for the composition of the type defined in Eq.~(\ref{faulty_composition}) when defining global faulty language based on local ones.

%Alternatively, we can use the following property of pre-normality w.r.t. synchronous product, namely if for $i=1,\dots l$, $K_i$ are pre-normal w.r.t. $L_i$ and $P^i_{i,o}$ then  $K_1\|L_2 \| L_3 \dots L_n \cup   L_1 \|K_2 \| L_3 \dots L_n   \cup \dots \cup L_1\|L_2 \| L_3 \dots K_n $ are pre-normal w.r.t. $\parallel_{i=1}^l L_i$ and $P$.
 
\begin{lemma} \label{Coro:||prenormal}
    Let $G = \|^l_{i=1} G_i$ be a modular plant with $L_i = L(G_i)$ over $\Sigma_i$, $i=1,\dots, l$ with $l \geq 2$, and let $L=L(G)$. For every $K_i$ that is pre-normal w.r.t. $L_i$ and $P^i_{i,o}$, it holds that  
 $K_1\|L_2 \| L_3 \dots L_l \cup   L_1 \|K_2 \| L_3 \dots L_l   \cup \dots \cup L_1\|L_2 \| L_3 \dots K_l $ is pre-normal w.r.t. $\parallel_{i=1}^l L_i$ and $P$.
\end{lemma}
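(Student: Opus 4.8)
The plan is to reduce the claim to two facts already at hand: pre-normality is preserved by the synchronous product (recalled immediately before the lemma, under the standing assumption $\Sigma_{i,o}\cap\Sigma_j = \Sigma_i\cap\Sigma_{j,o} = \Sigma_{i,o}\cap\Sigma_{j,o}$ on shared observable events), and pre-normality is closed under finite unions (cf.\ the proof of Proposition~\ref{prop:pro=dia}). Write $M_i = L_1 \parallel \cdots \parallel L_{i-1} \parallel K_i \parallel L_{i+1} \parallel \cdots \parallel L_l$ for $i = 1,\dots,l$, so that the language in the statement is $M = \bigcup_{i=1}^{l} M_i$. Since each $K_i \subseteq L_i$ and the synchronous product is monotone, $M_i \subseteq L_1\parallel\cdots\parallel L_i\parallel\cdots\parallel L_l = L$, hence $M \subseteq L$; together with the trivial $M \subseteq P^{-1}P(M)$, proving pre-normality of $M$ w.r.t.\ $L = \parallel_{i=1}^{l} L_i$ and $P$ amounts to showing $P^{-1}P(M)\cap L \subseteq M$.

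First I would establish that every $M_i$ is pre-normal w.r.t.\ $L$ and $P$. For each index $j$, the full local language $L_j$ is pre-normal w.r.t.\ $L_j$ and $P^j_{j,o}$, since $L_j \subseteq P^{j^{-1}}_{j,o}P^j_{j,o}(L_j)$ always holds and therefore $P^{j^{-1}}_{j,o}P^j_{j,o}(L_j)\cap L_j = L_j$. Combining this with the hypothesis that $K_i$ is pre-normal w.r.t.\ $L_i$ and $P^i_{i,o}$, and invoking the preservation of pre-normality under the synchronous product, the product $M_i$ whose $i$-th factor is $K_i$ and whose remaining factors are the $L_j$ is pre-normal w.r.t.\ $\parallel_{j=1}^{l} L_j = L$ and $P$.

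It then remains to pass from the individual $M_i$ to their union. Here I would use that a finite union of languages pre-normal w.r.t.\ $L$ and $P$ is again pre-normal w.r.t.\ $L$ and $P$: since $P$ and $P^{-1}$ commute with unions and intersection distributes over union, $P^{-1}P\bigl(\bigcup_{i} M_i\bigr)\cap L = \bigcup_{i}\bigl(P^{-1}P(M_i)\cap L\bigr) = \bigcup_{i} M_i$, where the last equality is pre-normality of each $M_i$. Applying this with $M = \bigcup_{i=1}^{l} M_i$ yields that $M$ is pre-normal w.r.t.\ $L$ and $P$, which is exactly the claim.

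Unlike elsewhere in the paper, I do not expect a genuine obstacle here: the argument is a clean composition of the synchronous-product preservation lemma with the union-stability of pre-normality. The one point worth care — and in fact the very reason the statement holds — is that pre-normality, in contrast to observability, is closed under unions, so the ``one faulty component at a time'' construction of Eq.~(\ref{faulty_composition}) inherits pre-normality from its disjuncts; the remaining work is purely bookkeeping, namely keeping the module projections $P^i_{i,o}$ and the global projection $P$ consistent, which is guaranteed by the shared-observability assumption.
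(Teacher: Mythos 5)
Your proposal is correct and follows essentially the same route as the paper: the paper's proof also distributes $P^{-1}P$ and the intersection with $L$ over the union, and then bounds each term $P^{-1}P(L_1\|\cdots\|K_i\|\cdots\|L_l)\cap L$ by the synchronous-product preservation of pre-normality (spelled out inline via $P^{-1}P(K_1\|L_2)\cap(L_1\|L_2)\subseteq (P^{1^{-1}}_{1,o}P^1_{1,o}(K_1)\cap L_1)\|L_2 = K_1\|L_2$, using that each $L_j$ is trivially pre-normal w.r.t.\ itself). Your reorganization into ``each disjunct is pre-normal w.r.t.\ $L$'' plus ``pre-normality is closed under finite unions'' is just a cleaner packaging of the same argument.
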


\begin{proof}
 For simplicity, we prove the conclusion for two components, because the property can be extended to general $l \geq 2$. We have that $P_{1,0}^{1^{-1}}P_{1,o}^1(K_1)\cap L_1\subseteq K_1$ and $P_{2,0}^{2^{-1}}P_{2,o}^1(K_2)\cap L_2\subseteq K_2$. It holds that
 \begin{align*}
 &P^{-1}P[(K_1\parallel L_2) \cup (L_1\| K_2)] \cap (L_1\parallel L_2)\subseteq \\
 &[P^{-1}P(K_1|| L_2)\cap (L_1|| L_2)] \cup [P^{-1}P(L_1|| K_2)\cap (L_1|| L_2)]\\
 &\subseteq {P^{1^{-1}}_{1,o}}{P^1_{1,o}}(K_1)||{P^{2^{-1}}_{2,o}}{P^2_{2,o}}(L_2)\cap (L_1|| L_2)~\cup\\
 &~~~~{P^{1^{-1}}_{1,o}}{P^1_{1,o}}(L_1) || {P^{2^{-1}}_{2,o}}{P^2_{2,o}}(K_2)\cap (L_1|| L_2)\\
 &=[({P^{1^{-1}}_{1,o}}{P^1_{1,o}}(K_1)\cap L_1)|| L_2] \cup [L_1 || ({P^{2^{-1}}_{2,o}}{P^2_{2,o}}(K_2)\cap L_2)] \\
 &= (K_1\parallel L_2) \cup (L_1\parallel K_2).
 \end{align*}
 We conclude that $K_1\|L_2 \| L_3 \dots L_l \cup  \dots \cup L_1\|L_2 \| L_3 \dots K_l $ is pre-normal w.r.t. $\parallel_{i=1}^l L_i$ and $P$ if $K_i$ that is pre-normal w.r.t. $L_i$ and $P^i_{i,o}$ for $i=1,\dots, l$.
\end{proof}

Lemma \ref{Coro:||prenormal} implies that pre-normality is preserved under composition of faulty languages.
 We can construct local supervisors enforcing $k$-prognosability yielding  $k$-prognosable $L'_i \subseteq L_i$, i.e., $L'_{i,n}\setminus \Psi_{i,f}^{-k}$ is prefix-closed and pre-normal w.r.t. $L'_i$ for $i=1,2,\ldots, l$. We require that for the global plant $ L'_{n}\setminus \Psi_{f}^{-k}$ should be prefix-closed and pre-normal w.r.t. $L'$. 
 However, the global plant $G=\parallel_{i=1}^l G_i$ may not be $k$-prognosable w.r.t. fault $\Sigma_f=\bigcup_{i=1}^l \Sigma_{i,f}$ even if $G_i$ is $k$-prognosable w.r.t. fault $\Sigma_{i,f}$ for all $i=1,2,\ldots, l$. To this end, this section provides an approach to enforce the modular standard prognosability, i.e., $0$-prognosability. Before presenting the formal conclusion on modular active prognosis, we first introduce the following result, which simplifies the definition of the language $L_{n}\setminus \Psi_{f}^{-0}$.

\begin{lemma}\label{Coro:||-0}
Let $G = \|^l_{i=1} G_i$ be a modular plant with $L_i = L(G_i)$ over $\Sigma_i$, $i=1,\dots, l$ with $l \geq 2$, and let $L=L(G)$. For every $K_i\subseteq L_i$, we have
 $L \setminus \parallel_{i=1}^l K_i= (L_1\setminus K_1)\|L_2 \| L_3 \dots L_l \cup   L_1 \|(L_2\setminus K_2) \| L_3 \dots L_l   \cup \dots \cup L_1\|L_2 \| L_3 \dots (L_l\setminus K_l) $.
\end{lemma}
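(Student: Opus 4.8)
The plan is to reduce the claimed identity to a De Morgan computation for inverse projections, using only the definition of the synchronous product and the fact that preimages distribute over set difference. Throughout, $P_i:\Sigma^*\to\Sigma_i^*$ denotes the projection onto the $i$-th local alphabet as in Fig.~\ref{Fig: Notation}, so that $L=\parallel_{i=1}^l L_i=\bigcap_{i=1}^l P_i^{-1}(L_i)$ and, treating each $K_i$ as a language over $\Sigma_i$, $\parallel_{i=1}^l K_i=\bigcap_{i=1}^l P_i^{-1}(K_i)$. Since $K_i\subseteq L_i$ gives $P_i^{-1}(K_i)\subseteq P_i^{-1}(L_i)$, we have $\parallel_{i=1}^l K_i\subseteq L$, so the global difference $L\setminus\parallel_{i=1}^l K_i$ is genuinely a difference taken inside $L$.

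First I would rewrite each summand on the right-hand side. Fix $i\in\{1,\dots,l\}$. By definition of the synchronous product, the $i$-th term $L_1\parallel\cdots\parallel(L_i\setminus K_i)\parallel\cdots\parallel L_l$ equals $P_i^{-1}(L_i\setminus K_i)\cap\bigcap_{j\neq i}P_j^{-1}(L_j)$. Because taking a preimage under a map commutes with set difference, $P_i^{-1}(L_i\setminus K_i)=P_i^{-1}(L_i)\setminus P_i^{-1}(K_i)$, and hence this term equals $\bigl(\bigcap_{j=1}^l P_j^{-1}(L_j)\bigr)\setminus P_i^{-1}(K_i)=L\setminus P_i^{-1}(K_i)$. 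Then, taking the union over $i$ and applying De Morgan's law relative to the set $L$, $\bigcup_{i=1}^l\bigl(L\setminus P_i^{-1}(K_i)\bigr)=L\setminus\bigcap_{i=1}^l P_i^{-1}(K_i)=L\setminus\parallel_{i=1}^l K_i$, which is exactly the left-hand side. Equivalently, one can argue pointwise: $s\in L\setminus\parallel_{i=1}^l K_i$ iff $P_j(s)\in L_j$ for every $j$ and $P_i(s)\notin K_i$ for some $i$, i.e. $P_i(s)\in L_i\setminus K_i$ for some $i$ (using $L_i\setminus K_i\subseteq L_i$), which is precisely membership in the $i$-th summand.

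There is essentially no obstacle here. The only points requiring care are that every projection is taken onto the correct local alphabet and that the hypothesis $K_i\subseteq L_i$ is what legitimizes passing the global complement through the synchronous product; the statement is a purely set-theoretic identity and therefore needs no regularity or automata-theoretic input beyond the definitions already recalled in the excerpt.
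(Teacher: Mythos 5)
Your proposal is correct and follows essentially the same route as the paper's proof: both identify the synchronous products with intersections of inverse projections, apply De Morgan's law relative to $L$, and use that inverse projections distribute over set difference. The only cosmetic differences are that you run the computation from the right-hand side to the left and treat general $l$ directly, whereas the paper computes left-to-right for $l=2$ and appeals to the obvious extension.
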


\begin{proof}
For simplicity, we prove the conclusion for $l=2$, since the property can be extended to general $l \geq 2$. 
 It holds 
 \begin{align*}
 &(L_1\| L_2)\setminus (K_1\| K_2)\\
 &=[(L_1\| L_2)\setminus P_{1}^{-1}(K_1)] \cup [(L_1\| L_2)\setminus P_{2}^{-1}(K_2)]  \\
 &=[(L_1\| L_2)\setminus (K_1\| \Sigma_2^*)] \cup [(L_1\| L_2)\setminus (\Sigma_1^*\| K_2)]\\
 &=[(L_1\| L_2)\setminus (K_1\| L_2)] \cup [(L_1\| L_2)\setminus (L_1\| K_2)]\\
 &=[(L_1\setminus K_1)\| L_2] \cup [L_1\|(L_2\setminus K_2)],
 \end{align*}
 which completes the proof.
\end{proof}

 Let $\Psi_{i,f}^{-0}=P_{i,o}^{i^{-1}}P_{i,o}^{i}(\Psi(\Sigma_{i,f}))\cap \overline{\Psi(\Sigma_{i,f})}$ be the language consisting of local strings that look like a string leading to the first fault of $G_i$. It holds $L_n\cap \Psi_f^{-0}=\|_{i=1}^l (L_{i,n}\cap \Psi_{i,f}^{-0})$.
 According to Lemma~\ref{Coro:||-0}, we have 
 $L_n\setminus \Psi_f^{-0}=L_n\setminus (L_n\cap \Psi_f^{-0})=(L_{1,n}\setminus \Psi_{1,f}^{-0})\parallel L_2\parallel L_3\ldots L_l\cup L_1\parallel (L_{2,n}\setminus \Psi_{2,f}^{-0}) \parallel L_3\ldots L_l\cup L_1\parallel L_2\ldots (L_{l,n}\setminus \Psi_{l,f}^{-0}) $, which is the sublanguage of the global plant that we require to be prefix-closed and pre-normal w.r.t. $\parallel_{i=1}^l L_i$.
In the following, we show that $0$-prognosability is preserved under the synchronous product. 

\begin{proposition}\label{Coro:||0pro}
   Let $G = \|^l_{i=1} G_i$ be a modular plant with $L_i = L(G_i)$ over $\Sigma_i$, $i=1,\dots, l$ with $l \geq 2$, and let $L=L(G)$. The global plant $G=\parallel_{i=1}^l G_i$ is prognosable w.r.t. fault $\Sigma_f$ and projection $P$ if for every $i=1,2,\ldots, l$, $G_i$ is prognosable w.r.t. fault $\Sigma_{i,f}$ and projection $P_{i,o}^i$. 
\end{proposition}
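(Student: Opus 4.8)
The plan is to reduce the assertion to the pre-normality-and-prefix-closedness characterization of $0$-prognosability and then lift it from the components to the global plant using the componentwise decomposition of $L_n\setminus\Psi_f^{-0}$; all automata involved are live and convergent by Assumption~A1. By Corollary~\ref{Corollary:equiv}, $G$ is prognosable w.r.t. $\Sigma_f$ and $P$ if and only if it is $0$-prognosable, and by Corollary~\ref{Corollary: prenormal and closed} (with $k=0$) the latter holds if and only if $L_n\setminus\Psi_f^{-0}$ is prefix-closed and pre-normal w.r.t. $L$ and $P$. Applying the same two corollaries componentwise, the hypothesis that each $G_i$ is prognosable w.r.t. $\Sigma_{i,f}$ and $P^i_{i,o}$ is equivalent to: $L_{i,n}\setminus\Psi_{i,f}^{-0}$ is prefix-closed w.r.t. $L_i$ and pre-normal w.r.t. $L_i$ and $P^i_{i,o}$, for every $i=1,\dots,l$. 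So it remains to propagate these two properties through the composition.

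First I would use the identity $L_n\cap\Psi_f^{-0}=\parallel_{i=1}^l(L_{i,n}\cap\Psi_{i,f}^{-0})$ together with Lemma~\ref{Coro:||-0} to obtain
\[
L_n\setminus\Psi_f^{-0}=\bigcup_{i=1}^{l}\bigl(L_1\parallel\dots\parallel(L_{i,n}\setminus\Psi_{i,f}^{-0})\parallel\dots\parallel L_l\bigr).
\]
For prefix-closedness: every $L_j=L(G_j)$ is prefix-closed and, by the component hypothesis, so is each $L_{i,n}\setminus\Psi_{i,f}^{-0}$; hence each term of the union is a synchronous product of prefix-closed languages and is therefore prefix-closed (prefix-closedness being preserved under synchronous product, as recalled in this section), and a finite union of prefix-closed languages is prefix-closed. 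Thus $L_n\setminus\Psi_f^{-0}$ is prefix-closed.

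For pre-normality I would apply Lemma~\ref{Coro:||prenormal} directly, taking $K_i:=L_{i,n}\setminus\Psi_{i,f}^{-0}$, which is pre-normal w.r.t. $L_i$ and $P^i_{i,o}$ by the component hypothesis; the lemma then gives precisely that the union displayed above is pre-normal w.r.t. $\parallel_{i=1}^l L_i=L$ and $P$. Combining the two previous paragraphs, $L_n\setminus\Psi_f^{-0}$ is prefix-closed and pre-normal w.r.t. $L$ and $P$, so by Corollaries~\ref{Corollary: prenormal and closed} and~\ref{Corollary:equiv} the global plant $G$ is prognosable w.r.t. $\Sigma_f$ and $P$.

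The structural lemmas already carry essentially all the weight, so no heavy computation is expected; the main point to get right is the bookkeeping of the faulty-language composition, i.e.\ verifying that $L_n\setminus\Psi_f^{-0}$ genuinely splits as the stated union of ``one perturbed component against the untouched others,'' which is exactly what Lemma~\ref{Coro:||-0} and the identity for $L_n\cap\Psi_f^{-0}$ deliver. It is worth stressing that only sufficiency holds: interleaving of local behaviors can destroy prognosability even when each component is prognosable, so the converse is not claimed.
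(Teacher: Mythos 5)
Your proposal is correct and follows essentially the same route as the paper: both reduce prognosability to $0$-prognosability via Corollary~\ref{Corollary:equiv}, characterize it as prefix-closedness plus pre-normality of $L_n\setminus\Psi_f^{-0}$ via Proposition~\ref{Pro:prodef}/Corollary~\ref{Corollary: prenormal and closed}, decompose that language with Lemma~\ref{Coro:||-0} and the identity for $L_n\cap\Psi_f^{-0}$, and then invoke Lemma~\ref{Coro:||prenormal} for pre-normality and preservation of prefix-closedness under synchronous product and finite union. Your write-up is in fact slightly more explicit than the paper's on the prefix-closedness bookkeeping, but the argument is the same.
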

 
\begin{proof}
    According to Proposition \ref{Pro:prodef} and Lemma~\ref{Coro:||-0}, we need to show that the language $L_n\setminus \Psi_f^{-0}$ is prefix-closed and pre-normal w.r.t. $L=\parallel_{i=1}^l L_i$ and $P$ if languages $L_{i,n}\setminus \Psi_{i,f}^{-0}$ are prefix-closed and pre-normal w.r.t. $L_i$ and $P_{i,o}^i$ for $i=1,2,\ldots, l$. By  Lemma~\ref{Coro:||prenormal}, 
    $(L_{1,n}\setminus \Psi_{1,f}^{-0})\parallel L_2\parallel L_3\ldots L_l\cup L_1\parallel (L_{2,n}\setminus \Psi_{2,f}^{-0}) \parallel L_3\ldots L_l\cup L_1\parallel L_2\ldots (L_{l,n}\setminus \Psi_{l,f}^{-0}) $ is
     pre-normal w.r.t. $\parallel_{i=1}^l L_i$ and $P$.
   Further, since prefix-closedness is also preserved under the synchronous product and the other type of the product used for faulty languages (cf. Eq.~(\ref{faulty_composition})), we have that language $L_n\setminus \Psi_f^{-0}$ is prefix-closed.
   We conclude that $L_n\setminus \Psi_f^{-0}$ is prefix-closed and pre-normal w.r.t. $L$ and $P$, which is equivalent to that $G$ is prognosable w.r.t. $\Sigma_f$ and $P$ by Proposition~\ref{Pro:prodef} and Corollary~\ref{Corollary:equiv}. 
\end{proof}

Proposition~\ref{Coro:||0pro} implies that if event component $G_i$ is prognosable w.r.t. $\Sigma_{i,f}$ and $P_{i,o}^i$, then their parallel composition $G=\parallel_{i=1}^l G_i$ is prognosable w.r.t. $\Sigma_f$ and $P$.
To achieve modular prognosability enforcement, we first enforce prognosability for every component by computing the supremal controllable, normal, and $0$-prognosable sublanguage $\mathcal{S}_i=\supCNP^0(L_i,\Psi_{i,f}^{-0},\Sigma_{i,uc}, P_{i,o}^i)$. Then, we show that the global prognosability can be achieved through the parallel composition of local supervisors. 
 
\begin{theorem}\label{Theorem:modularenforce_pro}
Let $G = \|^l_{i=1} G_i$ be a modular plant with $L=L(G)$, $L_i = L(G_i)$ over $\Sigma_i$, for $i=1,\dots, l$, where $l \geq 2$. If languages $\supCN(\mathcal{S}_i, L_i, \Sigma_{i,uc},P_{i,o}^i)$ are nonconflicting for all $i=1,\dots, l$, then, $\|^l_{i=1} \mathcal{S}_i$ is controllable, normal and prognosable w.r.t. $L$, $\Sigma_f$ and $P$.
\end{theorem}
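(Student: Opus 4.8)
The plan is to establish the three asserted properties of $\|^l_{i=1}\mathcal{S}_i$ one at a time, using Theorem~\ref{Theorem:sup_prognosis_l}, which says that each $\mathcal{S}_i=\supCNP^0(L_i,\Psi_{i,f}^{-0},\Sigma_{i,uc},P^i_{i,o})$ is a prefix-closed sublanguage of $L_i$ that is controllable and normal w.r.t. $L_i$ and $P^i_{i,o}$ and is prognosable, equivalently $0$-prognosable by Corollary~\ref{Corollary:equiv}, w.r.t. $\Sigma_{i,f}$ and $P^i_{i,o}$. A preliminary remark: since $\mathcal{S}_i$ is already controllable and normal w.r.t. $L_i$, we have $\supCN(\mathcal{S}_i,L_i,\Sigma_{i,uc},P^i_{i,o})=\mathcal{S}_i$, so the hypothesis is exactly that $\{\mathcal{S}_i\}_{i=1}^l$ is nonconflicting, which gives $\overline{\|_i\mathcal{S}_i}=\|_i\overline{\mathcal{S}_i}$ (and since each $\mathcal{S}_i$ is prefix-closed, $\|_i\mathcal{S}_i$ itself is prefix-closed); this is the identity on which the rest of the argument relies.

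For controllability I would take $s\sigma$ with $s\in\|_i\mathcal{S}_i$, $\sigma\in\Sigma_{uc}$, $s\sigma\in L=\|_iL_i$, and check componentwise that $P_i(s\sigma)\in\mathcal{S}_i$: if $\sigma\notin\Sigma_i$ this is immediate, and otherwise $\sigma\in\Sigma_{i,uc}$ (uncontrollability being consistent across components), $P_i(s)\in\mathcal{S}_i$, $P_i(s)\sigma\in L_i$, so controllability of $\mathcal{S}_i$ w.r.t. $L_i$ gives $P_i(s\sigma)\in\mathcal{S}_i$; hence $s\sigma\in\|_i\mathcal{S}_i$. For normality I would use that normality of the prefix-closed $\mathcal{S}_i$ w.r.t. $L_i$ coincides with pre-normality of $\mathcal{S}_i$, and invoke the fact stated before Lemma~\ref{Coro:||prenormal} that pre-normality is preserved by the synchronous product; thus $\|_i\mathcal{S}_i$ is pre-normal, hence (being prefix-closed) normal, w.r.t. $\|_iL_i=L$ and $P$.

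The substantive part is prognosability, and my plan is to apply Proposition~\ref{Coro:||0pro} not to $\{G_i\}$ but to the family of subplants $\{\mathcal{S}_i\}$: each $\mathcal{S}_i$ is prognosable w.r.t. $\Sigma_{i,f}$ and $P^i_{i,o}$ by Theorem~\ref{Theorem:sup_prognosis_l}, so treating a recognizer of $\mathcal{S}_i$ as the $i$-th component automaton, Proposition~\ref{Coro:||0pro} would yield prognosability of $\|_i\mathcal{S}_i$ w.r.t. $\Sigma_f=\bigcup_i\Sigma_{i,f}$ and $P$. The step I expect to be the real obstacle is justifying this substitution, which requires (i) that each $\mathcal{S}_i$ and the composition $\|_i\mathcal{S}_i$ still satisfy Assumption~A1, so that prognosability is defined and the characterization of Proposition~\ref{Pro:prodef} is available, and (ii) that the faulty/non-faulty partition presupposed by Proposition~\ref{Coro:||0pro} and Eq.~(\ref{faulty_composition}) is the one induced on the restricted composition---i.e., that $L_f\cap\|_i\mathcal{S}_i$ equals the composition-type union of the local faulty sublanguages $\mathcal{S}_i\cap L_{i,f}$, and similarly for $\Psi(\Sigma_f)$ and for $\Psi_{i,f}^{-0}$ taken relative to $\mathcal{S}_i$. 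Once these are in place, the internal mechanics of Proposition~\ref{Coro:||0pro} carry over: via Lemma~\ref{Coro:||-0} the language $(\|_i\mathcal{S}_i)_n\setminus\Psi_f^{-0}$ decomposes into a finite union of synchronous products whose factors are the prefix-closed languages $(\mathcal{S}_i\cap L_{i,n})\setminus\Psi_{i,f}^{-0}$ and the prefix-closed $\mathcal{S}_j\cap L_{j,n}$, so it is prefix-closed; Lemma~\ref{Coro:||prenormal} gives its pre-normality w.r.t. $\|_i\mathcal{S}_i$ and $P$ from pre-normality of each $(\mathcal{S}_i\cap L_{i,n})\setminus\Psi_{i,f}^{-0}$ w.r.t. $\mathcal{S}_i$ (Proposition~\ref{Pro:prodef} and Corollary~\ref{Corollary: prenormal and closed} applied to the $0$-prognosable $\mathcal{S}_i$); and Proposition~\ref{Pro:prodef} with Corollary~\ref{Corollary:equiv} then give prognosability of $\|_i\mathcal{S}_i$, completing the argument together with the controllability and normality parts.
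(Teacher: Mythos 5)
Your proposal is correct and follows essentially the same route as the paper: the paper disposes of controllability and normality by citing Theorem~3 of \cite{komenda2023modular} and obtains prognosability by applying Proposition~\ref{Coro:||0pro} to the controlled components, which is exactly your third step. Your only departure is that you unpack the cited modular-control theorem into explicit componentwise arguments for controllability and for preservation of pre-normality under the synchronous product, and you make explicit the (implicit in the paper) conditions needed to apply Proposition~\ref{Coro:||0pro} to the subplants $\mathcal{S}_i$ rather than to the original $G_i$.
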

\begin{proof}
Controllability and normality of 
$\|^l_{i=1} \mathcal{S}_i$ follows from Theorem 3 in \cite{komenda2023modular}, and prognosability
follows from Proposition~\ref{Coro:||0pro}.
\end{proof}

\begin{example} \label{Exa:modularpro}
Consider again two DFAs $G_1=(Q_1, \Sigma_1, \delta_1,$ $ q_{0,1})$ and $G_2=(Q_2, \Sigma_2, \delta_2, q_{0,2})$ depicted in Figs. \ref{Fig:G_1}(a) and \ref{G_2}(a). For simplicity, assume that $E_{c}=E_o$, i.e., $E_{uc}=E_{uo}$.
Since $G_1$ is prognosable by Example~\ref{Exa:prodef}, we change the arc $6\xrightarrow{a}7$ by $6\xrightarrow{b}7$. In this way, by observing $abb$ with $abbf_1\in \Psi(\Sigma_{1,f})$, one cannot infer whether fault $f_1$ will occur. Thus, the revised $G_1$ is not prognosable. Further, $G_2$ is not prognosable since one cannot infer whether fault $f_2$ will occur by observing $a$ with $af_2\in\Psi(\Sigma_{2,f})$.  

Now we show how to enforce the global plant $G_1\parallel G_2$ to be prognosable by the proposed approach. 
For $G_1$, by Algorithm~\ref{Alg: pro}, a DFA $H_1=(X_1,\Sigma,\Delta_1,x_0)$ with $H_1\sqsubset G_1$ and $L(H_1)=L(G_1)$ is constructed as depicted in Fig.~\ref{Fig:G_r1}(a). Its observer $Obs(H_1)$ is portrayed in Fig. \ref{Fig:G_r1}(b). Let the marked states be the states reached by firing strings in $L_{1,f}\cup \Psi_{1,f}^{-0}$, i.e., $X_{m}=\{3,4,10,3',4',10'\}$, which are shown in red in Fig. \ref{Fig:G_r1}(a). According to $Obs(H_1)$, the observer states that contain both marked and unmarked states are $\{3,4,8\}$ and $\{10,8'\}$. It holds that $X^{N_p}=X_1\setminus\{3,4,8,10,8'\}$. At the first iteration, we compute the condition $X_2^C=X_2^N=\{0,1,2,5,6,7,9,2',3',4',10'\}$. After removing all unreached states and the corresponding arcs, we have $X_3= X_2^C=X_2^N$ and $L(H_3)$ as shown in Fig.~\ref{Fig:G_r1}(c). At the second iteration, we derive $X_3^C=X_3^N=X_3$. Since the specification conditions remain unchanged, we conclude that $\mathrm{supCNP}^0(L_1, \Psi_{1,f}^{-0}, \Sigma_{1,uc},P_{1,o}^{1})=L(H_3)$ depicted in Fig.~\ref{Fig:G_r1}(c). In the same way, for $G_2$, the supremal controllable, normal, and prognosable sublanguage $\mathrm{supCNP}^0(L_2, \Psi_{2,f}^{-0}, \Sigma_{2,uc},P_{2,o}^{2})$ is depicted in Fig.~\ref{Fig:G_r1}(d). Finally, by Theorem~\ref{Theorem:modularenforce_pro}, $\mathrm{supCNP}^0(L_1, \Psi_{1,f}^{-0}, \Sigma_{i,uc},P_{1,o}^{1})\parallel \mathrm{supCNP}^0(L_2, \Psi_{2,f}^{-0}, \Sigma_{2,uc},P_{2,o}^{2})$ is controllable, normal, and prognosable w.r.t. $L$, $\Sigma_f$, and $P$.
    \hfill$\blacksquare$
\end{example}

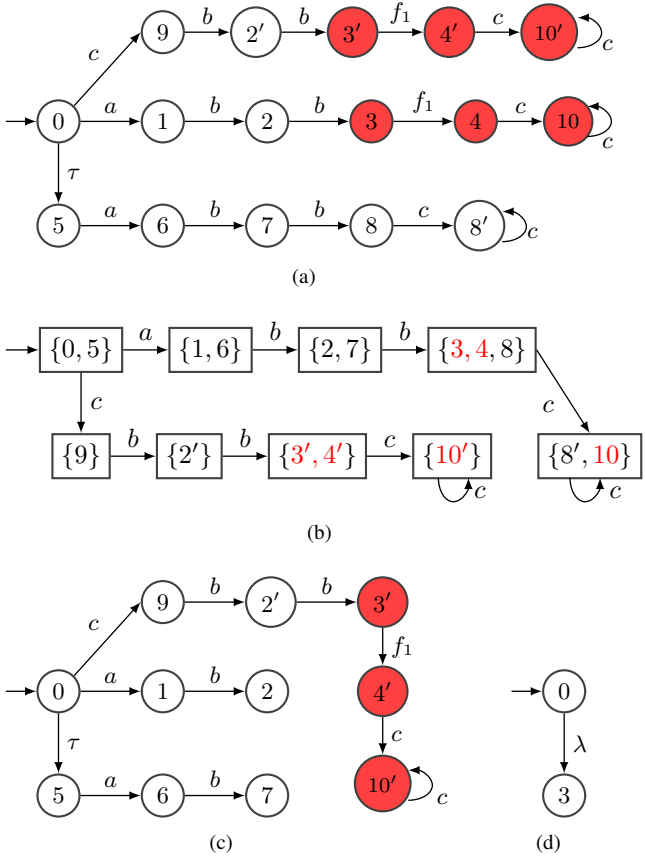
\begin{figure}[H]
\subfigure[]
{
    \centering
	\begin{tikzpicture}{every node}=[font= \fontsize{9pt}{10pt},scale=1]
	\node (c0) [circle,thick,draw=black!75] {$0$};
	%第一行		
	\node (c1) [circle,thick,draw=black!75,right=0.8cm of c0] {$1$};
	\node (c2) [circle,thick,draw=black!75,right=0.8cm of c1] {$2$};
    \node (c3) [circle,thick,draw=black!75,fill=red!75,right=0.8cm of c2] {$3$};
	\node (c4) [circle,thick,draw=black!75,fill=red!75,right=0.8cm of c3] {$4$};
	\node (c5) [circle,thick,draw=black!75,below=0.8cm of c0] {$5$};
	\node (c6) [circle,thick,draw=black!75,right=0.8cm of c5] {$6$};
    \node (c7) [circle,thick,draw=black!75,right=0.8cm of c6] {$7$};
	\node (c8) [circle,thick,draw=black!75,right=0.8cm of c7] {$8$};
    \node (c9) [circle,thick,draw=black!75,above=0.6cm of c1] {$9$};
    \node (c10) [circle,thick,draw=black!75,right=0.6cm of c9] {$2'$};
    \node (c11) [circle,thick,draw=black!75,fill=red!75,right=0.6cm of c10] {$3'$};
	\node (c12) [circle,thick,draw=black!75,fill=red!75,right=0.6cm of c11] {$4'$};
    \node (c13) [circle,thick,draw=black!75,right=0.8cm of c8] {$8'$};
    \node (c14) [circle,thick,draw=black!75,fill=red!75,font=\footnotesize,right=0.6cm of c4] {$10$};
    \node (c15) [circle,thick,draw=black!75,fill=red!75,font=\footnotesize,right=0.6cm of c12] {$10'$};

        \draw[-latex,line width=0.5pt] (-0.7,0) -- (c0.west);
	\draw[-latex,line width=0.5pt] (c0.east) --node [auto]{$a$} (c1.west);
    \draw[-latex,line width=0.5pt] (c0.north east) --node [auto]{$c$} (c9.west);
    \draw[-latex,line width=0.5pt] (c9.east) --node [auto]{$b$} (c10.west);
    \draw[-latex,line width=0.5pt] (c10.east) --node [auto]{$b$} (c11.west);
    \draw[-latex,line width=0.5pt] (c11.east) --node [auto]{$f_1$} (c12.west);
	%\draw[-latex,line width=0.5pt] (c1.south west) --node [auto,swap]{$b$} (c0.south east);
	\draw[-latex,line width=0.5pt] (c1.east) --node [auto]{$b$} (c2.west);
    \draw[-latex,line width=0.5pt] (c2.east) --node [auto]{$b$} (c3.west);
    \draw[-latex,line width=0.5pt] (c3.east) --node [auto]{$f_1$} (c4.west);
	\draw[-latex,line width=0.5pt] (c8.east) --node [auto]{$c$} (c13.west);
	\draw[-latex,line width=0.5pt] (c0.south) --node [auto]{$\tau$} (c5.north);
    \draw[-latex,line width=0.5pt] (c6.east) --node [auto]{$b$} (c7.west);
    \draw[-latex,line width=0.5pt] (c7.east) --node [auto]{$b$} (c8.west);
	\draw[-latex,line width=0.5pt] (c5.east) --node [auto]{$a$} (c6.west);

    \draw[-latex,line width=0.5pt] (c4.east) --node [auto]{$c$} (c14.west);
    \draw[-latex,line width=0.5pt] (c12.east) --node [auto]{$c$} (c15.west);
    %\draw[-latex,line width=0.5pt] (c2.north) -- (2.78,0.6)--node [auto,swap]{$b$}(0,0.6)-- (c0.north);
	%环
	\draw[-latex,line width=0.5pt] (5.9,-1.6)arc[start angle=-90, end angle=90, x radius=.32, y radius=.2];
	\node[black,below] at (6.3,-1.3) {{$c$}};

    \draw[-latex,line width=0.5pt] (7.03,-0.2)arc[start angle=-90, end angle=90, x radius=.32, y radius=.2];
	\node[black,below] at (7.3,-0.1) {{$c$}};

    \draw[-latex,line width=0.5pt] (6.9,0.98)arc[start angle=-90, end angle=90, x radius=.32, y radius=.2];
	\node[black,below] at (7.3,1.2) {{$c$}};
	\end{tikzpicture}}
    \subfigure[]{
    \centering
\begin{tikzpicture}
    % 状态节点
    \node (c0) [draw,thick,draw=black!75] {$\{0,5\}$};
    \node (c1) [draw,thick,draw=black!75, right=0.6cm of c0] {$\{1,6\}$};
    \node (c2) [draw,thick,draw=black!75, right=0.6cm of c1] {$\{2,7\}$};
    \node (c3) [draw,thick,draw=black!75, right=0.6cm of c2] {$\{\textcolor{red}{3,4},8\}$};
    \node (c8) [draw,thick,draw=black!75, below=0.8cm of c0] {$\{9\}$};
    \node (c5) [draw,thick,draw=black!75, right=0.6cm of c8] {$\{2'\}$};
    \node (c6) [draw,thick,draw=black!75, right=0.6cm of c5] {$\{\textcolor{red}{3',4'}\}$};
    \node (c9) [draw,thick,draw=black!75, right=0.6cm of c6] {$\{\textcolor{red}{10'}\}$};
    \node (c7) [draw,thick,draw=black!75, right=0.6cm of c9] {$\{8',\textcolor{red}{10}\}$};

    \draw[-latex,line width=0.5pt] (-1,0) -- (c0.west);
    \draw[-latex,line width=0.5pt] (c0.east) --node [auto]{$a$} (c1.west);
    \draw[-latex,line width=0.5pt] (c1.east) --node [auto]{$b$} (c2.west);
    \draw[-latex,line width=0.5pt] (c2.east) --node [auto]{$b$} (c3.west);
    %\draw[-latex,line width=0.5pt] (c1.south) --node [auto]{$b$} (c5.north);
    %\draw[-latex,line width=0.5pt] (c6.east) --node [auto]{$c$} (c7.west);
    \draw[-latex,line width=0.5pt] (c5.east) --node [auto]{$b$} (c6.west);
    \draw[-latex,line width=0.5pt] (c8.east) --node [auto]{$b$} (c5.west);
    \draw[-latex,line width=0.5pt] (c0.south) --node [auto]{$c$} (c8.north);
    \draw[-latex,line width=0.5pt] (c3.east) --node [auto,swap]{$c$} (c7.north);
    \draw[-latex,line width=0.5pt] (c6.east) --node [auto]{$c$} (c9.west);

    % 环
    %\draw[-latex,line width=0.5pt] (6.35,-0.2) arc[start angle=-90, end angle=90, x radius=.32, y radius=.2];
    %\node[black,below] at (6.8,0.2) {{$c$}};
    
    \draw[-latex,line width=0.5pt] (4.75,-1.7) arc[start angle=-180, end angle=0, x radius=.2, y radius=.32];
    \node[black,below] at (5.3,-1.7) {{$c$}};

    \draw[-latex,line width=0.5pt] (6.5,-1.7) arc[start angle=-180, end angle=0, x radius=.2, y radius=.32];
    \node[black,below] at (7.1,-1.7) {{$c$}};

	\end{tikzpicture}
    }
    \subfigure[]
{
    \centering
	\begin{tikzpicture}{every node}=[font= \fontsize{9pt}{10pt},scale=1]
	\node (c0) [circle,thick,draw=black!75] {$0$};
	%第一行		
	\node (c1) [circle,thick,draw=black!75,right=0.8cm of c0] {$1$};
	\node (c2) [circle,thick,draw=black!75,right=0.8cm of c1] {$2$};
	\node (c5) [circle,thick,draw=black!75,below=0.8cm of c0] {$5$};
	\node (c6) [circle,thick,draw=black!75,right=0.8cm of c5] {$6$};
    \node (c7) [circle,thick,draw=black!75,right=0.8cm of c6] {$7$};
    \node (c9) [circle,thick,draw=black!75,above=0.6cm of c1] {$9$};
    \node (c10) [circle,thick,draw=black!75,right=0.8cm of c9] {$2'$};
    \node (c11) [circle,thick,draw=black!75,fill=red!75,right=0.8cm of c10] {$3'$};
	\node (c12) [circle,thick,draw=black!75,fill=red!75,below=0.5cm of c11] {$4'$};
    \node (c13) [circle,thick,draw=black!75,fill=red!75,font=\footnotesize,below=0.5cm of c12] {$10'$};

        \draw[-latex,line width=0.5pt] (-0.7,0) -- (c0.west);
	\draw[-latex,line width=0.5pt] (c0.east) --node [auto]{$a$} (c1.west);
    \draw[-latex,line width=0.5pt] (c0.north east) --node [auto]{$c$} (c9.west);
    \draw[-latex,line width=0.5pt] (c9.east) --node [auto]{$b$} (c10.west);
    \draw[-latex,line width=0.5pt] (c10.east) --node [auto]{$b$} (c11.west);
    \draw[-latex,line width=0.5pt] (c11.south) --node [auto]{$f_1$} (c12.north);
    \draw[-latex,line width=0.5pt] (c12.south) --node [auto]{$c$} (c13.north);
	%\draw[-latex,line width=0.5pt] (c1.south west) --node [auto,swap]{$b$} (c0.south east);
	\draw[-latex,line width=0.5pt] (c1.east) --node [auto]{$b$} (c2.west);
	\draw[-latex,line width=0.5pt] (c0.south) --node [auto]{$\tau$} (c5.north);
    \draw[-latex,line width=0.5pt] (c6.east) --node [auto]{$b$} (c7.west);
	\draw[-latex,line width=0.5pt] (c5.east) --node [auto]{$a$} (c6.west);
    %\draw[-latex,line width=0.5pt] (c2.north) -- (2.78,0.6)--node [auto,swap]{$b$}(0,0.6)-- (c0.north);

    \draw[-latex,line width=0.5pt] (4.65,-1.45)arc[start angle=-90, end angle=90, x radius=.32, y radius=.2];
	\node[black,below] at (5.1,-1.25) {{$c$}};

	\end{tikzpicture}}
    ~~~
    \subfigure[]
	{
	\centering
	\begin{tikzpicture}{every node}=[font= \fontsize{9pt}{10pt},scale=1]
	\node (c0) [circle,thick,draw=black!75] {$0$};
	%第一行		
	\node (c3) [circle,thick,draw=black!75,below=0.8cm of c0] {$3$};
    \draw[-latex,line width=0.5pt] (-0.7,0) -- (c0.west);
	\draw[-latex,line width=0.5pt] (c0.south) --node [auto]{$\lambda$} (c3.north);
	\end{tikzpicture}	}
    \caption{(a) A DFA $H_1$ with $H_1 \sqsubset G_1$, $L(H_1)=L(G_1)$ and $\delta_1(6,b)=7$, (b) the observer $Obs(H_1)$, (c) $\mathrm{supCNP}^0(L_1, \Psi_{1,f}^{-0}, \Sigma_{1,uc},P_{1,o}^{1})$, and (d) $\mathrm{supCNP}^0(L_2, \Psi_{2,f}^{-0}, \Sigma_{2,uc},P_{2,o}^{2})$.}
    \label{Fig:G_r1}
\end{figure}

\subsection{Modular active diagnosis}
In this subsection, we emphasize that due to the characterization of diagnosability as pre-normality of a suffix of the global faulty language (under the additional assumption that $P(\Psi((\Sigma_f))$ is finite), one can always (even without the above finiteness assumption as the sufficient condition is enough for enforcement) use the above modular prognosability enforcement to modular diagnosability enforcement.

Given a modular system $G=\parallel_{i=1}^l G_i$ and a component $G_i$, let $L_i^{\geq N_{i,o}}=L_{i}\| \Sigma_{i,o}^{\geq N_{i,o}}$ and $L_{i,f}^{\geq N_{i,o}}=L_{i,f}\| \Sigma_{i,o}^{\geq N_{i,o}}$ be the plant sublanguage and faulty sublanguage containing $N_{i,o}$ observations of $G_i$, respectively, where $N_{i,o}$ is the number of local observer states of $G_i$ and $\Sigma_{i,o}^{\geq N_{i,o}}=\{t\in \Sigma_{i,o}^*\mid |P_{i,o}^i(t)|\geq N_{i,o}\}$ . We have the following result for a suffix of the global faulty language $L_f$.
%By Eq.~(\ref{faulty_composition}), there exists a natural number $\bar{N}\in\mathbb{N}$ such that $L_f^{\geq \bar{N}} = (L_{1,f}\cap \Sigma_{1,o}^{\geq N_{1,o}})\parallel L_2\parallel L_3\ldots L_l \cup L_1\parallel (L_{2,f}\cap \Sigma_{2,o}^{\geq N_{2,o}})\parallel L_3\ldots L_l \cup \ldots \notag \cup L_1\parallel L_2\parallel\ldots (L_{l,f}\cap \Sigma_{l,o}^{\geq N_{l,o}})$.

\begin{lemma}\label{Coro:||-dia}
Let $G = \|^l_{i=1} G_i$ be a modular plant with $L=L(G)$, $L_i = L(G_i)$ over $\Sigma_i$, for $i=1,\dots, l$, where $l \geq 2$. There exists a natural number $\bar{N}\in\mathbb{N}$ such that $L_f^{\geq \bar{N}} = L_{1,f}^{\geq N_{1,o}}\parallel L_2^{\geq N_{2,o}}\parallel L_3^{\geq N_{3,o}}\ldots L_l^{\geq N_{l,o}} \cup L_1^{\geq N_{1,o}}\parallel L_{2,f}^{\geq N_{2,o}}\parallel L_3^{\geq N_{3,o}}\ldots L_l^{\geq N_{l,o}} \cup \ldots \notag \cup L_1^{\geq N_{1,o}}\parallel L_2^{\geq N_{2,o}}\parallel\ldots L_{l,f}^{\geq N_{l,o}}$.
\end{lemma}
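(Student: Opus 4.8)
The plan is to reduce the claimed identity to a purely algebraic manipulation of synchronous products, and then to pin down the constant $\bar N$ by a counting argument on the observers.

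First I would normalize the right-hand side. Writing $L_i^{\geq N_{i,o}} = L_i \parallel \Sigma_{i,o}^{\geq N_{i,o}}$ and $L_{j,f}^{\geq N_{j,o}} = L_{j,f}\parallel \Sigma_{j,o}^{\geq N_{j,o}}$, and using that $\parallel$ is associative and commutative, the $j$-th term of the union becomes $(L_1 \parallel \cdots \parallel L_{j,f} \parallel \cdots \parallel L_l) \parallel \big(\parallel_{i=1}^{l}\Sigma_{i,o}^{\geq N_{i,o}}\big)$. Since $\parallel$ distributes over language union (it is an intersection of inverse projections, and both operations commute with union), taking the union over $j=1,\dots,l$ and invoking Eq.~(\ref{faulty_composition}) collapses the whole right-hand side to $L_f \parallel \big(\parallel_{i=1}^{l}\Sigma_{i,o}^{\geq N_{i,o}}\big)$. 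Thus the lemma is equivalent to producing $\bar N\in\N$ with $L_f^{\geq \bar N}=L_f\parallel\Sigma_o^{\geq\bar N}=L_f\parallel\big(\parallel_{i=1}^{l}\Sigma_{i,o}^{\geq N_{i,o}}\big)$.

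One inclusion is then immediate: if $s\in L_f$ and $|P^i_{i,o}(P_i(s))|\geq N_{i,o}$ for every $i$, then $P^o_{i,o}$ only erases letters, so $|P(s)|\ge |P^i_{i,o}(P_i(s))|\geq N_{i,o}$, hence $|P(s)|\geq\max_i N_{i,o}$; taking $\bar N=\max_i N_{i,o}$ (or any smaller value) already yields $L_f\parallel\big(\parallel_{i=1}^{l}\Sigma_{i,o}^{\geq N_{i,o}}\big)\subseteq L_f^{\geq\bar N}$.

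The reverse inclusion is the main obstacle: one must show that a string of $L_f$ carrying enough global observations already carries at least $N_{i,o}$ observations in every component. I would argue by pigeonhole on the observers—since $G$ is live and convergent (Assumption A1) and $Obs(G)$ is finite, a string of $L_f$ with sufficiently long observation must traverse an observable cycle of the global observer, and pumping such a cycle (together with the compatibility assumption $\Sigma_{i,o}\cap\Sigma_j=\Sigma_{i,o}\cap\Sigma_{j,o}$, which ties each local observation to the global one) forces the local observations to surpass each $N_{i,o}$. Concretely I expect $\bar N$ can be taken as a bound depending on the global observer size (e.g.\ proportional to $\prod_{i=1}^l N_{i,o}$), and the remaining work is the careful bookkeeping of which observable events are shared between which components; this is where I anticipate the argument being most delicate, since one has to control the situation in which some component contributes few observable events along $s$ while others contribute many.
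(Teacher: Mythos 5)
Your algebraic reduction is exactly the paper's: both arguments rewrite the $j$-th term of the union as $(L_1\parallel\cdots\parallel L_{j,f}\parallel\cdots\parallel L_l)\parallel\bigl(\parallel_{i=1}^{l}\Sigma_{i,o}^{\geq N_{i,o}}\bigr)$, pull the union out by distributivity of $\parallel$ over unions, and invoke the decomposition of $L_f$ in Eq.~(\ref{faulty_composition}), so that everything collapses to finding $\bar N$ with $L_f\parallel\Sigma_o^{\geq\bar N}=L_f\parallel\bigl(\parallel_{i=1}^{l}\Sigma_{i,o}^{\geq N_{i,o}}\bigr)$. Your inclusion $L_f\parallel\bigl(\parallel_{i}\Sigma_{i,o}^{\geq N_{i,o}}\bigr)\subseteq L_f^{\geq\bar N}$ for any $\bar N\leq\max_i N_{i,o}$ is also correct. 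The genuine gap is the reverse inclusion, which you only sketch. The paper disposes of it in one line by asserting that there is some $\bar N$ with $\max_i N_{i,o}\leq\bar N\leq\sum_{i=1}^l N_{i,o}$ such that $\Sigma_o^{\geq\bar N}=\Sigma_{1,o}^{\geq N_{1,o}}\parallel\cdots\parallel\Sigma_{l,o}^{\geq N_{l,o}}$, calling this obvious; it offers no pumping argument and no counting on observers.

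Your proposed pigeonhole/pumping strategy would not close this gap, and you are right to flag it as the delicate point. The identity the paper relies on is false as a statement about free monoids whenever the local observable alphabets are not all equal: for $\Sigma_{1,o}=\{a\}$, $\Sigma_{2,o}=\{b\}$ and $N_{1,o}=N_{2,o}=1$, the string $a^{\bar N}$ lies in $\Sigma_o^{\geq\bar N}$ for every $\bar N$ but never in $\Sigma_{1,o}^{\geq 1}\parallel\Sigma_{2,o}^{\geq 1}$. Restricting to $L_f$ helps only if the plant forces every component to keep accumulating observations, and neither Assumption A1 nor the compatibility condition $\Sigma_{i,o}\cap\Sigma_j=\Sigma_i\cap\Sigma_{j,o}$ guarantees this: a component with a private observable self-loop can absorb arbitrarily many global observations while another component's local observation count stays at zero, so pumping a cycle of $Obs(G)$ need not increase $|P^o_{i,o}(\cdot)|$ for every $i$, and no finite $\bar N$ can then enforce $|P^o_{i,o}(P(s))|\geq N_{i,o}$ for all $i$. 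In other words, the ``careful bookkeeping'' you defer is not bookkeeping but the actual mathematical content of the lemma, and it requires an additional structural hypothesis (for instance, that every sufficiently long observation of a faulty global string contains observations of every component) rather than the observer-based counting you outline.
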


\begin{proof}
    It is obvious that there exists a natural number $\bar{N}\in\mathbb{N}$ with $\max\{N_{1,o},N_{2,o},\ldots, N_{l,o}\}\leq \bar{N} \leq \sum_{i=1}^l N_{i,o}$ such that $\Sigma_o^{\geq \bar{N}}=\Sigma_{1,0}^{\geq N_{1,o}}\| \Sigma_{2,0}^{\geq N_{2,o}}\| \ldots \| \Sigma_{l,0}^{\geq N_{l,o}}$. By Eq.~(\ref{faulty_composition}) and distributivity of the synchronous product with language unions, we have 
    \begin{align*}
& L_f^{\geq \bar{N}}
 = L_f \parallel \Sigma_o^{\geq \bar{N}} \\
 &= \big( L_{1,f}\parallel L_2 \parallel \cdots \parallel L_l
     \ \cup\  L_1 \parallel L_{2,f} \parallel \cdots \parallel L_l
     \ \cup\  \cdots \\
 &\quad \cup\  L_1 \parallel L_2 \parallel \cdots \parallel L_{l,f} \big)
     \parallel \big( \Sigma_{1,o}^{\geq N_{1,o}} \parallel \cdots \parallel
     \Sigma_{l,o}^{\geq N_{l,o}} \big) \\
 &= L_{1,f}^{\geq N_{1,o}}\parallel L_2^{\geq N_{2,o}}\parallel \cdots \parallel L_l^{\geq N_{l,o}}
 \cup\ L_1^{\geq N_{1,o}}\parallel L_{2,f}^{\geq N_{2,o}}\parallel\\
 &\cdots \parallel L_l^{\geq N_{l,o}}
 \cup\ \cdots \cup\ L_1^{\geq N_{1,o}}\parallel L_2^{\geq N_{2,o}}\parallel \cdots \parallel L_{l,f}^{\geq N_{l,o}},
\end{align*}
which completes the proof.
\end{proof}

%Note that the language $L_n\cup L_f^{< \bar{N}}$ is always prefix-closed, and we only require in Proposition \ref{Prop: ldia} that $L_n\cup L_f^{< \bar{N}}$ is normal w.r.t. $L$. Moreover, 
It is emphasized that if $P(\Psi((\Sigma_f))$ is not finite, due to Proposition \ref{Prop: ldia}, then we can still use the sufficient condition, namely to enforce  that $L_f^{\geq \bar{N}}$ is pre-normal w.r.t. $L$ to guarantee diagnosability, i.e., by computing the supremal controllable, normal and diagnosable sublanguage $L'=\supCND(L,L_f^{\geq \bar{N}},\Sigma_{uc},P)$ of $L$. Now we show that diagnosability is also preserved under the composition of suffixes of faulty languages as in Lemma \ref{Coro:||-dia}.

\begin{proposition}\label{Coro:||dia}
    Let $G = \|^l_{i=1} G_i$ be a modular plant with $L_i = L(G_i)$ over $\Sigma_i$, $i=1,\dots, l$ with $l \geq 2$, and let $L=L(G)$. The global plant $G=\parallel_{i=1}^l G_i$ is diagnosable w.r.t. fault $\Sigma_f$ and projection $P$ if for every $i=1,2,\ldots, l$, $G_i$ is diagnosable w.r.t. fault $\Sigma_{i,f}$ and projection $P_{i,o}^i$. 
\end{proposition}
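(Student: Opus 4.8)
The plan is to mirror the proof of the prognosis counterpart, Proposition~\ref{Coro:||0pro}, but to work with the suffixes $L_{i,f}^{\geq N_{i,o}}$ of the local faulty languages in place of the languages $\Psi_{i,f}^{-0}$, where $N_{i,o}$ is the number of states of $\mathrm{Obs}(G_i)$. By Proposition~\ref{prop:max} applied to each component, diagnosability of $G_i$ w.r.t. $\Sigma_{i,f}$ and $P^i_{i,o}$ is equivalent to pre-normality of $L_{i,f}^{\geq N_{i,o}}$ w.r.t. $L_i$ and $P^i_{i,o}$. Thus, under the hypothesis, every $L_{i,f}^{\geq N_{i,o}}$ is pre-normal w.r.t. $L_i$, and it suffices to produce a single bound $\bar N$ for which $L_f^{\geq \bar N}$ is pre-normal w.r.t. $L$; by Proposition~\ref{Prop: ldia} (whose direction ``pre-normality of some suffix of $L_f$ implies diagnosability'' needs no finiteness of $P(\Psi(\Sigma_f))$) this yields diagnosability of $G$.

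I would take $\bar N$ as in Lemma~\ref{Coro:||-dia}, so that $\Sigma_o^{\geq \bar N}=\parallel_{i=1}^l \Sigma_{i,o}^{\geq N_{i,o}}$ and $L_f^{\geq \bar N}$ decomposes as the union $\bigcup_{i=1}^l L_1^{\geq N_{1,o}}\parallel\cdots\parallel L_{i,f}^{\geq N_{i,o}}\parallel\cdots\parallel L_l^{\geq N_{l,o}}$. This union has exactly the shape treated by Lemma~\ref{Coro:||prenormal}, except that each non-distinguished slot carries the suffix $\widetilde L_j:=L_j^{\geq N_{j,o}}=L_j\parallel\Sigma_{j,o}^{\geq N_{j,o}}$ (still a regular language over $\Sigma_j$, with unchanged event partition, hence unchanged local projections and cross-module assumptions) rather than $L_j$ itself. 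Since the proof of Lemma~\ref{Coro:||prenormal} only uses distributivity of $\parallel$ over $\cap$ and $\cup$, the standard identities relating $P^{-1}P$ to the local projections on synchronous products, pre-normality of the distinguished $K_i$, and the trivial identity $P^{-1}P(\widetilde L_j)\cap\widetilde L_j=\widetilde L_j$, it applies verbatim with the $\widetilde L_j$ in the non-distinguished slots (prefix-closedness of those plants is not used). The extra point to verify is that $K_i:=L_{i,f}^{\geq N_{i,o}}$ is pre-normal w.r.t. the smaller plant $\widetilde L_i$: since $K_i\subseteq\widetilde L_i\subseteq L_i$ and $K_i$ is pre-normal w.r.t. $L_i$, we get $P^{-1}P(K_i)\cap\widetilde L_i=(P^{-1}P(K_i)\cap L_i)\cap\widetilde L_i=K_i\cap\widetilde L_i=K_i$, i.e., pre-normality restricts down. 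Lemma~\ref{Coro:||prenormal} then gives that $L_f^{\geq\bar N}$ is pre-normal w.r.t. $\parallel_{i=1}^l\widetilde L_i$ and $P$.

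Finally, $\parallel_{i=1}^l\widetilde L_i=(\parallel_{i=1}^l L_i)\parallel(\parallel_{i=1}^l\Sigma_{i,o}^{\geq N_{i,o}})=L\parallel\Sigma_o^{\geq\bar N}=L^{\geq\bar N}$, and $L^{\geq\bar N}$ is pre-normal w.r.t. $L$ because any two strings of $L$ with the same observation have the same number of observable events. Since $L_f^{\geq\bar N}\subseteq L^{\geq\bar N}\subseteq L$, transitivity of pre-normality (Lemma~\ref{trans_LCO}) yields that $L_f^{\geq\bar N}$ is pre-normal w.r.t. $L$, and Proposition~\ref{Prop: ldia} then gives that $G$ is diagnosable w.r.t. $\Sigma_f$ and $P$. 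The routine parts --- regularity of the $\widetilde L_j$, distributivity of $\parallel$ over the unions in Eq.~(\ref{faulty_composition}), and the elementary pre-normality manipulations --- I would merely state. The step I expect to be the main obstacle, and the one deserving the most care, is the reduction to Lemma~\ref{Coro:||prenormal}: one must justify replacing the prefix-closed plants $L_j$ by the non-prefix-closed suffixes $\widetilde L_j$, keep pre-normality of $L_{i,f}^{\geq N_{i,o}}$ valid against the shrunken plant $\widetilde L_i$, and keep the global observation bound $\bar N$ consistent with the local bounds $N_{i,o}$ throughout.
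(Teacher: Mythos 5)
Your proposal is correct and follows essentially the same route as the paper's proof: decompose $L_f^{\geq\bar N}$ via Lemma~\ref{Coro:||-dia}, obtain local pre-normality of $L_{i,f}^{\geq N_{i,o}}$ from local diagnosability, restrict it down to the sub-plants $L_i^{\geq N_{i,o}}$, and apply Lemma~\ref{Coro:||prenormal} before invoking Proposition~\ref{Prop: ldia}. You are in fact slightly more careful than the paper on the last step, explicitly lifting pre-normality from $\parallel_{i=1}^l L_i^{\geq N_{i,o}}=L^{\geq\bar N}$ up to $L$ via Lemma~\ref{trans_LCO}, a transition the paper's proof passes over without comment.
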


\begin{proof}
By  Lemma \ref{Coro:||-dia}, there exists a non-negative integer $\bar{N}\in\mathbb{N}$ with $\max\{N_{1,o},N_{2,o},\ldots, N_{l,o}\}\leq \bar{N} \leq \sum_{i=1}^l N_{i,o}$ such that language $L_f^{\geq \bar{N}}$
has a similar (union of synchronous product) form as in Lemma~\ref{Coro:||prenormal}, just with $L_i$ replaced by $L_i^{\geq N_{i,o}}$.  
From the definition of pre-normality and from $L_i^{\geq N_{i,o}}\subseteq L_i$ for all $i\in \{1,2,\ldots, l\}$, it holds that language $L_{i,f}^{\geq N_{i,o}}$ is pre-normal w.r.t. $L_i^{\geq N_{i,o}}$ and $P_{i,o}^i$ if it is pre-normal w.r.t. $L_i$ and $P_{i,o}^i$. 
Due to diagnosability of local plants and Proposition \ref{Prop: ldia}, the languages $L_{i,f}^{\geq N_{i,o}}$ are pre-normal w.r.t. $L_i$ and $P_{i,o}^i$ for $i=1,2,\ldots, l$. From Lemmas~\ref{Coro:||prenormal} and~\ref{Coro:||-dia},  $L_f^{\geq \bar{N}}$  
is pre-normal w.r.t. $L=\parallel_{i=1}^l L_i$ and $P$. According to  Proposition \ref{Prop: ldia}, $G$ is diagnosable w.r.t. $\Sigma_f$ and $P$. 
\end{proof}

To achieve modular diagnosability enforcement, we enforce diagnosability for every component by computing the supremal controllable, normal, and diagnosable sublanguage w.r.t. $L_i$, i.e., $\mathcal{S}_i=\supCND(L_i,L_{i,f}^{\geq N_{i,o}},\Sigma_{i,uc}, P_{i,o}^i)$. 

\begin{theorem}\label{Theorem:modularenforce_dia}
Let $G = \|^l_{i=1} G_i$ be a modular plant with $L=L(G)$, $L_i = L(G_i)$ over $\Sigma_i$, for $i=1,\dots, l$, where $l \geq 2$. If the languages $\supCN(\mathcal{S}_i, L_i, \Sigma_{i,uc},P_{i,o}^i)$ are nonconflicting for all $i=1,\dots, l$, then $\parallel_{i=1}^l \mathcal{S}_i$ is globally controllable, normal, and diagnosable w.r.t. $L$, $\Sigma_f$ and $P$.
\end{theorem}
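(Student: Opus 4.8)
The plan is to mirror the proof of Theorem~\ref{Theorem:modularenforce_pro} and split the claim into its three parts---global controllability, global normality, and global diagnosability---each of which is reduced to a result already established. First I would record that, by Theorem~\ref{Theorem:sup_diagnosable_l} (or, when $P(\Psi(\Sigma_f))$ is not finite, by the sufficient-condition version based on Proposition~\ref{Prop: ldia}), each $\mathcal{S}_i=\supCND(L_i,L_{i,f}^{\geq N_{i,o}},\Sigma_{i,uc},P_{i,o}^i)$ is controllable and normal w.r.t. $L_i$, $\Sigma_{i,uc}$, and $P_{i,o}^i$, and diagnosable w.r.t. $\Sigma_{i,f}$ and $P_{i,o}^i$. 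Since $\mathcal{S}_i$ is by construction already controllable and normal w.r.t. $L_i$, we have $\supCN(\mathcal{S}_i,L_i,\Sigma_{i,uc},P_{i,o}^i)=\mathcal{S}_i$, so the nonconflicting hypothesis in the statement is precisely nonconflictingness of the $\mathcal{S}_i$ themselves, which is exactly the assumption required by the modular supervisory control result.

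For global controllability and normality, I would invoke Theorem~3 in~\cite{komenda2023modular}: under the nonconflicting hypothesis, the parallel composition of the local supervisors realizing $\mathcal{S}_i$ yields the global closed-loop behavior $\parallel_{i=1}^l \mathcal{S}_i = L(S/\parallel_{i=1}^l G_i)$ for a global supervisor $S$, and this language is controllable and normal w.r.t. $L$, $\Sigma_{uc}$, and $P$. This is verbatim the same step used in Theorem~\ref{Theorem:modularenforce_pro}. For global diagnosability, I would apply Proposition~\ref{Coro:||dia} with the role of the components $G_i$ played by the automata recognizing $\mathcal{S}_i$: its hypothesis---that each component is diagnosable w.r.t. $\Sigma_{i,f}$ and $P_{i,o}^i$---holds by the first paragraph, so its conclusion gives that $\parallel_{i=1}^l \mathcal{S}_i$ is diagnosable w.r.t. $\Sigma_f=\bigcup_{i=1}^l\Sigma_{i,f}$ and $P$. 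Here one must observe that passing from $G_i$ to $\mathcal{S}_i$ does not disturb the data on which Proposition~\ref{Coro:||dia} depends: the fault events stay local and unobservable, the alphabets and their observable/controllable partitions (hence the shared-observable-event assumption) are inherited unchanged by sub-automata, and therefore the faulty language of $\parallel_{i=1}^l \mathcal{S}_i$ still decomposes as in Eq.~(\ref{faulty_composition}) and Lemma~\ref{Coro:||-dia} with $L_i$ replaced by $\mathcal{S}_i$.

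The main obstacle is not a calculation but the bookkeeping of applying the two invoked theorems consistently to the same controlled subsystem: Theorem~3 of~\cite{komenda2023modular} needs the nonconflicting condition to guarantee that $\parallel_{i=1}^l\mathcal{S}_i$ is actually achievable as a closed loop, while Proposition~\ref{Coro:||dia} needs the $\mathcal{S}_i$ to be viewed as stand-alone diagnosable plants; I would make explicit that both hypotheses are simultaneously met. A secondary point to treat cleanly is the case where $P(\Psi(\Sigma_f))$ is infinite: then Theorem~\ref{Theorem:sup_diagnosable_l} (resp. Proposition~\ref{prop:max}) gives only pre-normality of $L_{i,f}^{\geq N_{i,o}}$ w.r.t. $\mathcal{S}_i$ rather than an equivalence with diagnosability, but since the proof of Proposition~\ref{Coro:||dia} uses exactly pre-normality of the local suffixes $L_{i,f}^{\geq N_{i,o}}$ (via Lemma~\ref{Coro:||prenormal}), the argument carries over unchanged and still delivers a diagnosable global closed loop. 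Assembling the three parts then yields that $\parallel_{i=1}^l \mathcal{S}_i$ is globally controllable, normal, and diagnosable w.r.t. $L$, $\Sigma_f$, and $P$.
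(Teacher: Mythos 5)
Your proposal is correct and follows essentially the same route as the paper: the paper's proof is a one-line reduction to Proposition~\ref{Coro:||dia} (preservation of diagnosability under the synchronous product, applied to the locally diagnosable $\mathcal{S}_i$) together with Theorem~\ref{Theorem:modularenforce_pro} (which supplies global controllability and normality via Theorem~3 of~\cite{komenda2023modular} under the nonconflicting hypothesis). Your additional bookkeeping---observing that $\supCN(\mathcal{S}_i,L_i,\Sigma_{i,uc},P_{i,o}^i)=\mathcal{S}_i$ and that the infinite $P(\Psi(\Sigma_f))$ case still goes through via pre-normality of the local suffixes---is consistent with, and slightly more explicit than, what the paper records.
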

\begin{proof}
    It follows from Proposition \ref{Coro:||dia} and Theorem~\ref{Theorem:modularenforce_pro}.
\end{proof}

\section{Conclusion}\label{Sec: Conclusion}
We provide a novel characterization of $k$-prognosability (resp. diagnosability) in terms of pre-normality of a superlanguage (resp. suffix) of the faulty language. It is shown that $k$-prognosability implies $0$-prognosabilty and $0$-prognosability is equivalent to the standard prognosability. 
%According to the new characterization, prognosability is equivalent to diagnosability, under which the faulty language with the smallest number of observations is pre-normal.
Moreover, we prove the existence of the supremal $k$-prognosable/diagnosable and normal sublanguage, and develop an algorithm to compute the supremal controllable, normal, and $k$-prognosable/diagnosable sublanguage for a monolithic plant.
This algorithm for active $k$-prognosis/diagnosis can be extended to modular DESs and does not suffer from the weaknesses of online active diagnosis approaches, where the computations need to be carried out faster than the system's evolution. 

Our next goal is to provide conditions under which modular (off-line) enforcement of prognosability/diagnosability is not more restrictive than enforcement of prognosability/diagnosability for the monolithic plant.
%We would also like to extend our approach to the modular synthesis of supervisors enforcing prognosability/diagnosability that is robust to joint sensor and actuator attacks~\cite{zheng2023modeling,miao2026hierarchical}.

%\balance

\bibliographystyle{IEEEtran}
\bibliography{references}

\end{document}